%% file: manuscript_6_14_2026.tex
\documentclass[a4paper,11pt, oneside]{article} 
\usepackage{palatino}
\usepackage{amssymb}
\usepackage{amsmath}
\usepackage{amsfonts}
\usepackage{amsthm}
\usepackage{changepage}
\newtheorem{proposition}{Proposition}[section]
\newtheorem{lemma}{Lemma}
\usepackage{anyfontsize}
\usepackage{lmodern}  

\usepackage[margin=0.9in]{geometry}
\usepackage[onehalfspacing]{setspace}
\usepackage{fancyhdr}
\usepackage{appendix} 
\usepackage{graphicx}
\usepackage{float}
\usepackage{fp}
\usepackage{rotating}
\usepackage[para,online,flushleft]{threeparttable}
\usepackage{tabulary}
\usepackage{tabularx}
\usepackage{natbib}
\setcitestyle{authoryear}
\usepackage[utf8]{inputenc}
\usepackage[english]{babel}
\usepackage{latexsym}
\usepackage{amsxtra}
\usepackage{dcolumn}
\usepackage[mathcal]{eucal}
\usepackage{enumerate}
\usepackage[labelfont=bf]{caption}
\usepackage{subcaption}
\usepackage{lscape}
\usepackage{multirow}
\usepackage{paralist}
\usepackage{indentfirst}
\usepackage[affil-it]{authblk}
\usepackage[dvipsnames,svgnames]{xcolor}
\usepackage{booktabs}
\usepackage{threeparttablex}
\setTableNoteFont{\fontsize{8}{9.6}}
\usepackage{titlesec}
\usepackage{bbm}
\usepackage{afterpage}
\usepackage{pdflscape}
\usepackage{url}
\usepackage{csquotes}
\usepackage{mathtools}
\usepackage{dsfont}
\usepackage{array}
\usepackage{accents}

\newcolumntype{x}[1]{>{\centering\let\newline\\\arraybackslash\hspace{0pt}}p{#1}}
\usepackage {tikz}
\usepackage{textcomp}
\usetikzlibrary{positioning}
\usetikzlibrary{shapes,arrows}
\usepackage{siunitx}
\newcolumntype{d}[1]{D{.}{.}{#1}}
\definecolor{blue}{rgb}{0,0.08,0.5}
\definecolor{red}{rgb}{.6,0,0}
\definecolor{green}{rgb}{0,0.376,0}

\usepackage{xparse}
\usepackage{adjustbox}
\usepackage{comment}
\usepackage{nicefrac}
\usepackage[most]{tcolorbox}
\usepackage{placeins}

\makeatletter
\renewcommand{\and}{\\[-0.2em]}
\makeatother

\ExplSyntaxOn
\NewDocumentCommand{\longdash}{ O{2} }
 {
  --\prg_replicate:nn { #1 - 1 } { \negthinspace -- }
 }
\ExplSyntaxOff 

\usepackage{import}
 \newcommand{\sym}[1]{{#1}} 
\newcommand{\Expect}{{\rm I\kern-.3em E}} 
\newsavebox\mytabularbox
\newcount\figwidthc
\newcount\textwidthc
\newcolumntype{Y}{>{\centering\arraybackslash}X}
\newcommand\totextwidth[1]{%
  \sbox{\mytabularbox}{#1}%
  \figwidthc=\wd\mytabularbox%
  \textwidthc=\textwidth%
  \FPdiv\scaleratio{\the\textwidthc}{\the\figwidthc}%
  \FPmin\scaleratio{\scaleratio}{1}%
  \scalebox{\scaleratio}{\usebox{\mytabularbox}}%
}
\usepackage[bookmarks, pdftitle={}, pdfauthor={}, colorlinks=true, linkcolor=red, citecolor=blue, urlcolor=blue]{hyperref} 
\begin{document}
\raggedbottom
\enlargethispage{\baselineskip}
\title{\textbf{Hiring Discrimination and the Task Content of Jobs: \\ Evidence from a Large-Scale R\'{e}sum\'{e} Audit} \thanks{We are thankful for helpful comments from Duha Altindag, Jaime Arellano-Bover, Nabamita Dutta, Marissa Eckrote-Nordland, David Jaeger, Joanna Lahey, James Murray, Isaac McFarlin, Sara Seals, Mike Stern, and Adam Stivers. We also thank seminar participants at Auburn University, the Southern Economic Association, and Yale University. Any errors are our own.} }

\author{
Sharon Braun\thanks{Braun, Auburn University, \href{mailto:szb0288@auburn.edu}{szb0288@auburn.edu}}, Jonathan Bushnell, Zachary Cowell, David Dowling
\and
Samuel Goldstein, Andrew Johnson, George Miller, John M. Nunley\thanks{Nunley, University of Wisconsin--La Crosse, \href{mailto:jnunley@uwlax.edu}{jnunley@uwlax.edu}}
\and
R. Alan Seals\thanks{Seals, Auburn University, \href{mailto:alan.seals@auburn.edu}{alan.seals@auburn.edu} [Corresponding Author]}, and Mingzhou Wang\thanks{Wang, University of Georgia, \href{mailto:mingzhou.wang@uga.edu}{mingzhou.wang@uga.edu}}
}

\date{\today}

\maketitle

\thispagestyle{empty} 

\begin{abstract}

\begin{singlespace}
    
\noindent We conducted a r\'{e}sum\'{e} audit of 36,880 applications to 9,220 job advertisements for new college graduates across the United States and linked each advertisement to occupation-level task measures. In management occupations, callbacks for Black men, Black women, White women, and Hispanic men are 18 to 28 percent lower than those of otherwise identical White men. Callback gaps are largest in jobs that demand analytical and interpersonal skills and involve little routine work. We develop a theoretical framework in which discrimination narrows when objective criteria constrain the hiring decision and widens when subjective judgment carries greater weight. In the framework's central test, randomly assigned credentials that raise callbacks narrow non-White-male gaps only where applicants are evaluated against structured criteria. Consistent with this experimental result, gaps are smaller in jobs with more codifiable task content and in occupations where advertisements more frequently mention tests, certifications, or background checks. Across these measures, callback gaps are narrower where screening can rely more heavily on verifiable criteria.

\vspace{0.75cm}
\noindent\textbf{JEL Categories:} J23, J71, C93 

\vspace{0.75cm}

\noindent\textbf{Key words:} Occupational Tasks, Racial Discrimination, Gender Discrimination,\\ R\'{e}sum\'{e} Audit, College Graduates

\vspace{0.75cm}

\end{singlespace}

\vfill 
\end{abstract}

\pagebreak
\setcounter{page}{1} 

\section{Introduction}
\label{section:intro}
  
  \noindent Racial and gender differences in the tasks workers perform are persistent features of the U.S. labor market. Black--White differences in routine, manual, and contact tasks have narrowed since the 1960s, but the gap in analytically demanding tasks persists \citep{hurst_task-based_2024}. Women's entry into high-status occupations such as medicine, law, and business was rapid through the 1980s but slowed during the 1990s \citep{goldin_career_2021,blau_gender_2024}. Because the pay attached to analytical tasks has risen, these differences account for a substantial share of the wage gap for both women \citep{blau_gender_2017} and Black workers \citep{gray_tasks_2025}. Whether hiring discrimination contributes to unequal access to these task bundles remains an open question.

  The scope for discrimination depends on how employers evaluate applicants. Some jobs can be screened against verifiable criteria, such as a pre-employment test, a certification or license, or a background check. Other jobs require employers to infer fit largely from the r\'{e}sum\'{e} itself. Subjective judgment of analytical and interpersonal fit creates greater scope for group-based beliefs to influence screening than do verifiable criteria. Routine-intensive jobs are more amenable to the first kind of screening and analytically and interpersonally demanding jobs rely more heavily on the second. The relevant margin is the job's evaluative discretion, the extent to which its screening must rest on subjective judgment rather than on verifiable criteria. We formalize this margin in a model of employer screening (Section \ref{section:theory}). The model predicts that callback gaps shrink where verifiable criteria constrain the evaluation. Discrimination at the screening stage is then not solely a fixed attribute of the employer but also varies with how the job is screened.

  We test this prediction with a randomized r\'{e}sum\'{e} audit of entry-level hiring across the United States. In 2016 and 2017, we sent 36,880 applications to 9,220 job advertisements posted by 4,969 firms. Each application represented a new college graduate, limiting variation in prior work histories. Applicant names and qualifications were randomly assigned within postings, so the design identifies callback differences within a job. We classify each advertisement into one of 175 detailed occupations using its title and text and link those occupations to task measures from the Occupational Information Network (O*NET) and the American Community Survey. Because task content is a feature of the vacancy rather than a randomized treatment, comparisons across task environments identify where callback discrimination concentrates but do not establish why it does so.

  We begin with broad occupation groups, which bundle tasks in different proportions. Management occupations make heavy analytical and interpersonal demands, and callback gaps are larger there than in the other groups. A Black man faces a callback gap of 5.1 percentage points in management compared with 1.7 points in office and administrative support, and he must send roughly seven applications for every five a White man sends to receive the same number of callbacks. Across the four non-reference groups with statistically significant management gaps, callback rates are 18 to 28 percent below the White-male rate. Across all jobs the pooled non-White-male gap implies that these applicants must submit about 15 percent more applications than White men to obtain the same number of callbacks, and the burden is concentrated in the analytically and interpersonally demanding jobs. Broad occupation categories do not establish whether particular tasks or other features of these jobs generate the pattern.

  This pattern extends beyond occupation labels. When we group jobs by task profile with K-means clustering, discrimination concentrates in jobs that combine high analytical and interpersonal intensity with low routine content, regardless of their major occupation. We summarize the extent to which a job's screening must rely on subjective judgment in a discretion index, and callback gaps are smaller where that index is low. Task content is not randomly assigned, so these comparisons remain descriptive. We therefore place greater weight on experimentally assigned variation within advertisements.

  The randomly assigned credentials provide the framework's central test, because they vary verifiable information within the advertisement while holding the job fixed. Two credentials, an internship emphasizing sales and client interaction and a combination of programming and data-analysis skills, substantially raise callback rates. In low-discretion jobs they narrow non-White-male callback gaps by 4.1 and 3.7 percentage points, both distinguishable from zero. These effects exceed the roughly two-percentage-point gap in low-discretion jobs, so they imply full closure, though the estimates are too imprecise to rule out a smaller effect. In high-discretion jobs the same credentials provide no differential benefit. Because the credentials are randomized within the advertisement, their differential effect on the callback gap is causal. However, the contrast between low- and high-discretion jobs still relies on cross-occupation variation. The result holds when r\'{e}sum\'{e} characteristics are interacted with occupation groups and when firm fixed effects replace advertisement fixed effects, and tests find no within-ad interference. Verifiable credentials thus improve non-White-male applicants' callback access to routine-intensive positions but do little to open the analytically and interpersonally demanding roles where discrimination is concentrated and wage returns are higher.

  The potential consequences of this exclusion depend on what an entry job carries forward. Early access to task bundles may shape longer-run skill and earnings trajectories \citep{arellano-bover_effect_2022}, and because our applicants are new college graduates, we observe discrimination at the point of labor-market entry. Initial placement has persistent effects on later earnings and career paths \citep{von_wachter_persistent_2020,kahn_long-term_2010,oreopoulos_short-_2012}, and the job-specific experience that workers build early accounts for a large share of later wage gaps \citep{stinebrickner_job_2026,deming_why_2023}. Discrimination in entry jobs that build this experience can compound over a career \citep{bohren_systemic_2025}. Heterogeneity in callback discrimination across job types may therefore matter for inequality well beyond the callback stage.

  The mid-career workforce is already sorted along the same gradient. Among mid-career college-educated workers in the American Community Survey, Black and Hispanic workers are underrepresented in the high-discretion occupations where callback gaps concentrate. Black men are 9.5 percentage points less likely than White men to hold these jobs, and 7.7 points less likely among workers who studied the same fields as our applicants. We read this sorting as descriptive rather than as a causal consequence of callback discrimination.

  \noindent\textit{Contribution to the Literature.} First, we recover a dimension of hiring discrimination that within-firm audit designs hold fixed. By comparing applicants within the same firm, firm-fixed-effects designs identify which firms discriminate through a contrast that pools across the different jobs a firm posts \citep{kline_systemic_2022}. A parallel literature documents demographic differences in the tasks workers perform and the wages attached to them \citep{gray_tasks_2025,hurst_task-based_2024}, and \citet{hurst_task-based_2024} decompose those differences structurally. Observational sorting, however, cannot reveal how employers treat identical applicants at the callback stage. Randomizing applicant characteristics within advertisements and linking each advertisement to occupation-level task measures allows us to estimate how callback discrimination varies with the task content of the job. The occupation-group comparison, the task-profile clusters, and the discretion index are three views of the same cross-occupation pattern in the O*NET measures, not three independent results.

  Second, the randomized credentials identify, within the advertisement, the causal effect of verifiable information on the callback gap, though whether that effect differs between high- and low-discretion jobs is an occupation-level comparison. Third, we build a measure of each occupation's screening environment from the advertisement text rather than from O*NET. Because it is only weakly correlated with the discretion index, it provides a separate check on the pattern without relying on the same task taxonomy, and callback gaps shrink where advertisements announce standardized screening instruments. Neither the cross-occupation pattern nor the screening-text evidence can independently separate the mechanism from job complexity. The evidentiary case instead rests primarily on the randomized credential experiment, supported by two directionally consistent cross-occupation patterns constructed from different data. Taken together, these results are difficult to reconcile with a simple account in which job complexity alone mechanically raises discrimination.


\section{Theoretical Framework}
\label{section:theory}

\subsection{Related Literature and Motivation}

\subsubsection{Task-Based Production and Discrimination}

\noindent Task-based production models treat occupations as bundles of tasks that workers or capital can perform \citep{autor_skill_2003,autor_growth_2013,acemoglu_skills_2011,autor_putting_2013}. This approach has organized research on technological change and wage polarization. \citet{hurst_task-based_2024} apply it to Black--White inequality in occupational sorting. Their Contact task measures interaction with coworkers and customers, while their Abstract task captures analytical work. Among Black and White men from 1960 through 2018, the occupational Contact gap nearly closed while the Abstract gap persisted. The price of Abstract tasks also rose after 1980. Their structural estimates attribute the Contact barrier mainly to nonpecuniary forces and the Abstract barrier largely to skill differences and wage discrimination. We use their Contact measure because it captures a dimension of job requirements that can matter for screening. Contact is not itself a measure of structured evaluation, and the \citet{hurst_task-based_2024} framework does not observe employer callback decisions.

Long-run transition evidence points to the same economic stakes. \citet{gray_tasks_2025} document that Black men were less likely to move into higher-return non-routine analytical work and more likely to remain in or move into physically intensive work.\footnote{\citet{dicandia_technological_2021} links persistent racial wage gaps in the United States to routine-biased technological change.} Analytical and interpersonal skills have become increasingly valuable in the labor market \citep{deming_growing_2017,deming_growing_2021,deming_why_2023}. These outcomes reflect education, worker choices, networks, and employer behavior operating over a career. They do not reveal how an employer evaluates otherwise identical applicants at the beginning of a vacancy. That distinction directs attention to audit evidence.

\subsubsection{Evidence from Audit Studies}

\noindent R\'{e}sum\'{e} audits randomize applicant attributes and estimate employer responses while holding submitted qualifications fixed \citep{bertrand_are_2004,lippens_state_2023}. Their results vary across job contexts.\footnote{Meta-analyses document persistent racial and ethnic callback gaps in the United States \citep{quillian_meta-analysis_2017} and substantial variation across countries and periods \citep{quillian_countries_2019,quillian_trends_2023}.} Gender callback gaps also change with job context. \citet{booth_employers_2010} find higher callback rates for women in female-dominated Australian occupations. \citet{ahmed_gender_2021} find that women's callback advantage in Sweden is concentrated in female-dominated occupations. \citet{kline_systemic_2022} document employer-level heterogeneity in gender contact gaps in the United States. Racial callback gaps can widen with credentials that signal productivity or match quality \citep{nunley_racial_2015}.

Several mechanisms can generate this heterogeneity before an employer makes a callback decision. \citet{bartos_attention_2016} show that group identity changes costly information acquisition, with lower minority attention in selective labor markets and greater attention in their rental-housing setting. \citet{kessler_incentivized_2019} study how hiring interest and beliefs about offer acceptance jointly enter r\'{e}sum\'{e} evaluation. \citet{kline_reasonable_2021} develop a method for locating discrimination at individual firms, and \citet{kline_systemic_2022} relate employer-level gaps to task measures. The recurring empirical object is the interaction between applicant group and the evaluative demands of a job.

\citet{bohren_systemic_2025} formalize why the placement of discrimination across jobs can matter beyond the initial response. In their iterated audit, an early callback gap for inexperienced Black applicants reduces experience acquisition and contributes to later hiring gaps. Their framework distinguishes the initial group-specific action from disparities generated through experience and other signals. Our study estimates the initial callback margin rather than the later sequence. Its consequences depend on the jobs in which it appears and on how workers sort into those jobs.

\subsubsection{Occupational Self-Selection}

\noindent Occupational allocation reflects employer demand and workers' job-search and career choices \citep{pager_race_2015,le_barbanchon_gender_2021}. Same-demographic role models can affect education outcomes and career choices \citep{kofoed_effect_2019,porter_gender_2020}, and beliefs about expected returns can shape college-major decisions \citep{wiswall_determinants_2015,zafar_college_2013}.

One account of the gender wage gap is that women often avoid \enquote{greedy jobs}, whose earnings premium reflects long or unpredictable hours \citep{goldin_most_2016,goldin_career_2021}.\footnote{\citet{bertrand_gender_2020} identifies two related sources of gender inequality. Educational choices can channel women toward lower-paying fields despite greater educational attainment, and earnings often fall after motherhood. The motherhood penalty operates in part through reduced hours and exits from full-time work, a pattern consistent with the appeal of jobs with more predictable schedules.} Using the Berea Panel Study, \citet{stinebrickner_job_2026} find that gender differences in task-specific work experience, especially in high-skilled information tasks, account for roughly 35 percent of the gender wage gap within ten years of college graduation. In the same study, precollege aptitude predicts the tasks graduates take up, a pattern consistent with comparative advantage.

Observed sorting therefore cannot identify demand-side discrimination. Every advertisement in our audit receives r\'{e}sum\'{e}s whose applicant characteristics are randomly assigned. Self-selection into occupations does not confound the callback comparison within a posting. We then examine whether the randomized callback gap varies with task-based proxies for evaluative discretion. The task measures remain job characteristics, so the model that follows makes explicit the screening mechanism they are intended to capture.

\subsection{A Model of Information-Based Discrimination}

\subsubsection{Overview}

\noindent The model organizes the cross-occupation variation described above into a framework in which the scope for discrimination depends on how a job is screened. It builds on two premises.

\begin{enumerate}
    \item Jobs differ in the extent to which hiring decisions rely on \emph{subjective} assessments (interviews, unstructured screening) versus \emph{objective} information (standardized tests, verifiable credentials).
    
    \item Subjective evaluation is noisier for minority than for majority applicants. We treat this differential precision as a reduced-form consequence of asymmetric attention, cultural distance, and communication frictions. \citet{bartos_attention_2016} model how decision-makers allocate effort during applicant evaluation, and our assumption captures one plausible downstream consequence of such attention discrimination in selective labor markets. By contrast, objective assessments are equally precise across groups.\footnote{Consistent with this premise, \citet{autor_does_2008} find that a national retail chain's standardized job test increased the tenure of its hires without reducing the minority share of hires, and they conclude that the test introduced no additional negative information about minority applicants.}
\end{enumerate}

\noindent These assumptions yield a compact summary of how discrimination varies with task content. Let \(D_{gj}\) denote the callback gap between majority and group-\(g\) minority applicants for job \(j\). To a first-order approximation,

\begin{equation}
\label{eq:main_result}
    D_{gj} \;\approx\; K_{gj}\, E_j^\ast, \qquad K_{gj} > 0,
\end{equation}

\noindent where \(E_j^\ast \in [0, 1)\) is the \emph{discretion index}, defined as the share of the hiring evaluation driven by subjective assessment, and \(K_{gj}\) is a positive scale factor. When objective criteria are strong, they dominate and \(E_j^\ast \approx 0\). As the job's reliance on subjective evaluation grows, group differences in evaluation noise have greater scope to affect outcomes, and discrimination widens. Unlike models based solely on tastes or beliefs, this framework predicts that discrimination varies systematically with the evaluability of job tasks. The rest of this section develops the model that generates this relationship and connects it to observable task measures.

\subsubsection{Environment}

\noindent Workers from two demographic groups \(g \in \{M, m\}\) (majority, minority) apply to jobs. Productivity \(\theta\) is normally distributed and identical across groups, so any callback differences arise from evaluation rather than from underlying productivity.

For each applicant, firms observe two conditionally independent screening signals prior to the callback decision.\footnote{In our correspondence audit, we observe only the callback decision. Interviews occur after callbacks and are therefore outside the information set relevant for our observed outcome.} The subjective signal reflects unstructured assessment of application materials (for example, perceived \enquote{fit} or judgment-based r\'{e}sum\'{e} review) and takes the form
\[
s^{s}_{ij} = \theta_i + \varepsilon^{s}_{ij}.
\]
The objective signal reflects verifiable credentials and rule-based screening and is
\[
s^{o}_{ij} = \theta_i + \varepsilon^{o}_{ij}.
\]

\noindent The central asymmetry is that subjective evaluation is less precise for minority than for majority applicants. Let \(B_j > 0\) denote subjective evaluation noise for job \(j\). Majority applicants face noise variance \(B_j\), while group-\(g\) minority applicants face noise variance \(B_j / \pi_g\), where \(\pi_g \in (0,1)\). Because \(\pi_g < 1\), the minority noise variance exceeds the majority noise variance for every job. The subscript allows the precision loss to differ across minority groups. Asymmetric attention \citep{bartos_attention_2016}, cultural distance \citep{lang_racial_2012}, and unfamiliarity with minority signal environments \citep{aigner_statistical_1977} can all reduce the precision of unstructured screening. The parameter \(\pi_g\) summarizes these mechanisms in reduced form.

Let \(U_j > 0\) denote objective evaluation noise for job \(j\), the same for both groups. We define objective evaluation precision as \(P_j = 1/U_j\). Task content affects signal quality in reduced form.

\begin{itemize}
    \item Analytical and interpersonal tasks increase subjective noise. Skills such as \enquote{creative problem-solving} or \enquote{motivating others} are difficult to assess in brief interactions, and verifiable credentials capture them only coarsely, so r\'{e}sum\'{e} screening relies on subjective judgment. \(B_j\) is therefore increasing in analytical (\(a_j\)) and interpersonal (\(p_j\)) task intensity.\footnote{Consistent with this premise, advertisements for analytical- and interpersonal-intensive jobs, and for management, express greater demand for judgment-based attributes such as leadership, decision-making, and problem-solving, net of ad length. See Appendix Table \ref{table:judgment-demand}.}

    \item Routine tasks increase objective precision. Standardized procedures and clear performance benchmarks make objective assessments easier to design. \(P_j\) is therefore increasing in routine task intensity (\(r_j\)).\footnote{Mentions of objective screening instruments in the advertisement text (pre-employment tests and assessments, certifications and licenses, background checks, and GPA requirements) rise with routine cognitive intensity, net of ad length. Management advertisements mention these instruments less often than the other three major occupation groups. See Appendix Table \ref{table:screening-instruments}.}
\end{itemize}

\noindent Management roles, for example, tend to have high subjective noise because they demand analytical and interpersonal skills that are hard to evaluate from a r\'{e}sum\'{e}. Routine office and administrative jobs tend to have high objective precision because performance criteria are well defined.\footnote{Routine jobs may also be jobs in which new hires are trained to standard quickly, which could lower the cost of a screening error independently of evaluation precision. The horse race in Section \ref{section:mechanism} conditions on the O*NET job zone, which encodes the preparation a job requires, including on-the-job training. The discretion contrast is slightly larger with the quality controls included (Appendix Table \ref{table:complexity-horserace}).} Appendix \ref{appendix:functional_forms} parameterizes these two task--noise relationships linearly. The results rely only on the sign conditions above.

The central object in the model is the weight the employer places on subjective assessment. The discretion index \(E_j^\ast \in [0,1)\) is that weight when the two signals are combined. It increases in analytical (\(a_j\)) and interpersonal (\(p_j\)) task intensity and decreases in routine (\(r_j\)) intensity, so it rises with a job's reliance on judgment that cannot be verified from a r\'{e}sum\'{e}. Its empirical counterpart is

\begin{equation}
\label{eq:Estar_def}
    E_j^\ast = \frac{\hat{B}_j}{\hat{B}_j + \hat{P}_j}, \qquad E_j^\ast \in [0,1),
\end{equation}

\noindent where \(\hat{B}_j\) and \(\hat{P}_j\) are rescaled, strictly positive composites (min-max plus a constant) of the analytical, interpersonal, and routine task intensities (Appendix \ref{appendix:functional_forms}). Both composites are dimensionless, so the index is a well-defined fraction of the job's evaluation that rests on subjective judgment. As before, strong objective criteria push \(E_j^\ast\) toward zero. When subjective evaluation difficulty is high, the subjective channel dominates and \(E_j^\ast\) approaches one. The model's predictions rest on the task comparative statics, which hold for any positive weights on the task intensities, and the discretion index orders jobs by their reliance on subjective assessment.

\subsubsection{From Signals to Discrimination}

\noindent How do firms combine subjective and objective information? A fully Bayesian employer would weight each signal inversely with its noise and place less weight on the subjective channel when it is unreliable. In practice, employers do not fully adjust for signal quality. Managers in low-skill service firms who overrode a newly adopted job test's recommendations hired workers with shorter tenures \citep{hoffman_discretion_2018}. Employers also read the verifiable credentials applicants present as signals of underlying traits. \citet{stans_impact_2026} find that human-resource managers value a finished graduate degree as evidence of stronger cognitive and non-cognitive traits and read an unfinished one as a weaker signal than a bachelor's degree. When hiring for a management position, employers emphasize subjective assessments of leadership and interpersonal skills because the job requires those skills, not because the assessments are precise. These skills could in principle be assessed with standardized instruments, but at the screening stage employers evaluate r\'{e}sum\'{e}s through subjective judgment rather than administering tests.\footnote{The advertisement text bears this out. Mentions of objective screening instruments do not rise with analytical task intensity, and the point estimate is negative, while they rise with routine cognitive intensity (Appendix Table \ref{table:screening-instruments}).} We therefore model employers as placing weight \(E_j^\ast\) on the subjective signal and weight \(1 - E_j^\ast\) on the objective signal, so that the job's task demands rather than signal precision govern the combination. This behavioral assumption departs from optimal updating. Under optimal updating, an employer would shrink the weight on the noisier subjective signal for minority applicants, and that adjustment would weaken or reverse the prediction that discrimination rises with \(E_j^\ast\). The evidence cited above on incomplete adjustment to signal quality motivates the departure.\footnote{The task-based weighting assumption can be relaxed within the behavioral class. Any signal-combination rule in which the weight on subjective assessment remains positive and increases with the job's reliance on subjective evaluation produces similar predictions. Fully Bayesian precision weighting lies outside this class because it ties the weight to signal quality rather than to the job's task demands. See Appendix \ref{appendix:discrim}.}

Because subjective evaluation is noisier for minority applicants, the composite assessment is also noisier. Let \(V_{gj}\) and \(V_{Mj}\) denote the composite evaluation variance for group-\(g\) minority and majority applicants, respectively. The minority--majority variance gap is
\begin{equation}
\label{eq:var_gap}
    V_{gj} - V_{Mj} = {E_j^\ast}^2 \, B_j \, \Delta_g,
\end{equation}

\noindent where \(\Delta_g = 1/\pi_g - 1 > 0\) is the group-specific noise penalty (Appendix \ref{appendix:discrim}). A smaller \(\pi_g\) (less precise evaluation of group \(g\)) implies a larger \(\Delta_g\). The squared discretion index \({E_j^\ast}^2\) reflects the employer's reliance on subjective evaluation. The subjective noise level \(B_j\) scales the absolute size of the noise differential.

Noisier composite assessments produce lower posterior expected productivity and therefore fewer callbacks, since employers call back applicants who clear a selective bar (Appendix \ref{appendix:callback}). A first-order approximation of the callback function yields equation \eqref{eq:main_result}, with the scale factor \(K_{gj}\) collecting \(E_j^\ast\), \(B_j\), \(\Delta_g\), and a Taylor coefficient that depends on the baseline majority evaluation variance. The approximation isolates \(E_j^\ast\) as the object governing the testable predictions (Appendix \ref{appendix:sufficient_stats}). Because \(K_{gj}\) itself varies with \(E_j^\ast\) and \(B_j\), equation \eqref{eq:main_result} is a monotonicity statement rather than a decomposition with a fixed scale factor, matching the caveats in Appendix \ref{appendix:sufficient_stats}. The variance gap is unambiguously increasing in \(E_j^\ast\). Callback gaps inherit this ordering when jobs are compared at similar levels of majority evaluation variance, which holds approximately constant the rate at which a variance gap translates into a callback gap.

The scale factor \(K_{gj}\) varies across demographic groups through \(\Delta_g\). Groups with less precise subjective evaluation (small \(\pi_g\), large \(\Delta_g\)) face larger callback gaps at every level of \(E_j^\ast\). Groups for which \(\pi_g \approx 1\) face little or no discrimination regardless of task content. The group-specific parameters \(\pi_g\) shift the level of each group's gap but not its cross-job gradient. The model's testable content is the sign of the task comparative statics, which holds for every \(\pi_g < 1\), so the framework is disciplined by its cross-job predictions rather than fit to each group's average gap.

The empirical decomposition in Section \ref{section:mechanism} takes these cross-job predictions to the data, and it does not estimate the two task channels with equal sharpness. The asymmetry is anticipated. Analytical and interpersonal intensity are fuzzier constructs than routine, codifiable content, so the subjective proxy \(\hat{B}_j\) carries more measurement error than \(\hat{P}_j\), which attenuates the coefficient on \(\hat{B}_j\) toward zero. The design therefore identifies the objective-evaluability channel more sharply than the subjective one. We read the compression of gaps where screening can rest on verifiable criteria as the model's robustly testable content.

\subsubsection{Testable Predictions}

\noindent Equation \eqref{eq:main_result} generates a direct prediction about how task bundles affect discrimination. The proposition below establishes that the minority--majority \emph{variance gap} varies systematically with task content. A second prediction follows when we extend the model to allow contact to modify subjective evaluation noise.

\begin{adjustwidth}{0.75cm}{0cm}
\begin{proposition}[Task Bundles and Discrimination]
\label{prop:task_main}
High-\(B_j\), low-\(P_j\) occupations (high analytical and interpersonal content, e.g., management) exhibit larger minority--majority evaluation variance gaps than low-\(B_j\), high-\(P_j\) occupations (routine-intensive, e.g., office and administrative support). Callback gaps inherit this ordering when jobs are compared at similar levels of majority evaluation variance, which holds approximately constant the rate at which a variance gap translates into a callback gap.
\end{proposition}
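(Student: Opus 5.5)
The plan is to work directly from the variance-gap expression \eqref{eq:var_gap}, $V_{gj}-V_{Mj} = {E_j^\ast}^2 B_j \Delta_g$, and show that each factor is weakly larger in a high-$B_j$, low-$P_j$ job than in a low-$B_j$, high-$P_j$ job.

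First I would derive \eqref{eq:var_gap} from the environment, so the argument is self-contained. Under the behavioral weighting rule the composite assessment of an applicant from group $h\in\{M,g\}$ is
\[
c_{ij} = E_j^\ast s^s_{ij} + (1-E_j^\ast) s^o_{ij} = \theta_i + E_j^\ast \varepsilon^s_{ij} + (1-E_j^\ast)\varepsilon^o_{ij}.
\]
By conditional independence, its noise variance is ${E_j^\ast}^2 \operatorname{Var}(\varepsilon^s_{ij}) + (1-E_j^\ast)^2 U_j$. The objective term is identical across groups, so it cancels in the difference. The subjective variances are $B_j/\pi_g$ and $B_j$, and the gap is ${E_j^\ast}^2 B_j(1/\pi_g - 1)$. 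Because $E_j^\ast$ is a job-level weight that does not depend on the applicant's group, this cancellation is exact.

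Second I would establish the comparative statics. Write $E^\ast = B/(B+P)$, with $B$ and $P$ strictly positive by the rescaling in \eqref{eq:Estar_def}. Then
\[
\partial E^\ast/\partial B = P/(B+P)^2 > 0, \qquad \partial E^\ast/\partial P = -B/(B+P)^2 < 0.
\]
The gap $G(B,P) = \Delta_g B^3/(B+P)^2$ satisfies $\partial G/\partial B > 0$ and $\partial G/\partial P < 0$, with $\Delta_g>0$ fixed across jobs. Hence for any two jobs with $B_1 \ge B_2$ and $P_1 \le P_2$ (at least one strict), moving along the path $(B_2,P_2)\to(B_1,P_2)\to(B_1,P_1)$ strictly raises $G$. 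This gives the first sentence of the proposition. The sign conditions linking $B_j$ to $(a_j,p_j)$ and $P_j$ to $r_j$ then map this ordering onto management versus office and administrative support. This step needs only those sign restrictions and not the linear forms in Appendix \ref{appendix:functional_forms}.

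Third I would handle the callback statement, which I expect to be the main obstacle. The callback rule from Appendix \ref{appendix:callback} calls back an applicant if the posterior expected productivity exceeds a selective bar $\bar\theta$. I would write the callback probability as a function $\Phi_j(V)$ of the composite noise variance and argue that it is decreasing in $V$ in the selective region. A first-order expansion around $V_{Mj}$ then gives
\[
D_{gj} \approx -\Phi_j'(V_{Mj})\,(V_{gj}-V_{Mj}).
\]
The difficulty is that the slope $-\Phi_j'(V_{Mj})$ depends on $V_{Mj}$, which itself varies with $E_j^\ast$, $B_j$ and $U_j$. Without further restriction, the callback gaps need not inherit the variance ordering.

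I would resolve this by conditioning as the statement does. If two jobs share (approximately) the same $V_{Mj}$ and the same bar, they share the same Taylor coefficient. The callback gaps are then proportional to the variance gaps with a common positive constant, so they inherit the ordering up to second-order remainder terms. I would bound those remainder terms by noting that the variance gap is small relative to $V_{Mj}$ when $\Delta_g$ is moderate. I would not attempt an unconditional callback result, matching the monotonicity caveat in Appendix \ref{appendix:sufficient_stats}.
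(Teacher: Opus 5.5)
Your proposal is correct and follows essentially the same route as the paper's proof in Appendix \ref{appendix:proof_task}. Both arguments write the variance gap as $\Delta_g B_j^3/(B_j+P_j)^2$, sign its derivatives in $B_j$ and $P_j$ (the paper does this through $a_j$, $p_j$, $r_j$ under the linear forms), and pass to callbacks through the first-order expansion with an approximately common slope $-c'(V_{Mj})$.

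One small simplification: when two jobs have exactly equal $V_{Mj}$ and the same bar $\bar\theta$, the callback ordering follows exactly from the strict monotonicity of $c(V)=1-\Phi(\bar\theta\sqrt{1+V})$, so you do not need a remainder bound.
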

\end{adjustwidth}
\vspace{0.5cm}

\noindent Appendix \ref{appendix:proof_task} provides the formal derivation.

Proposition \ref{prop:task_main} concerns variation across occupation groups with different task profiles. We now ask how contact affects discrimination within the model. Our contact measure combines interaction with customers and with coworkers (Section \ref{section:tasks}). In the variance gap of equation \eqref{eq:var_gap}, contact can operate through subjective evaluation noise \(B_j\). Roles built around interaction introduce additional dimensions of unstructured assessment (appearance, demeanor, interpersonal manner) that are difficult to evaluate objectively. The question is whether contact raises \(B_j\) uniformly or only where \(B_j\) is already elevated. If contact multiplies the subjective channel rather than shifting it additively, its effect is concentrated in high-\(B_j\) jobs (those with high analytical and interpersonal content), while the objective channel \(P_j\) is unaffected throughout.

A large literature documents higher discrimination in customer-facing positions, typically attributed to customer prejudice or employer beliefs about customer preferences. The model offers a distinct prediction. Under the information-friction mechanism, contact amplifies discrimination only where the subjective channel \(B_j\) is active, and not where evaluation rests on objective precision \(P_j\).

\begin{adjustwidth}{0.75cm}{0cm}
\begin{proposition}[Contact--Non-routine Complementarity]
\label{prop:contact_main}
Suppose contact multiplies subjective evaluation noise \(B_j\) rather than acting as a generic, additive noise shifter. Then contact raises the minority--majority variance gap only in jobs where \(B_j\) is already elevated (high analytical and interpersonal content), and has no effect where evaluation rests on objective precision \(P_j\) (purely routine jobs).
\end{proposition}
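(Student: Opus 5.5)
The plan is to work directly with the variance-gap expression in equation \eqref{eq:var_gap}, $V_{gj}-V_{Mj} = {E_j^\ast}^2 B_j \Delta_g$, and to model contact explicitly as a modification of subjective noise. Let $c_j \ge 0$ denote contact intensity. The proposition's hypothesis is that contact enters multiplicatively, for example $B_j(c_j) = \tilde B_j\, h(c_j)$ with $h(0)=1$ and $h'>0$. Here $\tilde B_j$ is the baseline subjective noise generated by analytical and interpersonal content; it is increasing in $a_j$ and $p_j$, and it vanishes as those intensities go to zero, which is the purely routine limit. The contrast case is an additive shifter, $B_j(c_j) = \tilde B_j + \kappa c_j$. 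Under either specification the objective precision $P_j$ does not depend on $c_j$.

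First, I would substitute the multiplicative form into the discretion weight. Treating the empirical definition \eqref{eq:Estar_def} as the structural weight gives $E_j^\ast = B_j/(B_j+P_j)$. The variance gap then becomes a function $G(c_j) = \Delta_g\, B_j^3/(B_j+P_j)^2$, with $B_j=\tilde B_j h(c_j)$. Second, I would differentiate with respect to $c_j$ by the chain rule:
\[
\frac{\partial G}{\partial c_j} = \Delta_g\,\tilde B_j h'(c_j)\,\frac{B_j^2(B_j+3P_j)}{(B_j+P_j)^3}.
\]
This derivative is strictly positive whenever $\tilde B_j>0$, and it tends to zero as $\tilde B_j \to 0$. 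Third, I would show that the derivative is increasing in $\tilde B_j$ over the relevant range. The factor $\tilde B_j$ appears linearly, and it also enters through $B_j^2(B_j+3P_j)/(B_j+P_j)^3$, which is increasing in $B_j$. The cross-partial $\partial^2 G/\partial c_j\,\partial \tilde B_j$ is therefore positive. That gives the complementarity: contact raises the gap more in high-$B_j$ jobs and has no effect in purely routine jobs.

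For the contrast, I would repeat the calculation under the additive form. There $\partial G/\partial c_j = \kappa\Delta_g B_j^2(B_j+3P_j)/(B_j+P_j)^3$, which stays positive even when $\tilde B_j=0$. So only the multiplicative hypothesis produces the ``no effect where evaluation rests on $P_j$'' clause, and this isolates what the hypothesis contributes.

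The main obstacle is the meaning of ``purely routine.'' In the paper's empirical composite, $\hat B_j$ is strictly positive by construction (min--max plus a constant), so $\tilde B_j$ never literally equals zero. I would handle this in two steps. First, state the result as a limit: the contact effect goes to zero as $\tilde B_j\to0$, equivalently as $E_j^\ast\to 0$ with $P_j$ large. Second, note that for fixed small $\tilde B_j$ the effect is of order $\tilde B_j^3/P_j^2$, so it is negligible where objective precision dominates. If the monotonicity in $\tilde B_j$ were to fail for some parameterization, I would fall back on the limit statement together with the sign of the cross-partial near $\tilde B_j=0$, which is sufficient for the qualitative claim about the variance gap.
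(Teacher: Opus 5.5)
Your chain-rule computation is correct, and your third step proves more than the paper does. The paper checks complementarity only at the level of $B_j$, via $\partial^2 B_j/\partial k_j\,\partial a_j = \alpha g'(k_j) > 0$, while you obtain it for the variance gap itself. The monotonicity you hedge on does hold, so you need no fallback: with $f(B)=B^2(B+3P)/(B+P)^3$ one gets $\frac{d}{dB}\ln f = \frac{6P^2}{B(B+3P)(B+P)} > 0$.

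The gap is in the routine clause. The obstacle you flag comes from your specification, not from the statement. Writing $B_j=\tilde B_j h(c_j)$ with $\tilde B_j\to 0$ in routine jobs sets baseline subjective noise to zero there. That contradicts the maintained assumption $B_j>0$; the paper parameterizes $B_j = 1+\alpha a_j+\beta p_j \ge 1$. It also makes the clause vacuous: as $\tilde B_j\to 0$ your $G=\Delta_g B_j^3/(B_j+P_j)^2$ itself vanishes, so ``contact has no effect'' only says there is no gap left to affect.

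If you instead keep the paper's $B_j$ and multiply all of it by $h(c_j)$, then at $a_j=p_j=0$ you get $\partial G/\partial c_j = \Delta_g h'(c_j) f(h(c_j))>0$, and the clause fails outright.

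The missing idea, which the paper's proof uses, is to let contact scale only the task-driven increment: $B_j(k_j)=1+(\alpha a_j+\beta p_j)\,g(k_j)$ with $g(0)=1$ and $g'>0$. In purely routine jobs $B_j\equiv 1$, so $\partial B_j/\partial k_j=0$ exactly, and contact leaves a strictly positive gap $\Delta_g/(1+P_j)^2$ unchanged. For $a_j+p_j>0$, your derivative and cross-partial go through verbatim: replace $\tilde B_j h'(c_j)$ by $(\alpha a_j+\beta p_j)g'(k_j)$ and evaluate $f$ at $1+(\alpha a_j+\beta p_j)g(k_j)$. No limit argument is needed.

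Note also that the additive shifter shares the $P_j^{-2}$ decay in your fallback. So only the $\tilde B_j$ factor separates the two specifications, not dominance of objective precision.
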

\end{adjustwidth}
\vspace{0.5cm}

\noindent The multiplicative structure sharpens the contrast with a pure customer-prejudice account. A model in which contact raises the gap regardless of task content predicts contact effects at every \(B_j\). The information-friction mechanism predicts that contact matters only where \(B_j\) is already elevated, so that subjective evaluation has scope to differ across groups. A customer-prejudice channel operating through those same high-\(B_j\) jobs would generate a similar pattern, so the contrast is suggestive rather than decisive. Appendix \ref{appendix:proof_contact} provides the formal derivation. We examine both predictions in Section \ref{section:results}.

\section{The Experiment} \label{section:experiment}
\subsection{Design}\label{section:design}
  
  \noindent We conducted identical r\'{e}sum\'{e} audits in 2016 and 2017. The audits ran from April through July in each year. Using a large online job search board, we submitted randomly generated r\'{e}sum\'{e}s to job advertisements randomly drawn from a bank of postings compiled by our research team. We monitored employer responses through dedicated email and voice accounts to record callbacks.\footnote{The experiment was reviewed by the Institutional Review Boards at Auburn University and the University of Wisconsin--La Crosse. Both ruled that the experiment did not constitute human subjects research, although we agreed to maintain the anonymity of the audited organizations and the institutions and employers named on the fictitious r\'{e}sum\'{e}s (e.g., universities and firms associated with internship, work, and volunteer experiences).}
  
  The job bank covers six categories, including account executive, banking, customer service, finance, insurance, and marketing. We selected these categories because they generated a large and consistent flow of postings suitable for an audit study of entry-level labor markets. We excluded advertisements requiring extensive prior experience, specialized credentials, or possession of an occupational license.\footnote{A text audit of the archived advertisements flagged about 4 percent of advertisements (406 of 9,220) whose language could be interpreted as an in-hand license requirement. The flag is deliberately over-inclusive: it captures any advertisement titled \enquote{licensed}, a current license described as a plus, or prospective requirements sponsored by the employer after hire (a Series 7, for example, which a new graduate cannot hold without a sponsoring firm). The mean callback rate for these advertisements is roughly half the sample mean. Any screen they embed operates at the advertisement level and applies to all applicants equally. As a conservative check, excluding them leaves the screening results in Section \ref{section:screening} slightly stronger (Table \ref{table:screening-robustness}).} Because the fictitious applicants represent new college graduates, the r\'{e}sum\'{e}s primarily reflect internships and part-time employment during college. Restricting the job bank to positions compatible with this profile avoids design complications arising from licensing requirements and other specialized qualifications that vary across states.
  
  Unlike many audit studies that focus on specific cities or local labor markets \citep{bertrand_are_2004,kroft_duration_2013,lahey_age_2008,nunley_racial_2015}, we studied the national labor market and imposed no geographic restrictions on job advertisements. Research assistants submitted applications to randomly selected openings. We submitted four randomly generated r\'{e}sum\'{e}s to each advertisement, yielding 36,880 applications to 9,220 unique job postings.\footnote{We initially applied to 9,468 job postings but could not assign detailed occupation codes to 248 of them. We excluded these advertisements because our empirical framework relies on occupation-level task measures. They account for less than 3 percent of the sample, and including them does not materially affect our baseline estimates.}
  
  We randomly assigned r\'{e}sum\'{e} characteristics using the program developed by \citet{lahey_computerizing_2009}. Each r\'{e}sum\'{e} listed a name, address, university, major, and work experience obtained during college. Following the audit literature \citep[e.g.,][]{bertrand_are_2004}, we used names designed to signal applicants' race/ethnicity and gender to prospective employers. The names are Colin Schneider and Jack Schwartz (White men), Darius Jackson and Xavier Washington (Black men), Diego Martinez and Andres Flores (Hispanic men), Claire Haas and Madeline Krueger (White women), Kiara Banks and Jasmin Booker (Black women), and Adriana Hernandez and Gabriela Lopez (Hispanic women).\footnote{The \enquote{White} and \enquote{Black} first names come from \citet{levitt_freakonomics_2005}. We identified Hispanic names through internet sources. We then used the Social Security name database to select names with similar popularity among individuals born between 1994 and 1996, the cohorts most likely to enter the labor market in 2016--2017. The first-name popularity rankings were Colin (128th), Jack (109th), Darius (169th), Xavier (133rd), Diego (208th), Andres (165th), Claire (131st), Madeline (79th), Kiara (188th), Jasmin (182nd), Adriana (147th), and Gabriela (122nd). Surnames were selected using the 2000 Census surname distribution. Among individuals with the \enquote{White} surnames, approximately 96--97 percent  report their race/ethnicity as non-Hispanic White, while the comparable range for the Hispanic surnames is 91--94 percent. Among individuals with the \enquote{Black} surnames, the share reporting Black/African American ranges from 54 to 90 percent. Although these shares are lower than the corresponding shares associated with White and Hispanic surnames, the selected names are prevalent among Black respondents in the Census data. Survey-based perception data also support these choices. \citet{gaddis_how_2017} tests the four Black surnames used here directly and finds that pairing a Black first name with a Black surname increases the rate at which the name is perceived as Black from 75.0 to 82.5 percent on average. In his data, Darius is perceived as Black at high rates and Claire as White at very high rates. Jasmin and Kiara are perceived as Black less often from the first name alone. Pairing them with the strongly Black-associated surnames Booker and Banks offsets this attenuation. \citet{gaddis_racialethnic_2017} finds that Hispanic first names paired with Hispanic surnames, the combination used here, are recognized as Hispanic about 90 percent of the time. Any attenuation in a group's name signal biases its estimated callback gap toward zero. As such, cross-group comparisons are conservative for groups with weaker signals.}
  
  Our design does not constrain the combinations of racial/ethnic and gender-specific names submitted to a given job advertisement. Consequently, the demographic composition of applications varies across postings. For example, among the 9,220 job openings, some received no applications with Black-sounding names, while others received one or more such applications, with the gender composition also varying across postings. The distribution of these combinations is consistent with random assignment \citep{arellano-bover_unbundling_2026}. Patterns for White- and Hispanic-sounding names are similar. Because four r\'{e}sum\'{e}s were submitted to each advertisement, a potential concern is that employer responses depend on the demographic composition of co-applicants rather than solely on individual attributes. We test for within-ad interference in Section \ref{section:test_discrim}.
  
  We used twelve flagship public universities spanning the continental United States, with all broad Census regions represented. Each r\'{e}sum\'{e} was randomly assigned a major (one of eight fields, each with equal probability), up to one minor (history or mathematics), and a GPA (including a 25 percent probability of no listed GPA). Applicants could also be assigned an internship experience drawn from one of two categories. Quantitative internships emphasize data and research tasks (e.g., \enquote{Marketing Analyst Intern}, \enquote{Research Intern}), while social internships emphasize sales and client interaction (e.g., \enquote{Marketing Sales}, \enquote{General Sales}). In total, we constructed 60 internship experiences split evenly across the two categories. Additional credentials were randomized independently and included volunteer experience, Spanish language proficiency, study abroad, college work experience, and computer skills. Computer skills were none (25 percent), basic skills (25 percent), data analysis (25 percent), programming (12.5 percent), or both data analysis and programming (12.5 percent). Computer skills, internship type, and study abroad enter directly into the credential attenuation analysis in Section \ref{section:credentials}. A companion paper, \citet{arellano-bover_unbundling_2026}, uses the same audit to estimate the returns to r\'{e}sum\'{e} characteristics for new college graduates. That paper documents the experimental design, balance and randomization checks, and callback returns to all r\'{e}sum\'{e} attributes. Rather than duplicate those details, we focus on how discrimination varies with the task content of jobs.\footnote{Online Appendix Table O1 reports the full set of r\'{e}sum\'{e} characteristics, their assignment probabilities, and corresponding sample means.}

\subsection{Template and Detection Bias} \label{section:avoid-bias}
 
  \noindent Audit studies face two standard concerns: \enquote{template bias} and detection by audited firms.\footnote{See \citet{lahey_technical_2018} for a detailed discussion of template bias and other considerations in the design and implementation of an audit study.} Template bias arises when r\'{e}sum\'{e} characteristics are systematically linked, making it difficult to isolate the effect of individual attributes. To prevent this, we randomly assigned each r\'{e}sum\'{e} characteristic independently across applications. The sole exception was the joint assignment of university and residential address. This linkage does not affect identification because no two r\'{e}sum\'{e}s listing the same university were submitted to the same advertisement.
  
  Detection by audited firms cannot be fully ruled out and is unobserved by the researcher. However, evidence suggests that such detection attenuates measured discrimination. \citet{balfe_infrequent_2023} show that repeated exposure to infrequent identity signals can alter employer behavior and thereby reduce observed bias. Because we submitted four r\'{e}sum\'{e}s to each job opening, employers could potentially detect the audit. We took five steps to reduce this possibility.

  \begin{enumerate}
    \item Most firms were audited only once. Approximately 71 percent of firms in our sample received applications for a single job posting, with four r\'{e}sum\'{e}s submitted to each advertisement.
    \item Applications were staggered so that no advertisement received more than one r\'{e}sum\'{e} per day. Research assistants applied to each job over a 4--7-day period.
    \item Each application submitted to a job listed a different university. This reduced the likelihood that firms detected similarities across r\'{e}sum\'{e}s when applicants shared other characteristics, such as majors.
    \item Restricting the design to six job categories allowed us to construct realistic r\'{e}sum\'{e}s. Focusing on new college graduates limited work histories to short and plausible experiences.\footnote{Audit studies with varied and, in some cases, lengthy work histories include, for example, those focused on unemployment spells \citep[e.g.,][]{farber_determinants_2016, kroft_duration_2013, nunley_effects_2017} and age discrimination \citep{lahey_age_2008,neumark_is_2019}.}
    \item We excluded advertisements requiring company-specific application forms. Open-ended application questions are difficult to standardize across applicants and may increase the likelihood of detection.
  \end{enumerate}

  \noindent As a further check, we re-estimated the specifications using firm rather than advertisement fixed effects, because detection is more likely to occur across applications received by the same firm (see Section \ref{section:results-tasks}). Approximately 29 percent of firms appeared multiple times in the sample. The results are robust to this alternative specification.
    
\subsection{Advertisement Classification and Worker Tasks}\label{section:tasks}

\subsubsection{Advertisements} \label{section:ad_char}
  
  \noindent In total, we audited 4,969 firms. Most firms (71 percent) were audited once, and 95 percent were audited four or fewer times.

  Of the 9,220 job openings in our sample, firm names were identified in 96 percent of advertisements, and 99 percent include location information. Geographic coverage was uneven. Thirty-nine percent of job openings were located in five states (California, Texas, Florida, Illinois, and New York), but the sample covered all 50 states, with at least 40 job openings in 35 states. Figure \ref{fig:app-maps} presents a heat map of advertisements across commuting zones.\footnote{After obtaining the latitude and longitude of firm locations, we identified the corresponding county and used a crosswalk from the U.S. Department of Agriculture (USDA) to group counties into 1990 commuting zones.}
    
  We submitted the job title and description to the O*NET-SOC Autocoder, a machine learning algorithm developed by the Department of Labor and refined by R.M. Wilson Consulting, Inc. The algorithm assigned each advertisement an 8-digit O*NET-SOC code and produced an occupation-match score, ranging from 50 to 99, with higher values indicating a stronger match. In sensitivity analyses, we weight observations by these scores to assess sensitivity to occupational misclassification. Weighted and unweighted estimates are similar.

  The sample is unevenly distributed across major occupation groups. Approximately 94 percent of advertisements fall into four categories. These are management (10 percent, 21 detailed occupations),\footnote{Throughout, \enquote{management} refers to occupations that the O*NET-SOC Autocoder places in the major management group. In our entry-level sample, these consist of managerial and professional roles within the six sampled fields rather than executive-track positions.} business and financial operations (21 percent, 27 detailed occupations), sales (36 percent, 16 detailed occupations), and office and administrative support (27 percent, 35 detailed occupations). Altogether, the sample covers 175 detailed occupations.

\subsubsection{Task Intensities} \label{section:worker_tasks}

  \noindent We use the task taxonomy from \citet{acemoglu_skills_2011}, which classifies occupational content into analytical, interpersonal, routine cognitive, routine manual, and physical task dimensions. These measures are constructed from survey responses to items in the O*NET Abilities, Work Activities, and Work Context modules.

  \begin{itemize}
    \item \emph{Analytical} captures analyzing data, creative thinking, and interpreting information.
    \item \emph{Interpersonal} captures establishing relationships, guiding, motivating, and coaching others.
    \item \emph{Routine Cognitive} captures repetitive tasks, maintaining accuracy, and performing structured activities.
    \item \emph{Routine Manual} captures repetitive motions, machine-paced work, and process control.
    \item \emph{Physical} captures operating machinery, manual dexterity, spatial orientation, and object handling.
  \end{itemize}

  \noindent Following \citet{autor_skill_2003} and \citet{deming_growing_2017}, we rescale the underlying ordinal measures to a 0--10 range and average them to construct composite task indices. In the theoretical framework, analytical and interpersonal intensity serve as proxies for subjective evaluation noise ($B_j$), while routine cognitive intensity serves as a proxy for objective evaluation precision ($P_j$). Routine manual and physical task intensities enter as controls in the empirical specifications.

  We also incorporate a contact-task measure from \citet{hurst_task-based_2024}. This measure captures the importance of interactions with coworkers and customers, a central focus of the discrimination literature \citep[e.g.,][]{combes_customer_2016,giuliano_manager_2009,laouenan_hate_2017,nunley_racial_2015,hedegaard_price_2018}. It is constructed from two O*NET survey items measuring coworker interaction and customer engagement. In Proposition \ref{prop:contact_main}, contact scales subjective noise ($B_j$).
  
  We use the assigned O*NET-SOC codes to link each advertisement to O*NET and the ACS.\footnote{This process requires harmonizing O*NET-SOC codes with the ACS \emph{occsoc} variable using crosswalks. ACS codes for 2015--2017 are based on the 2010 SOC system, while the 2018 data use the 2018 SOC system. We applied a 2010--2019 O*NET-SOC crosswalk and then mapped the 2019 O*NET-SOC codes to 2018 SOC codes. In cases where the ACS aggregates occupations (e.g., postsecondary teachers coded as 25-1000 rather than the 25-1011 to 25-1199 classifications in the 2018 SOC system), we recoded them to align with ACS classifications.} We then construct an occupation-year panel of occupations for 2015--2018, following \citet{ross_routine-biased_2017, ross_effect_2021} and \citet{cole_distribution_2022}, and compute employment-weighted task intensity measures at the detailed occupation level.\footnote{For employed, college-educated individuals (ages 21--26), we compute employment weights following \citet{deming_growing_2017}. These weights are the product of person weight (\emph{perwt}), usual hours worked (\emph{uhrswork}), and weeks worked (\emph{wkswork2}). Because \emph{wkswork2} is reported in intervals, we use midpoint values (e.g., 50--52 weeks $\rightarrow$ 51 weeks).} Figure \ref{fig:task-distributions} presents kernel density estimates for the six task-intensity measures. Solid lines show young, college-educated workers in occupations linked to the audit sample. Dashed lines show the same group across all ACS occupations.

  Relative to the overall labor market, the audit sample is concentrated in occupations with lower routine manual and physical task intensities and slightly higher analytical, interpersonal, routine cognitive, and contact task intensities. These differences reflect the six job categories targeted in the audit.

  A potential concern with occupation-level task measures is that they assign the same task intensities to all job postings within a detailed occupation, even though task requirements may vary across postings. We address this concern in two ways. First, our classification distinguishes among 175 detailed occupations. Second, we complement the occupation-level analysis with advertisement-level text measures constructed from posting language that allow task emphasis to vary within detailed occupations. The advertisement-level results replicate the directional patterns from the occupation-level analysis, indicating that the findings do not depend solely on occupation-average task measures (see Section 1 of the Online Appendix).

  Table \ref{table:occlist-new} reports the most prevalent detailed occupations in the sample, organized by major occupation group (Panels A--E). For each occupation, we report the number of unique advertisements (column 1), its share of the sample (column 2), and percentile rankings for each of the six task intensities (columns 3--8). The breadth of occupations provides substantial variation in task intensity both within and across major occupation groups (2-digit SOC).

  Management occupations rank high in both analytical and interpersonal intensity, while sales and office and administrative support occupations exhibit high contact intensity but lower analytical and interpersonal intensity. More broadly, routine cognitive intensity is negatively associated with analytical and interpersonal intensity across occupations.

  Task correlations are often strong. Analytical and interpersonal tasks are positively correlated $\left( \hat{\rho} = 0.7 \right)$, while analytical tasks are negatively associated with routine manual and physical tasks at approximately $\hat{\rho} = -0.5$. Interpersonal tasks show similar negative correlations with routine manual and physical tasks. The correlation between interpersonal and contact intensity is small $\left( \hat{\rho} = 0.06 \right)$, indicating that the two measures capture distinct dimensions of job content. Routine cognitive and routine manual tasks are positively correlated $\left( \hat{\rho} = 0.4 \right)$, while routine manual and physical tasks are positively correlated $\left( \hat{\rho} = 0.8 \right)$.

  These correlations complicate the interpretation of task-specific estimates. For example, occupations with high analytical intensity also tend to require strong interpersonal skills. To assess overall collinearity, we compute the determinant of the task correlation matrix. Values close to one indicate weak dependence, while values near zero indicate strong collinearity. For the 175 occupations in the audit sample, the determinant is 0.053. Because interpreting individual task coefficients without accounting for this interdependence could be misleading, we focus on task bundles and the composite discretion index rather than individual task coefficients in Section \ref{section:results}.
  
\subsection{Employer Responses} \label{callbacks}

  \noindent Our outcome was a callback, defined as an employer response expressing interest in the applicant, including an interview invitation or a request for additional information (e.g., the applicant's availability to discuss the position). The overall callback rate is approximately 15 percent. Callback rates range from 5 percent in business and financial operations to 25 percent in sales, with intermediate rates in management (12 percent) and office and administrative support (10 percent).

  Across task dimensions, callback rates are similar for occupations above and below the median of analytical and interpersonal intensity. The corresponding differences are larger along the remaining task dimensions. Callback rates above versus below the median are 8 versus 22 percent for routine cognitive intensity, 10 versus 20 percent for routine manual intensity, 13 versus 18 percent for physical intensity, and 17 versus 12 percent for contact intensity. The absence of meaningful unconditional differences in analytical and interpersonal task intensity, despite the importance of these dimensions for the discrimination results below, illustrates that callback levels and callback gaps need not move together.

  A limitation of audit studies is that callbacks capture only the initial screening stage. Discrimination at this stage does not reveal the extent of discrimination later in the hiring process. \citet{jarosch_statistical_2019} show, in the context of unemployment-duration discrimination, that large callback gaps in audit studies need not translate one for one into differences in job-finding rates, because a foregone interview affects employment only if it would have led to an offer. Their extended model implies that downstream effects are larger when missed interviews interact with human-capital decay or congestion among applicants. Other evidence suggests the possibility of larger downstream gaps. \citet{quillian_evidence_2020} document larger majority--minority differences at the job-offer stage than at the callback stage, while \citet{lanning_opportunities_2013} embeds hiring discrimination in a search model calibrated to audit data and shows that such discrimination can generate wage gaps. The net effect of callback discrimination on employment outcomes therefore depends on how screening gaps propagate through the post-callback hiring process, which our design does not observe. We therefore interpret our estimates as evidence on where callback discrimination concentrates in the task space rather than as direct measures of its downstream employment consequences.

\subsection{Sample Representativeness}

  \noindent The job-category and credential restrictions in our design exclude many positions in fields such as medicine, law, engineering, and accounting. Because these fields employ a large share of college graduates, their exclusion limits the generalizability of our findings. Because we link each advertisement to a detailed occupation code, we can measure how much of the early-career labor market the sample covers. The 9,220 advertisements in our sample map to 175 detailed occupations in the ACS-O*NET data, representing roughly one-third of all detailed occupations.

  Using ACS data for 2015--2018, we assess the audit sample's coverage of early-career employment. Among college-educated individuals ages 21--26, 38 percent are employed when all occupations are included. This share rises to 46 percent for occupations represented by the audit sample, compared with 33 percent for occupations not included. Overall, approximately two-thirds of employed individuals in this age group work in occupations represented in the audit. Among those with majors included in our experiment, this share increases to 70 percent. These patterns indicate that the audit sample captures a substantial share of the early-career labor market for college graduates, particularly within the fields represented in the design.

  Because the design focuses on new college graduates, it limits variation arising from prior work histories and provides a clear view of the entry-level labor market for degree holders. This makes the data well suited to studying task-based discrimination, though less suited to firm-level analysis of the type conducted by \citet{kline_systemic_2022}. Our study audits 9,220 job advertisements from 4,969 firms, whereas \citet{kline_systemic_2022} audit 11,114 advertisements from 108 firms. Differences in applicant characteristics, firm composition, and job types further distinguish the two designs.

  \citet{kline_systemic_2022} also examine how callback gaps vary with job task content. In their bivariate comparisons, racial callback gaps are larger in jobs requiring customer interaction and manual skills. Conditional on firm fixed effects, however, the relationship between racial gaps and task content is jointly insignificant, and within-industry task gradients are weak. In their large-employer sample, the customer-interaction pattern operates mainly across industries, which they attribute to firm and sector recruiting cultures rather than to a job-level task mechanism. The contrast with our findings reflects differences in research design. Their sample is designed to detect firm-level discrimination, so firm fixed effects absorb much of the cross-job task variation retained in our broader sample of firms. The two sets of results are therefore complementary. \citet{kline_systemic_2022} show that discrimination concentrates across firms and sectors, while our within-ad design shows that, across the broad entry-level market, discrimination also concentrates in analytically and interpersonally intensive jobs with low routine content.\footnote{See Figure III and Section VIII.A in \citet{kline_systemic_2022}.}

\subsection{Pre-Registration, Pre-Analysis Plans, and Power Calculations} \label{prereg-preanal}

  \noindent We designed the experiment in 2014, before pre-analysis plans (PAPs) became widely used in the AEA RCT Registry (Online Appendix Figure O1). We registered the experiment on the AEA RCT Registry (\href{https://www.socialscienceregistry.org/trials/12914}{AEARCTR-0012914}) but did not submit a PAP. Three features of the design limit the scope for data mining. First, the experiment has a single outcome variable, employer callbacks. Second, the task taxonomy is standard in labor economics \citep{acemoglu_skills_2011} and was not developed by the research team. Third, we classify job advertisements into detailed occupations using a method established in prior work \citep{jaeger_demand_2023}. \citet{banerjee_praise_2020} argue that many of the principal benefits of pre-analysis planning can be achieved through the registration fields in the AEA RCT Registry, and that requiring detailed pre-specification can impede knowledge creation in real-world field experiments. Moreover, exhaustively specifying every hypothesis \emph{ex ante} is impractical for an audit design that randomizes many applicant attributes.

  Following the power-analysis protocol in \citet{lahey_technical_2018}, we used \textit{G\textsuperscript{*}Power} to compute the minimum number of applications required to detect the planned race/ethnicity--gender and r\'{e}sum\'{e}-credential effects \citep[see also][]{arellano-bover_unbundling_2026}. Because callbacks to r\'{e}sum\'{e}s submitted to the same vacancy are correlated, we adjusted these calculations for clustering using the standard design effect,
  
  \[
    \text{DE} \;=\; 1 + (k - 1)\times \text{ICC},
  \]
  
  \noindent where $k=4$ is the number of r\'{e}sum\'{e}s submitted per job advertisement and $\text{ICC}$ is the intraclass correlation coefficient, defined as the share of total callback variance attributable to differences between vacancies rather than to differences among applicants within the same vacancy. Higher values of the ICC imply stronger within-vacancy correlation. Failing to make this adjustment would overstate statistical power.

  Given the absence of pilot data, we follow the conservative upper-bound guidance in \citet{lahey_technical_2018} and set $\text{ICC} = 0.30$. Even under this assumption, the adjusted calculation requires only $10{,}612$ observations, whereas the experiment includes $36{,}880$ applications, approximately three and a half times the calculated minimum. The sample size therefore provides more than the minimum estimated power for detecting the planned group-level differences in callback rates.\footnote{To verify this, we conducted a Monte Carlo simulation using a linear probability model. The simulation included six demographic groups and six job categories, with a baseline callback rate of 0.150 in every group-by-category cell except one, for which a single target cell was set to 0.225. Across 1,000 replications, a joint F-test of all group-by-job-category interaction terms rejected the null of no heterogeneity in 92.0 percent of samples. This omnibus figure does not measure the power of the single-coefficient tests reported in the callback tables. The simulation was not designed to assess the credential triple interactions in Section \ref{section:credentials}, and we make no ex ante power claim for those tests.}

\section{Econometric Methodology} \label{section:test_discrim}
      
    \noindent We use linear probability models to estimate callback discrimination overall and its heterogeneity across occupation groups and task environments. Our baseline specification is

        \begin{equation}
            callback_{i,j} = \alpha + \beta D_{i} + X_{i}'\Theta + \phi_{j} + \epsilon_{i,j},
            \label{eq_baseline}
        \end{equation}

    \noindent where subscripts $i$ and $j$ index applicants and job advertisements, respectively. The outcome variable, $callback_{i,j}$, is an indicator equal to one if an employer responds positively to an application and zero otherwise. $D_{i}$ denotes a vector of indicator variables for race/ethnicity--gender categories, specifically White women, Black men, Black women, Hispanic men, and Hispanic women, with White men as the reference category. $X_{i}$ contains all r\'{e}sum\'{e} characteristics randomly assigned to applicants. $\phi_{j}$ denotes advertisement fixed effects that control for observed and unobserved advertisement-specific factors common to all applicants, such as local labor market conditions and overall applicant volume. $\epsilon_{i,j}$ captures residual determinants of callbacks. Random assignment ensures that $D_i$ and $X_i$ are orthogonal to $\epsilon_{i,j}$, so the coefficients in $\beta$ and $\Theta$ have a causal interpretation.\footnote{We assess the independence of assignment using correlation coefficients across r\'{e}sum\'{e} attributes. Table \ref{table:corr-coeff} reports the correlations between race/ethnicity indicators, gender indicators, and other characteristics. All coefficients are smaller than 0.1 in absolute value, consistent with independent assignment.} 
        
    To analyze how discrimination varies across occupational contexts and task requirements, we extend equation \eqref{eq_baseline} to include interactions between the demographic indicators, $D_{i}$, and occupation-specific characteristics.

        \begin{equation}
            callback_{i,j} = \alpha + \beta D_{i} + \gamma D_{i} \times G_{o(j)} + X_{i}'\Theta + \phi_{j} + \epsilon_{i,j},
            \label{eq_interaction}
        \end{equation}

    \noindent where $o(j)$ denotes the occupation assigned to each job advertisement, and $G_{o(j)}$ is a set of indicators for the relevant occupation or task categories. We characterize the job environment in four ways. First, we use four major occupation groups (management, business and financial operations, sales, and office and administrative support), which cover approximately 94 percent of our audit sample. Second, we apply K-means clustering to occupation-level task measures to form clusters of advertisements that differ in task content. Third, we construct the discretion index $E_j^\ast$, which captures how far hiring rests on subjective rather than objective evaluation. Fourth, we use this index and its underlying components to examine how subjective evaluation and objective precision differentially shape callback gaps. The parameters of interest are $\beta$, $\beta + \gamma$, and $\gamma$. For a given demographic group, $\beta$ measures the callback gap in the omitted job category, $\beta+\gamma$ measures the gap in the comparison category, and $\gamma$ measures the difference between them.\footnote{When analyzing discrimination by major occupation groups, $o(j)$ corresponds to the 2-digit Standard Occupational Classification (SOC) code linked to the job advertisement. In analyses focusing on task measures, $o(j)$ corresponds to the detailed 6-digit SOC occupation code linked to each advertisement.}
   
    Random assignment of r\'{e}sum\'{e} characteristics ensures that applicant attributes are orthogonal to job characteristics, so the baseline callback gaps in equation \eqref{eq_baseline} have a causal interpretation. Within each occupation or task environment, the interaction specification continues to compare applicants whose demographic attributes were randomly assigned. The resulting group-specific callback gaps therefore have a causal interpretation within each environment. Task content is a structural feature of the job determined by market demand, not a treatment that can be experimentally assigned. The relationship between these causal callback gaps and task content is therefore observational. The design embeds randomized applicant attributes within the task structure generated by the labor market. This is analogous to estimating treatment-effect heterogeneity across pre-treatment characteristics without randomly assigning those characteristics.

    The standard concern with treatment-effect heterogeneity is that the moderating variable may correlate with omitted factors. Jobs that differ in task content may also differ in pay, firm type, applicant pool composition, screening technology, and selectivity. Job-advertisement fixed effects absorb all such characteristics when estimating within-ad callback gaps, but they do not make the cross-job relationship between callback gaps and task content causal.\footnote{Using within-firm variation in task content instead would require firms with multiple job advertisements. Approximately 71 percent of firms in the sample posted only one advertisement during the study period, and among firms with multiple advertisements, 23 percent posted jobs in the same occupation, contributing no task variation. Job-ad fixed effects draw on both within-firm and between-firm task variation, providing more identifying variation for the task-based specifications.} Three features of the evidence help discipline the interpretation of this relationship. First, the gradient is robust across multiple alternative classifications of task content, including major occupation groups, K-means clusters with varying $k$, one-way through three-way task interactions, and advertisement-level text measures. Second, the credential attenuation test in Section \ref{section:credentials} uses within-job randomization of r\'{e}sum\'{e} credentials. Moreover, the White-male callback rate is identical (0.149) in both halves of the discretion split, so its interpretation is not driven by cross-job differences in baseline callback rates. Third, the sign pattern in the mechanism decomposition, in which subjective noise widens gaps while objective precision compresses them, is inconsistent with a generic account in which job complexity uniformly increases discrimination.\footnote{As a robustness check, we examine whether discrimination varies with the race/ethnicity--gender composition of occupations. We find no evidence that discrimination depends on the relative employment shares of the different demographic groups.}

    Following the r\'{e}sum\'{e} audit literature \citep[e.g.,][]{kline_systemic_2022}, we do not attempt to distinguish between taste-based and information-based sources of discrimination. As \citet{neumark_detecting_2012} emphasizes, audit studies identify differences in employer responses while holding observable qualifications constant, capturing discrimination as defined in legal and empirical contexts. Our estimates therefore reflect the combined effects of level-based and variance-based discrimination.

    This limitation concerns the source of discrimination within a given job. The framework in Section \ref{section:theory} instead asks where discrimination concentrates across jobs. It generates three qualitative predictions that distinguish evaluative discretion from competing explanations for the cross-job pattern. First, subjective noise and objective precision should have opposite-signed effects on callback gaps. A generic screening-difficulty account, in which task complexity of either type widens discrimination, would predict that both components widen gaps. Second, contact should amplify discrimination only in non-routine jobs. A pure customer-prejudice account would predict that contact matters regardless of task content. Third, informative r\'{e}sum\'{e} credentials should narrow non-White-male callback gaps more in low-discretion than in high-discretion jobs. These pieces are not equally decisive. The randomized credential test carries more evidential weight than the other two because it varies verifiable information within the job advertisement. The cross-occupation task and screening-text patterns are descriptive, and we interpret them as corroborating the credential result.

    Because four r\'{e}sum\'{e}s were submitted to each job advertisement, a potential concern is within-ad interference. If employers evaluate applicants relative to one another, callback probabilities may depend not only on an applicant's own characteristics but also on the demographic composition of the other experimental applicants. To assess this, we test whether the callback for r\'{e}sum\'{e} $i$ depends on the demographic composition of the other applicants submitted to the same advertisement. We regress callbacks on the applicant's race--gender indicators interacted with the number of non-White-male or Black co-applicants. The joint $F$-test of all five group-specific peer-composition interactions does not reject the null that the interactions are jointly zero ($p = 0.814$ for non-White-male peers and $p = 0.598$ for Black peers). We further test whether peer composition moderates the discretion gradient by interacting the non-White-male $\times$ High-$E_j^\ast$ term with measures of applicant-pool composition. The resulting triple interaction is small and statistically insignificant ($p = 0.846$). These results provide no evidence that employer responses depend on the composition of the experimental applicant set, supporting the assumption that applications are evaluated independently.

\section{Results} \label{section:results}

\subsection{Discrimination Across Major Occupation Groups} \label{section:results-tasks}

  \noindent Table \ref{table:discrim-occ} reports estimated callback gaps between each demographic group and White men for the full sample (column 1) and separately for the four major occupation groups that account for 94 percent of the audit sample (columns 2 through 5).\footnote{Advertisements linked to occupations outside these four groups are classified as \enquote{other} and included in the regressions for columns 2 through 5, but the corresponding estimates are not reported. This residual category comprises 76 detailed occupations spanning 17 major occupation groups, resulting in a sample too small for precise inference.} Column 1 presents estimates from equation \eqref{eq_baseline}. Columns 2 through 5 present estimates from equation \eqref{eq_interaction}, with management as the omitted category.

  In the full sample, Black applicants face statistically significant callback gaps relative to White men of 2.1 percentage points for Black men and 1.4 percentage points for Black women. Relative to a White-male baseline callback rate of approximately 15 percent, the Black male gap represents a 13 percent proportional reduction in callback probability (Appendix Table \ref{table:occgroup-poisson}). Discrimination varies substantially across major occupation groups. In management occupations (column 2), four of the five non-reference groups face significant callback gaps ranging from 3.3 to 5.1 percentage points. The Black male gap in management (5.1 percentage points, or 28 percent relative to White men) is roughly three times as large as the corresponding gap for office and administrative support roles (1.7 percentage points).\footnote{Because White-male callback rates vary across task groupings, percentage-point gaps are not directly comparable across environments with different baseline callback rates. The routine-cognitive split has a pronounced base-rate asymmetry (0.222 versus 0.076 for White men). The discretion median split that drives the credential attenuation analysis in Section \ref{section:credentials} has identical White-male callback rates in both halves (0.149 versus 0.149), so the credential results are unaffected by base-rate scaling. Appendix Table \ref{table:wm-baserates} reports White-male callback rates for each grouping.} Outside management, estimated gaps are smaller and less pervasive, with statistically significant estimates confined to a subset of demographic groups in each category. A joint $F$-test on the interaction terms $\left(D_{i} \times G_{o(j)}\right)$ rejects the null of equal callback gaps across occupation groups at the 1 percent level.\footnote{Online Appendix Table O2 replaces job-ad fixed effects with firm fixed effects. The baseline discrimination estimates are largely unchanged.} The concentration in management is evident in proportional as well as percentage-point terms. In a Poisson specification, in which each gap is expressed as a callback ratio, callback ratios among the four groups with significant management gaps range from 0.72 for Black men to 0.82 for White women. A joint test rejects equality of proportional callback gaps across occupation groups at the 5 percent level (Appendix Table \ref{table:occgroup-poisson}).

  These gaps do not appear to reflect occupation-specific returns to other r\'{e}sum\'{e} characteristics. When the r\'{e}sum\'{e} characteristics are interacted with major occupation group, the callback gaps in Table \ref{table:discrim-occ} are largely unchanged (Appendix Table \ref{table:robust-gxx}). Hispanic women show no statistically significant gap in the baseline job-ad fixed-effects specification (Table \ref{table:discrim-occ}), a pattern consistent with a near-zero group-specific noise differential. Isolated significant estimates for this group appear in some robustness and subsample specifications, in both directions. We therefore interpret the baseline null as the central tendency rather than evidence that a gap is uniformly absent across specifications and subsamples.\footnote{One candidate explanation is a weak ethnicity signal from the names, which would attenuate attention-based discrimination. The available perception data do not directly support this reading, since Hispanic first names paired with Hispanic surnames, the combination used here, are recognized as Hispanic about 90 percent of the time \citep{gaddis_racialethnic_2017}. A group-specific noise differential near zero, the model's $\pi_g \approx 1$ case, fits the pattern without relying on signal strength.}

  The bottom panel of Table \ref{table:discrim-occ} reports task-intensity percentiles by major occupation group. Management occupations rank high in interpersonal intensity (mean percentile of 91) and low in routine cognitive intensity (mean percentile of 23), compared with business and financial operations (interpersonal mean of 66, routine cognitive mean of 55). These differences in task content motivate the task-based groupings examined next.

\subsection{Discrimination Across Task Profiles}\label{section:results-interactions}

  \noindent Task intensities vary within major occupation groups. Table \ref{table:reg-task-discrim-kmeans4} groups advertisements into four clusters using K-means applied to occupation-level task-intensity percentiles.\footnote{The cluster assignment is subject to uncertainty that the standard errors in Table \ref{table:reg-task-discrim-kmeans4} do not reflect. Re-running the K-means algorithm from 200 random starting values produces partitions whose adjusted Rand index relative to the baseline partition has a median of 0.57, indicating that the four-cluster solution is one of several comparable groupings of the task data. Selecting the partition that minimizes the within-cluster sum of squares across 500 random starts reproduces the concentration of callback gaps (Appendix Table \ref{table:reg-task-discrim-kmeans4-beststart}). Four of the five non-reference groups face significant gaps in the high-discretion cluster, and the Black male gap is wider there and narrower in the low-discretion cluster than in the other clusters. Partitions within 0.1 percent of the minimum still differ from one another (mean adjusted Rand index 0.73), so no starting rule yields a unique grouping. We therefore read the cluster results as a descriptive representation of the gradient. We base statistical inference on the continuous task indices (Table \ref{table:mechanism-contact}) and on direct task-interaction specifications (Online Appendix Tables O7--O12), neither of which depends on a cluster assignment.} The bottom panel reports each cluster's composition and task profile. The clustering distinguishes between jobs that share high analytical demands but differ in routine cognitive content. This contrast allows us to examine whether greater objective evaluation capacity is associated with narrower callback gaps.

  Clusters 1 and 2 isolate this variation. Both rank high in analytical intensity (mean percentiles 90 and 88), but differ substantially in routine cognitive content (mean percentiles 26 and 68). Discrimination is concentrated in Cluster 1, where four of the five non-reference groups face significant callback gaps ranging from 2.3 to 3.6 percentage points. The Black male gap of 3.6 percentage points is equivalent to 42 percent of the White-male callback rate of 8.6 percent in this cluster. Cluster 1 is not composed primarily of management occupations. Business and financial operations advertisements account for 61 percent of its postings. The concentration therefore aligns more closely with the task profile than with the major occupation label alone. In Cluster 2, no group shows a statistically significant gap despite similarly high analytical demands. In Clusters 3 and 4, which combine moderate analytical intensity with higher routine or contact content, callback gaps are smaller and less broadly distributed. Black men face significant gaps in both clusters, Black women face a significant gap in Cluster 4, and White women face a marginally significant gap in Cluster 3.\footnote{Alternative specifications using three, five, six, and seven clusters appear in Online Appendix Tables O3--O6. To assess whether these patterns reflect specific task dimensions rather than clustering choices, Tables O7--O9 present direct task-interaction specifications. We also assess sensitivity to alternative cut-points (40th and 60th percentiles, Tables O10--O12). Callback gaps are largest in jobs with high analytical and interpersonal intensity and smaller when routine cognitive intensity is high. Three-way interaction estimates indicate that callback gaps are concentrated in jobs combining high analytical and interpersonal intensity with either low routine content or high contact intensity, whereas no comparable gaps emerge in jobs low in both analytical and interpersonal intensity. We report joint $F$-tests of the interaction terms in each specification.}

  These results are robust to alternative measures of task content. Online Appendix Table O13 reports results using quintiles of the composite discretion index $E_j^\ast$, and Appendix Table \ref{table:ad-text-mechanism-comparison} reports results using advertisement-level text measures constructed from job-posting language rather than occupation-level O*NET measures. The text-based results replicate the directional pattern. Callback gaps are again largest where subjective demands are high relative to the job's verifiable content.

\subsection{Credential Signal Attenuation} \label{section:credentials}

  \noindent The occupation patterns show where callback gaps concentrate. The randomized credentials provide a more direct test of the mechanism by varying applicants' verifiable information within each job advertisement and measuring whether this information narrows callback gaps more where evaluation rests on structured criteria. Because credentials are randomized within advertisements, this evidence carries greater weight than the cross-occupation comparisons. Table \ref{table:credential-attenuation} uses randomly assigned r\'{e}sum\'{e} credentials to test whether informative signals differentially attenuate non-White-male callback gaps across discretion regimes. We partition the sample at the median of the composite discretion index $E_j^\ast$ and estimate a single regression that includes each credential's main effect, its interaction with the pooled non-White-male indicator, and the triple interaction $\left(\text{Credential}\times\text{Non-White-male}\times\text{High Discretion}\right)$. Job-advertisement fixed effects absorb all advertisement-level variation.

  Column 1 reports average callback returns. Three credentials have positive and significant effects on callbacks. Social internships increase callbacks by 1.1 percentage points, programming and data skills by 1.0 percentage point, and study abroad by 0.8 percentage points. The remaining three credentials (GPA, quantitative internships, and math minor) have returns indistinguishable from zero. Columns 2 through 4 examine how credential returns differ by non-White-male status and discretion regime.

  For credentials with positive returns, the pattern is broadly consistent with the evaluative-discretion prediction. In low-discretion jobs (column 2), the two with the larger callback returns narrow non-White-male callback gaps by 3.7 to 4.1 percentage points. In high-discretion jobs (column 3), the same credentials provide no differential benefit. The interaction estimates are close to zero, and the confidence intervals rule out differential benefits larger than about three percentage points. The corresponding triple interactions (column 4) are negative for all three positive-return credentials, and a joint test rejects the null that they are jointly zero at the 5 percent level. 
  
  The estimated low-discretion interaction effects are larger than the pooled non-White-male callback gap in low-discretion jobs, which is about two percentage points, so the point estimates imply that credentialed non-White-male applicants have callback rates roughly two percentage points higher than those of comparable White men. The model bounds an informative credential's effect at full closure. The credential may reduce the gap to zero but cannot reverse it because an objective signal compresses the variance gap toward zero. The estimates therefore lie beyond the model's bound, but the implied post-credential gap is not statistically distinguishable from zero, making the apparent reversal consistent with sampling variation. Quality-interaction reversals of this kind appear in \citet{nunley_racial_2015}. 
  
  Study abroad is the exception among the positive-return credentials. It raises callbacks by an average return of 0.8 percentage points but shows no attenuation of the non-White-male gap within low-discretion jobs. One interpretation is that study abroad signals family resources rather than competence. Such a signal of socioeconomic status can raise callbacks without reducing employer uncertainty about productivity. The theory predicts attenuation for informative productivity signals and is silent on status signals, so we treat the study-abroad result as a partial exception rather than evidence supporting the mechanism.
  
  The zero-return credentials are a placebo comparison. They show no systematic pattern across discretion regimes. Because the six credentials are each interacted with the same discretion split, the triple interactions form a family of related hypotheses. We therefore adjust for multiple testing across the six coefficients. The social-internship triple, which is the largest of the six, survives a Romano-Wolf stepdown adjustment across all six triples, computed from 999 bootstrap replications clustered by job advertisements. The programming triple is not individually distinguishable after adjustment ($p = 0.116$).\footnote{One of three placebo credentials (quantitative internships) reaches marginal significance in low-discretion jobs despite having no average callback return. The corresponding triple interaction is not statistically significant. A single marginally significant estimate among three placebo credentials is consistent with chance variation at conventional significance levels.}

  In low-discretion positions, where evaluation criteria are more structured, verifiable credentials narrow non-White-male callback gaps. In high-discretion positions, the same credentials provide no differential benefit. Credentials that convey competence therefore appear to improve non-White-male applicants' access to routine-intensive jobs but do not narrow gaps in the analytically and interpersonally demanding roles where discrimination is concentrated.

\subsection{Decomposing the Mechanism} \label{section:mechanism}

  \noindent The credential experiment shows that verifiable information narrows non-White-male gaps where evaluation is structured. We next decompose the task gradient into subjective and objective channels to identify where structured evaluation appears in the task space. Table \ref{table:mechanism-contact} reports estimates from regressions of callbacks on race--gender indicators interacted with the subjective noise proxy ($\hat{B}$, constructed from analytical and interpersonal task intensity) and the objective precision proxy ($\hat{P}$, constructed from routine cognitive task intensity). All task components are standardized to mean zero and unit variance, making the coefficients directly comparable.\footnote{This standardization applies to the composites used in the decomposition regressions, where it makes the coefficients directly comparable. The discretion index instead uses the same underlying composites on their natural positive scale, constructed from min-max normalized task measures plus a constant (Appendix \ref{appendix:functional_forms}). Both are strictly positive in the sample ($\hat{B}_j \in [1.00, 2.72]$, $\hat{P}_j \in [1.00, 2.00]$), so $E_j^\ast = \hat{B}_j/(\hat{B}_j + \hat{P}_j)$ lies in $[0.35, 0.66] \subset [0,1)$, as required by the theory. The two applications use monotone transformations of the same task composites for distinct empirical purposes.} Columns 1--3 report full-sample estimates, while columns 4 and 5 split the sample by contact intensity (bottom and top 40 percent, respectively). We present a pooled non-White-male specification (Panel A) and group-specific estimates (Panel B) that relax the common-gradient assumption. The pooled specification provides greater power for the main $\hat{B} - \hat{P}$ test and matches the pooled specification used in the credential analysis in Section \ref{section:credentials}.\footnote{The pooled indicator includes all five non-White-male groups, including White women, here and in the pooled specifications of Sections \ref{section:credentials} and \ref{section:screening}. Appendix Table \ref{table:pool-robustness} re-estimates the pooled mechanism, credential, and screening results with White women excluded, so the pool contains Black and Hispanic applicants only. The point estimates are essentially unchanged, the credential joint test strengthens, and the screening interactions retain about 95 percent of their magnitude with wider confidence intervals from the smaller sample.}

  Panel A pools the five non-reference groups. The clearest pattern operates through objective evaluability. The routine-cognitive proxy $\hat{P}$ is positive and significant at the 10 percent level, indicating that callback gaps are smaller where tasks can be screened against verifiable criteria. The subjective proxy $\hat{B}$ is negative, as the model predicts, but not individually distinguishable from zero. This imprecision is consistent with subjective evaluation varying idiosyncratically across employers rather than systematically with occupation-level task content. The subjective proxy is also built from less precisely measured constructs, so it carries more measurement error and is subject to greater attenuation than the routine proxy. The within-group difference $\hat{B} - \hat{P}$ is $-0.012$ and significant at the 5 percent level. A generic screening-difficulty account, under which all forms of evaluation difficulty widen gaps, would predict a systematic subjective effect that the data do not show.

  Panel B reports group-specific estimates. The $\hat{P}$ coefficients are positive for all five non-reference groups and jointly significant at the 1 percent level, indicating that the objective-evaluability channel is present across groups. The $\hat{B}$ coefficients are negative for four of the five groups but mostly imprecise, as the measurement-error reading above predicts. The within-group difference $\hat{B} - \hat{P}$ is negative for all five groups, with two individually significant at conventional levels.

  Columns 4 and 5 test Proposition \ref{prop:contact_main}'s prediction that contact amplifies the discretion channel. The difference $\hat{B} - \hat{P}$ is close to zero in low-contact jobs but $-0.043$ and significant at the 1 percent level in high-contact jobs.\footnote{We test whether $(\hat{B} - \hat{P})_{\text{low}} = (\hat{B} - \hat{P})_{\text{high}}$ using a three-way interaction model (group $\times$ task $\times$ contact) estimated on the combined low/high sample. The joint test across all five non-reference groups is marginally significant at conventional levels. It loses significance when group-specific callback gaps are allowed to vary by major occupation group, since high-contact jobs concentrate in sales and office occupations and the standard errors roughly double. We therefore read the contact pattern as descriptive support for the amplification prediction rather than an independent test.} In Panel B, all five groups exhibit more negative $\hat{B} - \hat{P}$ differences in high-contact than in low-contact jobs, consistent with contact amplifying discrimination. However, the design does not separate the subjective-evaluation channel from customer prejudice. The high-contact difference is larger for White women ($-0.049$) than for the other groups, indicating that the contact pattern is partly a gender result. Customer prejudice can operate over gender as much as race, so a large White-women effect in customer-facing jobs is consistent with a customer-preference channel rather than against one. We therefore do not treat the contact result as separating evaluative discretion from customer prejudice.

  The magnitudes are large relative to the estimated credential returns. A one-standard-deviation increase in $\hat{P}$ (objective precision) raises callbacks for White women by 1.9 percentage points, more than the return to any r\'{e}sum\'{e} credential in our design. Moving from a low-discretion to a high-discretion job, corresponding to roughly a two-standard-deviation shift in $\hat{B} - \hat{P}$, is associated with a widening of the pooled non-White-male callback gap of about 2.4 percentage points, similar in magnitude to the overall callback gaps reported in Table \ref{table:discrim-occ}. Both figures rest on inference clustered at the job-ad level, an approach that the occupation-level checks below qualify.

  Because the task moderators vary at the occupation level, occupation-level inference warrants separate attention. Re-estimating the decomposition with clustering by the 175 six-digit occupations leaves the standard error on the pooled $\hat{B} - \hat{P}$ estimate unchanged (0.0063, $p = 0.051$), but the conventional cluster-robust calculation may overstate the information in the cross-occupation variation. Partialling the $\hat{B} - \hat{P}$ regressor with respect to the controls and the job-ad fixed effects and measuring each occupation's share of the remaining variation, in the spirit of \citet{carter_asymptotic_2017}, gives an effective cluster count of roughly 11 rather than 175. The five largest occupations contribute 52 percent of the identifying variation. A wild cluster bootstrap at the occupation level, which is robust to this concentration, yields $p = 0.15$.\footnote{Bootstrap $p$-values impose the null and use 9,999 replications \citep{cameron_bootstrap-based_2008, roodman_fast_2019}, with Rademacher and Webb weights agreeing throughout (0.149 and 0.144 here). The same calculation applied to the screening-prevalence interaction of Section \ref{section:screening} gives an effective cluster count of roughly 4, with 68 percent of the identifying variation in the five largest occupations, and bootstrap $p$-values of 0.15 and 0.14 under the two weighting schemes. The credential triple interactions of Section \ref{section:credentials} moderate a within-ad-randomized credential effect by the occupation-level discretion split, so they too draw on cross-occupation variation. Clustered at the occupation level, they survive the same bootstrap (joint $p = 0.082$, with the two largest individual triples at $p = 0.029$ and $p = 0.054$). Dropping one occupation at a time leaves the pooled $\hat{B} - \hat{P}$ estimate negative in all 175 leave-one-out re-estimates, so no single occupation drives the pattern.} We therefore read the cross-occupation decomposition as a sign-consistent pattern whose apparent precision is strongest under the job-ad-level design. We place primary weight on the credential experiment of Section \ref{section:credentials}, whose credential effects are randomized within the ad, though the comparison across discretion regimes there also draws on occupation-level variation.

  A horse race against direct measures of job quality gives the cross-occupation pattern additional empirical content. Occupations with high $\hat{B} - \hat{P}$ are also better-paid and more credentialed (across advertisements, the correlations with log median wage, the college share of employment, and the O*NET job zone range from 0.54 to 0.61), so the gradient could in principle reflect any account in which discrimination rises with job quality rather than evaluative discretion. Appendix Table \ref{table:complexity-horserace} interacts the pooled non-White-male indicator simultaneously with $\hat{B} - \hat{P}$ and with each quality measure, constructed from the American Community Survey and O*NET. The discretion contrast is slightly larger with all three quality interactions included ($-0.014$), while the quality interactions themselves are individually small with mixed signs. Conditioning on the quality measures also absorbs cross-occupation variation unrelated to the mechanism. The occupation-level wild cluster bootstrap $p$-value for the conditional difference is 0.036, compared with 0.15 for the unconditional comparison on the full sample and 0.08 on the common sample with all quality measures present.\footnote{Because the quality measures are missing for six occupations, the conditional columns of Table \ref{table:complexity-horserace} use a slightly smaller sample than the baseline. On the common sample, the unconditional bootstrap $p$-value is 0.08 and the conditional one is 0.036, so the improvement is not an artifact of the sample change. A permutation placebo indicates it is not an artifact of conditioning either. Reassigning the occupation-level discretion profile at random across occupations and re-running the conditional specification, the analytic occupation-clustered $p$-value clears the 5 percent level in 13 percent of 200 placebo draws and falls below the observed analytic value of 0.005 in 4 percent. These figures use the analytic $p$-value, not the wild cluster bootstrap $p$-value of 0.036 reported above. A null contrast therefore does not inherit the conditional significance that the genuine discretion signal shows.} Comparing occupations at similar wage and education levels, these dimensions of job quality do not account for the discretion contrast, and the conditional comparison survives occupation-level inference. The unconditional comparison does not, so we retain the evidentiary weighting described above. The credential experiment carries the mechanism evidence, and the cross-occupation decomposition maps where the pattern appears.

\subsection{Screening Instruments in the Advertisement Text} \label{section:screening}

  \noindent The advertisement text provides an independent measure of the screening environment that does not rely on the O*NET task taxonomy. We searched each posting for references to a verifiable screening instrument, using an \emph{a priori} dictionary of pre-employment tests and assessments, certifications and licenses, background and drug checks, and stated GPA requirements. About 29 percent of advertisements mention at least one instrument, usually a certification or license (Table \ref{table:screening-instruments}). Routine cognitive intensity is the only task dimension significantly associated with their prevalence. Management, where callback gaps are concentrated, names them less often than the other major occupation groups. We use the share of an occupation's advertisements that mention at least one instrument as a proxy for the objective-screening environment. This proxy is only modestly correlated with the discretion index $E_j^\ast$ (correlation $-0.22$ across advertisements), so the screening measure is not simply a restatement of the task-based decomposition. Inspection of the individual matches shows that the dictionary rarely identifies a credential applicants must hold when applying. Most matches describe licenses obtained after starting, often with employer sponsorship, or certifications listed as preferred. The prevalence measure therefore captures the extent to which an occupation's skill standard is defined by a verifiable third-party credential rather than whether the experimental applicants already possess that credential.\footnote{Among advertisements with a certification or license mention, 32 percent describe licensure obtainable after hire (insurance, securities, and mortgage licensing dominate), 17 percent name professional certifications such as the CPA, and 17 percent contain licensed-status language. In addition, 23 percent match a driver's license and 13 percent firm self-description. These categories are not mutually exclusive. Column 3 of Table \ref{table:screening-robustness} shows that the substantive credential mentions account for the screening result, while driver's licenses and other non-substantive matches show no relationship with gaps. This reading matches the design. We applied to no position that required a license in hand (Section \ref{section:design}), so the prevalence measure reflects the occupation's credentialing environment rather than an application-stage requirement imposed on the fictitious applicants.}

  Callback gaps shrink where advertisements more frequently announce standardized screening. Columns 3 and 4 of Table \ref{table:screening-gaps} interact the pooled non-White-male indicator with continuous instrument prevalence. Each one-standard-deviation increase in prevalence narrows the pooled gap by 0.9 percentage points. The predicted gap falls from 0.8 percentage points at mean prevalence to essentially zero one standard deviation above the mean. The interaction is statistically significant with standard errors clustered by job advertisements. It is nearly unchanged when the non-White-male indicator is also interacted with the task intensities, indicating that the screening environment is associated with callback gaps beyond occupation-level task content.

  Columns 1 and 2 illustrate the same pattern using a median split. The pooled non-White-male gap is 1.6 percentage points in low-screening occupations, equivalent to an 8 percent reduction relative to the White-male callback rate in that half of the sample. In high-screening occupations the gap is statistically indistinguishable from zero, and the confidence interval rules out gaps larger than about one percentage point. The difference across the two groups is statistically significant with clustering by job advertisements but only marginally so with clustering by occupations, and it loses precision when the task-intensity interactions are added (column 2). We therefore treat the split as descriptive and base inference on the continuous specification.

  Table \ref{table:screening-robustness} addresses three potential alternative explanations and then examines the sensitivity of the result to occupation-level inference. First, White-male callback rates decline with screening prevalence. The occupation-level correlation is $-0.39$, and callback rates are 0.204 and 0.098 in the low- and high-prevalence halves of the sample, respectively (memo rows, Table \ref{table:screening-gaps}). Because percentage-point gaps tend to be smaller when baseline callback rates are lower, the levels result could arise even if proportional discrimination were constant. A Poisson specification with job-ad fixed effects addresses this concern by expressing the pooled gap as a log callback ratio. The proportional gap also narrows with prevalence, by 0.07 log points per standard deviation, and the estimate remains significant whether standard errors are clustered by job advertisement or by occupation. This result weighs against an explanation based solely on lower baseline callback rates. The levels interaction also remains similar when the model includes an interaction with the occupation's White-male callback rate computed after excluding the focal advertisement. The base-rate interaction itself is small and imprecisely estimated. 
  
  Second, dividing the certification and license mentions by content shows the result is concentrated among substantive credentials. Mentions of driver's licenses, firm self-description and other incidental matches show no relationship with callback gaps. Third, excluding the 406 advertisements flagged in the advertisement-text audit as potentially requiring a license in hand produces slightly larger estimates in both the levels and proportional specifications. 
  
  The occupation-level inference caution discussed in Section \ref{section:mechanism} applies with particular force here. The effective number of occupation clusters underlying the prevalence interaction is only about 4 and a wild cluster bootstrap at the occupation level yields $p = 0.15$. The precision claims therefore rest on clustering at the job-ad level, which matches the level of randomization. We therefore interpret the screening evidence as directionally consistent with the within-ad credential experiment, but not as independently decisive evidence of the mechanism.

  Firm-level studies of test adoption complement our advertisement-based evidence by examining how employers use screening instruments after adoption. \citet{autor_does_2008} find that a national retail chain's standardized job test raised the tenure of its hires without reducing the minority share of hires, and \citet{hoffman_discretion_2018} find that hires made against a job test's recommendation had shorter tenures. Those studies observe how firms use screening instruments after adoption, while our measure captures the instruments an occupation's advertisements announce to applicants. These announcements define the screening environment visible to applicants, and callback gaps are narrower in occupations where they are more prevalent.

\section{Conclusion}

  \noindent Hiring discrimination varies systematically with the task content of jobs. In our r\'{e}sum\'{e} audit of new college graduates, callback gaps relative to White men concentrate in analytically and interpersonally demanding jobs, where employers must assess fit using the r\'{e}sum\'{e}, and shrink in routine jobs that can be screened against verifiable criteria. Randomly assigned credentials that raise callback rates, which provide our central test of the mechanism, narrow those gaps only where evaluation is structured. 

  We develop an evaluative-discretion framework in which discrimination narrows when objective criteria constrain the hiring decision and widens when subjective judgment carries greater weight. Our evidence supports the first prediction more sharply than the second, as expected if subjective bias varies idiosyncratically across employers and is harder to measure. The pattern for Hispanic women, who face no significant callback gap in the baseline specifications, is consistent with the model's case in which subjective evaluation introduces little additional noise relative to White men. In a measure constructed from the advertisement text rather than the O*NET task data, callback gaps also shrink where advertisements announce standardized screening instruments. Taken together, the results are difficult to reconcile with a generic explanation in which job complexity alone raises discrimination.

  The design, however, identifies where discrimination concentrates across job types more cleanly than why it does so. Jobs that differ in task content differ along other dimensions, and we cannot experimentally manipulate tasks, so the results are consistent with evaluative discretion without proving it. We submitted the applications in 2016 and 2017, so the estimates predate much of the subsequent expansion of applicant-tracking systems and algorithmic r\'{e}sum\'{e} screening. Such systems may function as standardized screening instruments in the sense of Section \ref{section:screening}. The framework predicts that their spread will narrow callback gaps where they displace subjective r\'{e}sum\'{e} review but will have less effect where evaluation remains highly discretionary.

  The mid-career workforce observed in the ACS is already sorted along the same gradient. Among mid-career college-educated workers in the same occupations as our applicants, the White-male share rises with a job's discretion index while the Black and Hispanic shares fall. Black men are 9.5 percentage points less likely than White men to hold the high-discretion jobs where callback gaps concentrate, and 7.7 percentage points less likely among workers who studied the same fields as our applicants.\footnote{We compute these shares from the 2015--2019 American Community Survey, restricting the sample to employed workers aged 35 to 54 with a bachelor's degree and assigning each the discretion index of the worker's six-digit occupation, within the 175 occupations in our audit. The 9.5-point figure pools all fields of degree, while the 7.7-point figure restricts the sample to the eight fields of study randomized across our r\'{e}sum\'{e}s.} We do not read this sorting as the causal consequence of callback discrimination, since it reflects schooling, preferences, and employer and worker decisions at many stages beyond the first screen. Even so, discrimination that concentrates in the jobs that build task-specific human capital runs in the same direction as the inequality already visible among incumbents. The consequences of discrimination for careers therefore depend not only on its average magnitude but also on where it occurs in the task distribution.

  The gaps concentrate in the higher-paying analytical and interpersonal jobs within the entry-level fields we sample, so the long-run consequences depend on the value of a foregone callback. If these jobs build task-specific human capital, exclusion at the callback stage may compound into later disparities. If foregone callbacks rarely would have produced offers, the downstream employment effects are smaller. \citet{jarosch_statistical_2019} formalize the second case. In their framework, callbacks are an intermediate stage, and subsequent interviews can partly offset initial screening gaps, so callback gaps may overstate differences in eventual job finding. The offer-stage evidence points the other way on average, with discrimination continuing past the callback \citep{quillian_evidence_2020}. The correction may be more limited in our setting if interviews for high-discretion jobs rely on the same forms of subjective judgment as the initial r\'{e}sum\'{e} screen. More generally, the evidence suggests that verifiable information narrows disparities only when it meaningfully constrains the evaluation process.

\bibliographystyle{chicago}
\singlespacing
\bibliography{references}


\section*{Tables}
\addtolength{\textheight}{5pt}
\FloatBarrier 

\begin{table}[H]
    \centering
    \caption{List of Prevalent Detailed Occupations by Major Occupation Group}
    \label{table:occlist-new}
    \input{tables/AER/table3-occlist}
\end{table}

\clearpage
\begin{table}[H]
    \centering
    \caption{Discrimination Overall and By Major Occupation Group}
    \label{table:discrim-occ}
    \input{tables/AER/table4-discrim-occ}
\end{table}

\clearpage
\input{tables/AER/table5-discrim-tasks-kmeans4}

\clearpage
\input{tables/AER/table-credential-attenuation}

\clearpage
\input{tables/AER/table-mechanism-contact}
\clearpage

\begin{table}[H]
    \centering
    \caption{Non-White-Male Callback Gaps by Occupation-Level Screening-Instrument Prevalence}
    \label{table:screening-gaps}
    \begin{adjustbox}{max width=\textwidth}
    \input{tables/AER/table-screening-gaps}
    \end{adjustbox}
\end{table}
\clearpage



\section*{Figures}

\begin{figure}[H]
\caption{Applications by Commuting Zone}
\label{fig:app-maps}
\begin{center}
\includegraphics[width=1\linewidth]{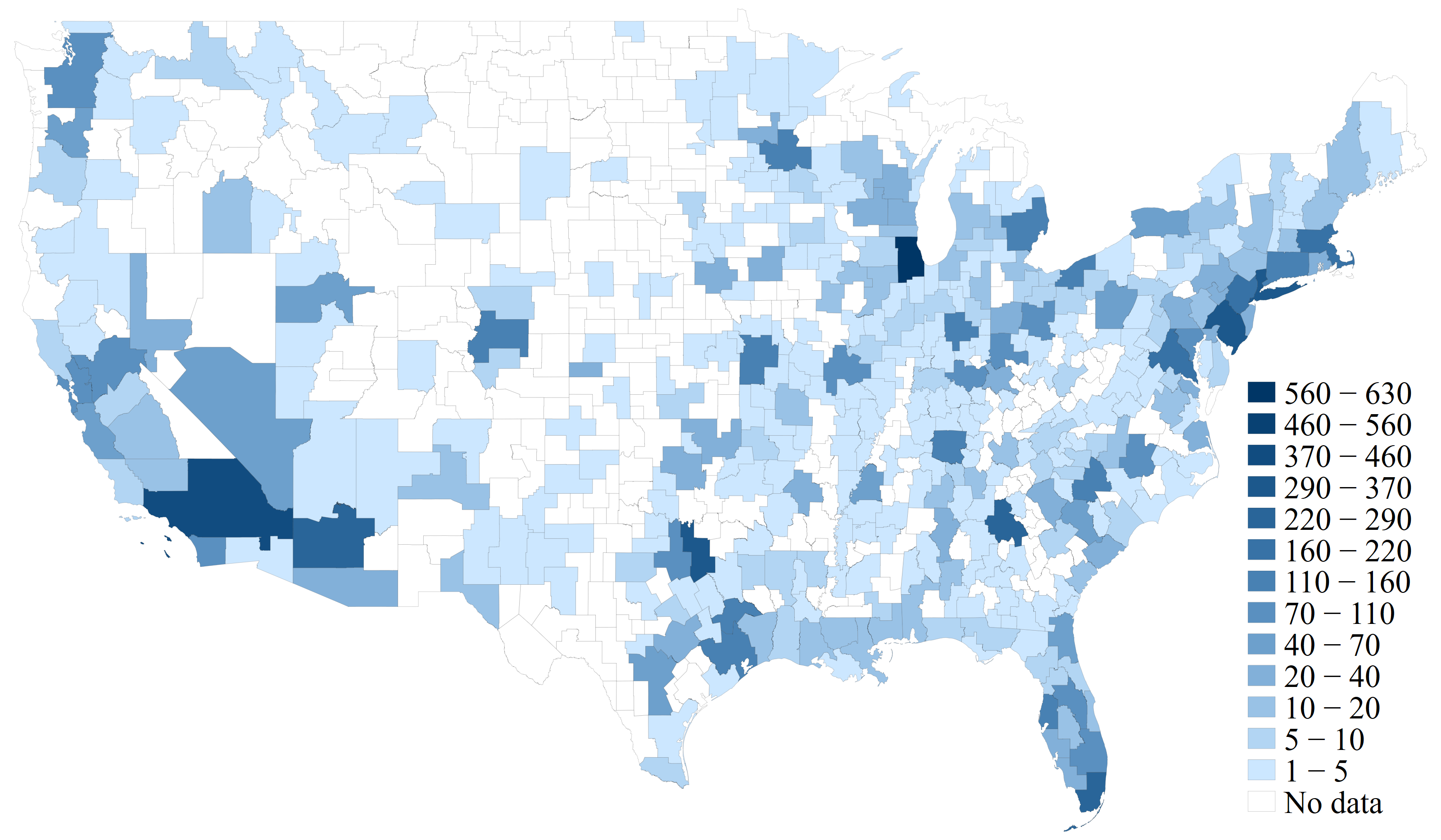}
\end{center}
\scriptsize{\parbox{1\linewidth}{\emph{Notes:} The figure provides a heat map of the number of unique advertisements across commuting zones in the continental United States. We use the advertising firm's geographic location, which is extracted from the ad text, to identify the 1990 commuting zone in which the firm is located.}}
\end{figure}

\clearpage
\begin{figure}[H]
\caption{Distributional Comparisons of Tasks, Occupations in the Audit and All Occupations in the ACS}
\label{fig:task-distributions}
\begin{center}
\includegraphics[width=1\linewidth]{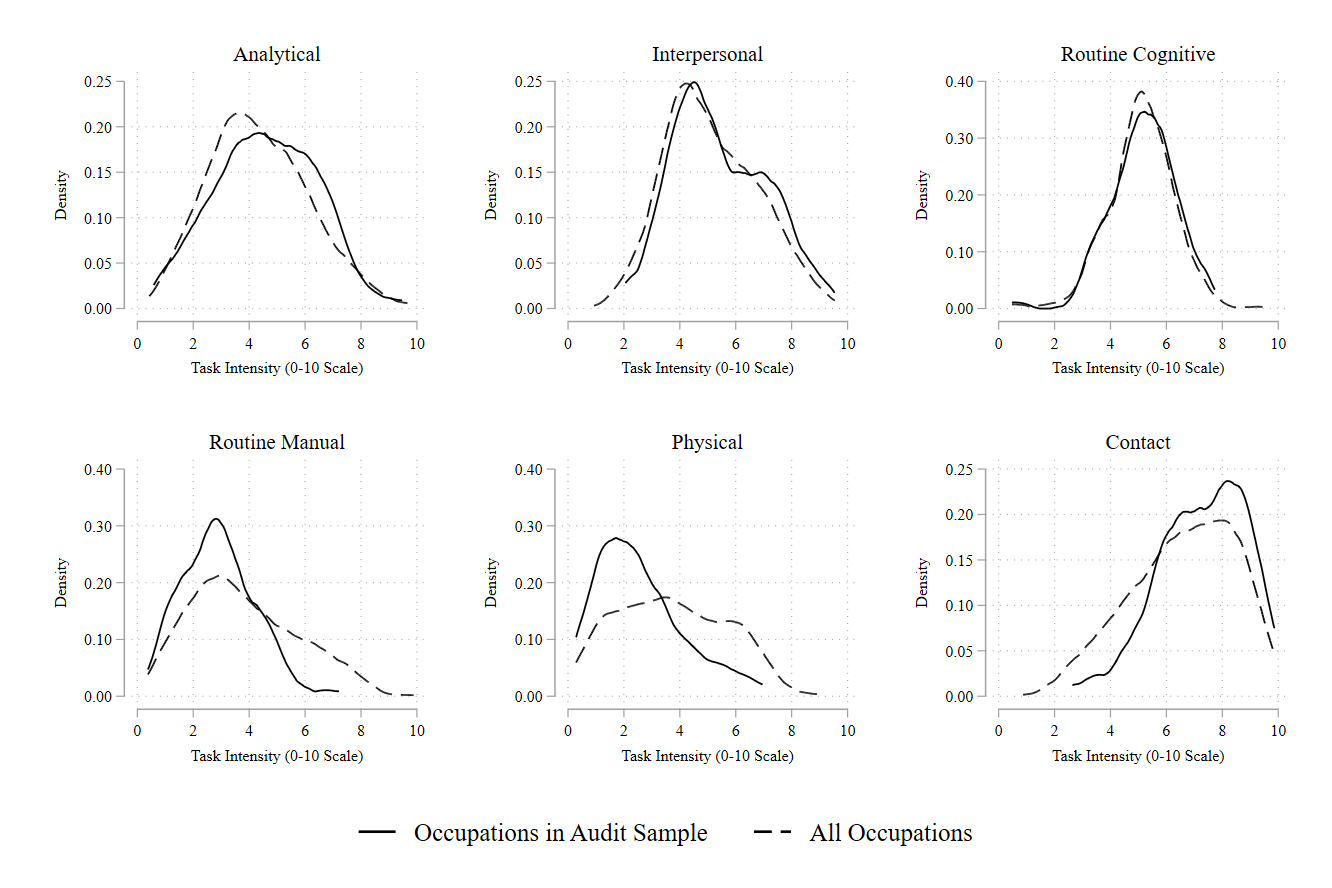}
\end{center}
\scriptsize{\parbox{1\linewidth}{\emph{Notes:} The figure shows kernel density estimates for the occupations in the audit sample (solid line) as well as all occupations in the ACS (dashed line). The estimates are based on a sample of 21--26 year-old college graduates who are employed over the 2015--2018 period from the ACS. This sample is used to compute employment-weighted averages for the task intensity variables by detailed occupation.}}
\end{figure}

\clearpage
\FloatBarrier
\appendix
\section{Appendix Tables}
\renewcommand{\thetable}{A\arabic{table}}  
\renewcommand{\thefigure}{A\arabic{figure}}  
\setcounter{table}{0}  
\setcounter{figure}{0}  

\begin{table}[htbp]
    \centering
    \caption{Correlation Coefficients Between Race/Ethnicity-Gender Indicators and R\'{e}sum\'{e} Attributes}
    \label{table:corr-coeff}
    \input{tables/AER/z-table2-corr-resumeXs}
\end{table}

\clearpage
\input{tables/AER/table-ad-text-mechanism-comparison}

\clearpage
\begin{table}[htbp]
    \centering
    \caption{Discrimination by Major Occupation Group, Adding Occupation-Group by R\'{e}sum\'{e}-Characteristic Interactions}
    \label{table:robust-gxx}
    \input{tables/AER/table-robust-group-x-resume}
\end{table}

\clearpage
\begin{table}[htbp]
    \centering
    \caption{Employer Demand for Judgment-Based Attributes by Task Content}
    \label{table:judgment-demand}
    \input{tables/AER/table-judgment-demand}
\end{table}

\clearpage
\begin{table}[htbp]
    \centering
    \caption{Objective Screening Instruments in Advertisement Text by Task Content}
    \label{table:screening-instruments}
    \input{tables/AER/table-screening-instruments}
\end{table}

\clearpage
\begin{table}[htbp]
    \centering
    \caption{Robustness of Callback Gaps by Screening Environment}
    \label{table:screening-robustness}
    \begin{adjustbox}{max width=\textwidth}
    \input{tables/AER/table-screening-robustness}
    \end{adjustbox}
\end{table}

\clearpage
\begin{table}[htbp]
    \centering
    \caption{Pooled Results With White Women Excluded from the Pool}
    \label{table:pool-robustness}
    \input{tables/AER/table-pool-robustness}
\end{table}

\clearpage
\begin{table}[htbp]
    \centering
    \caption{Discrimination by Major Occupation Group, Poisson Estimates of Proportional Gaps}
    \label{table:occgroup-poisson}
    \begin{adjustbox}{max width=\textwidth}
    \input{tables/AER/table-occgroup-poisson}
    \end{adjustbox}
\end{table}

\clearpage
\input{tables/AER/table5-discrim-tasks-kmeans4-beststart}

\clearpage
\begin{table}[htbp]
    \centering
    \caption{Evaluative Discretion Versus Direct Measures of Job Quality}
    \label{table:complexity-horserace}
    \begin{adjustbox}{max width=\textwidth}
    \input{tables/AER/table-complexity-horserace}
    \end{adjustbox}
\end{table}

\clearpage
\input{tables/AER/table-wm-baserates}


\clearpage
\FloatBarrier

\begin{spacing}{1.5}
\section{Theoretical Appendix}
\label{appendix:theory}

\noindent This appendix provides the formal model and derivations underlying the theoretical framework in Section \ref{section:theory}. It is self-contained and can be read independently of the main text.

\subsection{Environment and Functional Forms}
\label{appendix:functional_forms}

\paragraph{Signal structure.}
Workers from two demographic groups $g \in \{M, m\}$ (majority, minority) apply to jobs. Productivity $\theta \sim N(0,1)$ is identically distributed across groups. For each applicant, firms observe a subjective screening signal $s^s$ and an objective screening signal $s^o$, conditionally independent given productivity. The signals are
\[
s^{s}_{ij} = \theta_i + \varepsilon^{s}_{ij}, \qquad s^{o}_{ij} = \theta_i + \varepsilon^{o}_{ij}.
\]
Noise terms are independent of $\theta$ and of each other. For majority applicants, $\varepsilon^{s}_{M} \sim N(0,B_j)$. For group-$g$ minority applicants, $\varepsilon^{s}_{g} \sim N(0, B_j/\pi_g)$ with $\pi_g \in (0,1)$, so subjective evaluation is less precise for minorities. The objective signal has equal precision across groups, with $\varepsilon^{o} \sim N(0,U_j)$.

\paragraph{Signal noise and objective precision.}
We parameterize subjective noise and objective precision as
\begin{equation}
\label{eq:BU_param}
    B_j = 1 + \alpha a_j + \beta p_j, \qquad P_j = 1 + \delta r_j,
\end{equation}
with $\alpha, \beta, \delta > 0$. The objective noise variance is $U_j = 1/P_j$. Our results require only these sign conditions. The linear forms are for concreteness. The employer places weight $E_j^\ast \in [0,1)$ on the subjective signal, with $E_j^\ast$ increasing in analytical and interpersonal intensity ($a_j, p_j$) and decreasing in routine intensity ($r_j$). We construct its empirical counterpart from the rescaled, strictly positive task composites $\hat{B}_j$ and $\hat{P}_j$,
\begin{equation}
\label{eq:Estar_appendix}
    E_j^\ast = \frac{\hat{B}_j}{\hat{B}_j + \hat{P}_j} \in [0, 1),
\end{equation}
a dimensionless monotone index of the job's reliance on subjective assessment. Because $\hat{B}_j$ and $\hat{P}_j$ are rescaled to a common positive scale, the index is a proper fraction rather than a ratio of a variance to a precision.

\paragraph{Group-level noise differential.}
Define the group-specific noise penalty
\begin{equation}
\label{eq:Delta_def}
    \Delta_g \;\equiv\; \frac{1}{\pi_g} - 1 \;>\; 0.
\end{equation}
A smaller $\pi_g$ implies a larger $\Delta_g$.

\subsection{Composite Evaluation and Variance Gap}
\label{appendix:discrim}

\noindent We model employers as placing weight $E_j^\ast$ on the subjective signal and weight $1 - E_j^\ast$ on the objective signal, so that task demands rather than signal precision govern the combination (see Section \ref{section:theory} for motivation). The composite signal for a group-$g$ applicant is
\[
    \hat{s}_{gj} = E_j^\ast \, s^s_{gj} + (1 - E_j^\ast) \, s^o_{gj},
\]
with variance
\[
    V_{gj} = {E_j^\ast}^2 \, \tau^2_{gj} + (1 - E_j^\ast)^2 \, U_j,
\]
where $\tau^2_{Mj} = B_j$ and $\tau^2_{gj} = B_j / \pi_g$.

\begin{lemma}[Variance Gap]
\label{lem:var_gap}
The composite evaluation variance gap between group-$g$ minority and majority applicants is
\[
    V_{gj} - V_{Mj} \;=\; {E_j^\ast}^2 \, B_j \, \Delta_g.
\]
This gap is positive for $E_j^\ast > 0$ and increasing in $E_j^\ast$.
\end{lemma}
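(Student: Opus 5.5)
The plan is to compute the composite variance for each group directly from the stated signal structure, then take the difference. The composite is $\hat{s}_{gj} = E_j^\ast s^s_{gj} + (1-E_j^\ast)s^o_{gj}$. I would read ``evaluation variance'' as the variance of the composite's noise around $\theta$, namely $\hat{s}_{gj}-\theta = E_j^\ast\varepsilon^s_{gj} + (1-E_j^\ast)\varepsilon^o_{gj}$. This is the reading under which the displayed formula $V_{gj} = {E_j^\ast}^2\tau^2_{gj} + (1-E_j^\ast)^2U_j$ holds, since the weights sum to one and $\theta$ passes through with coefficient one. The two noise terms are independent of each other and of $\theta$, so there is no cross term and the variance is the weighted sum of the noise variances.

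Next I would substitute $\tau^2_{Mj}=B_j$ and $\tau^2_{gj}=B_j/\pi_g$. The key observation is that the objective term $(1-E_j^\ast)^2U_j$ is identical across groups, because $E_j^\ast$ is a job-level weight that does not depend on $g$ (this is the behavioral weighting assumption) and $U_j$ is common to both groups. That term therefore cancels in the difference, leaving
\[
V_{gj}-V_{Mj} = {E_j^\ast}^2 B_j\Bigl(\tfrac{1}{\pi_g}-1\Bigr) = {E_j^\ast}^2 B_j\Delta_g ,
\]
using the definition of $\Delta_g$ in equation \eqref{eq:Delta_def}.

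For the sign and monotonicity claims, I would use $B_j>0$ from \eqref{eq:BU_param} and $\Delta_g>0$ because $\pi_g\in(0,1)$. Then ${E_j^\ast}^2B_j\Delta_g>0$ whenever $E_j^\ast>0$. The derivative with respect to $E_j^\ast$, holding $B_j$ and $\Delta_g$ fixed, is $2E_j^\ast B_j\Delta_g$, which is positive on $(0,1)$, so the gap is strictly increasing there.

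The only delicate point is interpretive rather than computational. Under the empirical mapping, $E_j^\ast$ and $B_j$ both move with the task intensities, so ``increasing in $E_j^\ast$'' should be stated as a partial statement with $B_j$ held fixed. I would add a remark that when $E_j^\ast$ rises because analytical or interpersonal intensity rises, $B_j$ also weakly rises, which only reinforces the conclusion. When $E_j^\ast$ rises because routine intensity falls, $B_j$ is unchanged and the partial derivative applies directly. Either way, the gap moves in the same direction as $E_j^\ast$.
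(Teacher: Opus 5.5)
Your proposal is correct and follows the same argument as the paper. The objective term $(1-E_j^\ast)^2U_j$ cancels because the weight and $U_j$ are common to both groups, leaving ${E_j^\ast}^2 B_j \Delta_g$. Positivity and monotonicity then follow from $B_j,\Delta_g>0$. Your closing remark on how $B_j$ and $E_j^\ast$ move together plays the same role as the paper's added computation that the gap is also increasing in $B_j$ when $E_j^\ast = B_j/(B_j+P_j)$.
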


\begin{proof}
\begin{align*}
    V_{gj} - V_{Mj} &= {E_j^\ast}^2 \left(\frac{B_j}{\pi_g} - B_j\right) \\
    &= {E_j^\ast}^2 \, B_j \left(\frac{1}{\pi_g} - 1\right) \\
    &= {E_j^\ast}^2 \, B_j \, \Delta_g.
\end{align*}
For $E_j^\ast > 0$, with $B_j, \Delta_g > 0$, the gap is positive. It is increasing in $E_j^\ast$ by inspection. The gap is also increasing in $B_j$, since $\partial({E_j^\ast}^2 B_j)/\partial B_j = {E_j^\ast}^2 + 2 E_j^\ast B_j P_j/(B_j+P_j)^2 > 0$.
\end{proof}

\subsection{Callbacks Decrease in Evaluation Noise}
\label{appendix:callback}

\begin{lemma}[Callback Monotonicity]
\label{lem:callback}
Suppose employers call back applicants when posterior expected productivity exceeds a threshold $\bar{\theta} > 0$. With Gaussian signals and prior $\theta \sim N(0,1)$, the callback probability is strictly decreasing in the composite evaluation noise variance $V_{gj}$.
\end{lemma}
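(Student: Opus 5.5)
The plan is to reduce the employer's information to the single composite signal of Appendix \ref{appendix:discrim} and then compute the callback probability in closed form. Because the weights $E_j^\ast$ and $1-E_j^\ast$ sum to one, the composite can be written as
\[
\hat{s}_{gj} = \theta + e_{gj}, \qquad e_{gj} \equiv E_j^\ast \varepsilon^s_{gj} + (1-E_j^\ast)\varepsilon^o_{gj}.
\]
The noise terms are mean-zero Gaussians, independent of $\theta$ and of each other. Hence $e_{gj} \sim N(0, V_{gj})$, with $V_{gj}$ exactly the composite variance used in Lemma \ref{lem:var_gap}, and $e_{gj}$ is independent of $\theta$. Consistent with the behavioral weighting assumption, I will treat the employer as forming its posterior from $\hat{s}_{gj}$ alone, with a known noise variance $V_{gj}$ for each group. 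This is the standard Gaussian signal-extraction setup.

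\emph{Step 1 (posterior mean).} With prior $\theta \sim N(0,1)$ and $\hat{s} = \theta + e$, $e \sim N(0,V)$, normal--normal updating gives
\[
E[\theta \mid \hat{s}] = \frac{\hat{s}}{1+V}.
\]
The callback rule $E[\theta\mid\hat{s}] > \bar\theta$ is therefore equivalent to $\hat{s} > \bar\theta(1+V)$.

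\emph{Step 2 (marginal of the signal).} Unconditionally, $\hat{s} \sim N(0, 1+V)$, since $\theta$ and $e$ are independent Gaussians. Standardizing the threshold gives
\[
\Pr(\text{callback}) = 1 - \Phi\!\left(\frac{\bar\theta(1+V)}{\sqrt{1+V}}\right) = 1 - \Phi\!\left(\bar\theta\sqrt{1+V}\right).
\]

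\emph{Step 3 (monotonicity).} Because $\bar\theta > 0$, the argument $\bar\theta\sqrt{1+V}$ is strictly increasing in $V$. Since $\Phi$ is strictly increasing, the callback probability is strictly decreasing in $V$. Differentiating confirms this:
\[
\frac{\partial \Pr(\text{callback})}{\partial V} = -\phi\!\left(\bar\theta\sqrt{1+V}\right)\frac{\bar\theta}{2\sqrt{1+V}} < 0 .
\]
Combined with Lemma \ref{lem:var_gap}, this yields lower callbacks for minority applicants whenever $E_j^\ast > 0$.

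The main point needing care is the role of the sign of $\bar\theta$, together with the modelling choice in Step 1. If $\bar\theta < 0$ (a non-selective bar), the same formula shows callbacks \emph{increase} with noise, so the hypothesis $\bar\theta > 0$ is essential and should be flagged explicitly. I would also note that the result depends on the employer conditioning on the composite rather than on both signals separately. Under full Bayesian use of both signals, the posterior variance would differ, although the same computation, with the posterior-mean variance replacing $1/(1+V)$, delivers the analogous monotonicity.
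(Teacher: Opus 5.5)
Your proposal is correct and follows the paper's proof essentially step for step. It uses the posterior mean $\hat{s}_{gj}/(1+V_{gj})$, the equivalent threshold $\hat{s}_{gj} > \bar{\theta}(1+V_{gj})$, and the marginal $N(0,1+V_{gj})$ to reach $1-\Phi(\bar{\theta}\sqrt{1+V_{gj}})$, which is strictly decreasing because $\bar{\theta}>0$; your explicit derivative is the same Taylor slope the paper uses in equation \eqref{eq:taylor_callback}, and your caveats on the sign of $\bar{\theta}$ and on conditioning on the composite echo the paper's own footnote on the behavioral-weights assumption.
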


\begin{proof}
The composite assessment $\hat{s}_{gj} = \theta_i + \text{noise}$ has unconditional distribution $N(0, 1 + V_{gj})$. The posterior mean given the composite signal is $E[\theta \mid \hat{s}_{gj}] = \hat{s}_{gj}/(1 + V_{gj})$. The employer calls back when $E[\theta \mid \hat{s}_{gj}] > \bar{\theta}$, equivalently when $\hat{s}_{gj} > \bar{\theta}(1 + V_{gj})$. The unconditional callback probability is therefore
\[
    c_{gj} \;=\; 1 - \Phi\!\left(\bar{\theta}\sqrt{1 + V_{gj}}\right),
\]
which is strictly decreasing in $V_{gj}$ since $\bar{\theta} > 0$.\footnote{The behavioral-weights assumption governs how the employer \emph{combines} subjective and objective signals into $\hat{s}_{gj}$. The callback decision still uses the posterior mean, reflecting the employer's recognition that the composite assessment is noisy. This is a weaker rationality requirement than full Bayesian signal extraction, which would also adjust the combination weights.}
\end{proof}

Since callbacks are decreasing in $V_{gj}$ and $V_{gj} > V_{Mj}$ for any minority group $g$, the callback gap is
\begin{align}
    D_{gj} \;\equiv\; c_{Mj} - c_{gj} &\approx\; -c'(V_{Mj}) \cdot (V_{gj} - V_{Mj}) \nonumber \\
    &=\; \phi\!\left(\bar{\theta}\sqrt{1+V_{Mj}}\right) \cdot \frac{\bar{\theta}}{2\sqrt{1+V_{Mj}}} \cdot (V_{gj} - V_{Mj}). \label{eq:taylor_callback}
\end{align}
Substituting the variance gap from Lemma \ref{lem:var_gap} yields
\begin{equation}
\label{eq:callback_gap_full}
    D_{gj} \;\approx\; K_{0j} \, {E_j^\ast}^2 \, B_j \, \Delta_g,
\end{equation}
where
\[
    K_{0j} \;=\; \phi\!\left(\bar{\theta}\sqrt{1+V_{Mj}}\right) \cdot \frac{\bar{\theta}}{2\sqrt{1+V_{Mj}}} \;>\; 0.
\]
Because $c(V)$ is convex, $K_{0j}$ is decreasing in $V_{Mj}$, and the first-order term overstates the exact gap when the variance gap is large. The approximation is accurate for comparisons across jobs with similar baseline evaluation variance and modest variance gaps.

\subsection{Monotonic Representation}
\label{appendix:sufficient_stats}

\noindent The model contains parameters ($B_j$, $\Delta_g$, $\bar{\theta}$) that a correspondence audit cannot separately identify. We isolate a single focal object governing the testable predictions and collect the remaining parameters into a scale factor whose sign is unambiguous.

Equation \eqref{eq:callback_gap_full} shows that the callback gap depends jointly on $E_j^\ast$, $B_j$, and $\Delta_g$. Because $E_j^\ast = B_j/(B_j + P_j)$, these are not independently varying. Define the composite scale factor
\[
    K_{gj} \;\equiv\; K_{0j} \, E_j^\ast \, B_j \, \Delta_g \;>\; 0.
\]
The callback gap can then be written as
\begin{equation}
\label{eq:callback_sufficient}
    D_{gj} \;\approx\; K_{gj} \, E_j^\ast,
\end{equation}
which is equation \eqref{eq:main_result} in the main text. The scale factor $K_{gj}$ depends on $E_j^\ast$ itself, so this is not a decomposition into independent components. Three properties justify the representation. First, $K_{gj} > 0$ for all feasible parameter values, so the variance gap is unambiguously increasing in $E_j^\ast$. Second, the empirical specifications cannot separately identify $B_j$ from $E_j^\ast$, so the decomposed expression in equation \eqref{eq:callback_gap_full} offers no additional empirical leverage beyond equation \eqref{eq:callback_sufficient}. Third, discrimination vanishes when $E_j^\ast = 0$ and increases with evaluative discretion, which is the central economic content.

\paragraph{Robustness.}
The variance-gap prediction does not depend on the specific weighting rule. Any composite $\hat{s} = f(\omega_j) \, s^s + [1 - f(\omega_j)] \, s^o$ with $f$ strictly increasing and $f(\omega) > 0$ for $\omega > 0$ yields a variance gap that is increasing in $E_j^\ast$.

\subsection{Proof of Proposition \ref{prop:task_main} (Task Bundles)}
\label{appendix:proof_task}

\noindent Under the parameterization \eqref{eq:BU_param}, $B_j = 1 + \alpha a_j + \beta p_j$ and $P_j = 1 + \delta r_j$. From Lemma \ref{lem:var_gap}, the variance gap is
\[
    V_{gj} - V_{Mj} = \Delta_g \frac{B_j^3}{(B_j + P_j)^2}.
\]
The partial derivatives of the variance gap with respect to each task intensity are
\[
    \frac{\partial}{\partial a_j}(V_{gj} - V_{Mj}) = \Delta_g\alpha \frac{B_j^2(B_j + 3P_j)}{(B_j + P_j)^3} > 0,
\]
\[
    \frac{\partial}{\partial p_j}(V_{gj} - V_{Mj}) = \Delta_g\beta \frac{B_j^2(B_j + 3P_j)}{(B_j + P_j)^3} > 0,
\]
\[
    \frac{\partial}{\partial r_j}(V_{gj} - V_{Mj}) = -2\Delta_g\delta \frac{B_j^3}{(B_j + P_j)^3} < 0.
\]
Analytical and interpersonal tasks raise the variance gap. Routine tasks lower it. Management-type occupations have high analytical and interpersonal content ($B_j$ large) and low routine content ($P_j$ small), producing a large variance gap. Office and administrative occupations have lower non-routine content ($B_j$ smaller) and higher routine content ($P_j$ larger), producing a smaller gap.

The callback gap is $D_{gj} \approx -c'(V_{Mj}) \cdot (V_{gj} - V_{Mj})$, where $-c'(V_{Mj}) > 0$ (Appendix \ref{appendix:callback}). For comparisons across jobs over which the local Taylor slope $-c'(V_{Mj})$ is approximately constant, the variance-gap ordering carries through to callback gaps. The prediction is about task \emph{bundles}, not individual tasks in isolation. \qed

\subsection{Proof of Proposition \ref{prop:contact_main} (Contact Complementarity)}
\label{appendix:proof_contact}

\noindent Extend the model to allow contact intensity $k_j$ to amplify subjective noise in non-routine jobs.
\[
    B_j(k) = 1 + (\alpha a_j + \beta p_j) \cdot g(k_j),
\]
where $g\colon \mathbb{R}_+ \to \mathbb{R}_+$ is increasing with $g(0) = 1$.

\textbf{Case 1: Routine jobs ($a_j = p_j = 0$).} Then $B_j(k) = 1$ for all $k_j$, so $\partial B_j / \partial k_j = 0$ and hence $\partial E_j^\ast / \partial k_j = 0$. Contact has no effect on the variance gap or on callback gaps.

\textbf{Case 2: Non-routine jobs ($a_j + p_j > 0$).} Then $\partial B_j / \partial k_j = (\alpha a_j + \beta p_j) g'(k_j) > 0$. Since $E_j^\ast$ is increasing in $B_j$, this raises $E_j^\ast$ and increases the variance gap.

The cross-partials
\begin{align*}
    \frac{\partial^2 B_j}{\partial k_j \, \partial a_j} &= \alpha \, g'(k_j) > 0, \\[4pt]
    \frac{\partial^2 B_j}{\partial k_j \, \partial p_j} &= \beta \, g'(k_j) > 0,
\end{align*}
verify the complementarity. Contact affects the variance gap only in non-routine jobs. Callback gaps inherit this pattern for comparisons over which the local Taylor slope $-c'(V_{Mj})$ is approximately constant. If customer animus operated independently of task content, contact would raise gaps in all jobs, while a customer-prejudice channel that works through non-routine demands would mimic the pattern, so the contrast is suggestive rather than decisive. \qed
\end{spacing}


\end{document}

%% file: tables/AER/table3-occlist.tex
\def\sym#1{\ifmmode^{#1}\else\(^{#1}\)\fi}
\fontsize{14pt}{16pt} \selectfont
\begin{adjustbox}{width=\textwidth}
\begin{threeparttable}
\begin{tabular}{p{12.5cm}ccccccccc}
\hline\hline
\noalign{\vskip 0.4em}
&&&&\multicolumn{6}{c}{Task Intensity Centiles} \\
\noalign{\vskip 0.3em}
\cline{5-10}
\noalign{\vskip 0.3em}
&\multicolumn{1}{c}{\# of} &\multicolumn{1}{c}{Share of }&& & & & \\
&\multicolumn{1}{c}{Unique Ads}&\multicolumn{1}{c}{Total Sample}&&\multicolumn{1}{c}{Analyt.} &\multicolumn{1}{c}{Interp.} &\multicolumn{1}{c}{R. Cog.} &\multicolumn{1}{c}{R. Man.} &\multicolumn{1}{c}{Phys.} &\multicolumn{1}{c}{Contact}  \\
\noalign{\vskip 0.3em}
Detailed Occupation&\multicolumn{1}{c}{(1)}&\multicolumn{1}{c}{(2)}&&\multicolumn{1}{c}{(3)}&\multicolumn{1}{c}{(4)}&\multicolumn{1}{c}{(5)}&\multicolumn{1}{c}{(6)} &\multicolumn{1}{c}{(7)} &\multicolumn{1}{c}{(8)}\\
\noalign{\vskip 0.3em}
\hline
\noalign{\vskip 0.3em}
\multicolumn{10}{l}{\emph{Panel A: Management}} \\
Financial Managers&220&\hspace{0.15cm}2.39\%&&89&92&45&9&7&50 \\
Sales Managers&210&\hspace{0.15cm}2.28\%&&80&99&13&4&22&83 \\
Marketing Managers&208&\hspace{0.15cm}2.26\%&&74&91&9&8&9&77 \\
Advertising and Promotions Managers&\hspace{0.15cm}75&\hspace{0.15cm}0.81\%&&64&67&21&37&23&69 \\
General and Operations Managers&\hspace{0.15cm}39&\hspace{0.15cm}0.42\%&&61&90&20&33&31&83 \\
Personal Service Managers, All Other&\hspace{0.15cm}23&\hspace{0.15cm}0.25\%&&82&92&27&23&28&58 \\
&&&&&&&&& \\
\multicolumn{10}{l}{\emph{Panel B: Business and Financial Operations}} \\
Loan Officers&335&\hspace{0.15cm}3.63\%&&47&51&66&13&9&86 \\
Market Research Analysts and Marketing Specialists&322&\hspace{0.15cm}3.49\%&&92&80&22&16&13&20 \\
Financial and Investment Analysts&289&\hspace{0.15cm}3.13\%&&95&53&51&15&8&32 \\
Accountants and Auditors&284&\hspace{0.15cm}3.08\%&&86&83&87&15&12&42 \\
Claims Adjusters, Examiners, and Investigators&148&\hspace{0.15cm}1.61\%&&50&34&80&45&32&95 \\
Personal Financial Advisors&111&\hspace{0.15cm}1.20\%&&91&90&59&9&8&59 \\
Credit Analysts&107&\hspace{0.15cm}1.16\%&&46&44&68&25&22&45 \\
Management Analysts&\hspace{0.15cm}64&\hspace{0.15cm}0.69\%&&95&98&9&4&3&35 \\
Insurance Underwriters&\hspace{0.15cm}57&\hspace{0.15cm}0.62\%&&74&57&72&28&16&47 \\
Business Operations Specialists, All Other&\hspace{0.15cm}56&\hspace{0.15cm}0.61\%&&80&84&9&9&12&55 \\
Human Resources Specialists&\hspace{0.15cm}37&\hspace{0.15cm}0.40\%&&51&73&23&18&23&47 \\
Compensation, Benefits, and Job Analysis Specialists&\hspace{0.15cm}25&\hspace{0.15cm}0.27\%&&78&60&42&14&15&22 \\
Financial Specialists, All Other&\hspace{0.15cm}25&\hspace{0.15cm}0.27\%&&97&84&71&10&14&21 \\
&&&&&&&&& \\
\multicolumn{10}{l}{\emph{Panel C: Sales}} \\
Sales Representatives of Services, Except Advertising, Insurance, Financial Services, and Travel&686&\hspace{0.15cm}7.44\%&&57&9&42&8&11&92 \\
Insurance Sales Agents&681&\hspace{0.15cm}7.39\%&&52&48&49&8&3&89 \\
Sales Representatives, Wholesale and Manufacturing, Except Technical and Scientific Products&589&\hspace{0.15cm}6.39\%&&49&36&11&9&20&82 \\
First-Line Supervisors of Non-Retail Sales Workers&439&\hspace{0.15cm}4.76\%&&92&100&8&6&9&81 \\
Securities, Commodities, and Financial Services Sales Agents&246&\hspace{0.15cm}2.67\%&&74&46&39&11&14&71 \\
Advertising Sales Agents&223&\hspace{0.15cm}2.42\%&&58&21&21&3&8&100 \\
Retail Salespersons&201&\hspace{0.15cm}2.18\%&&29&39&42&25&34&100 \\
First-Line Supervisors of Retail Sales Workers&100&\hspace{0.15cm}1.08\%&&36&77&23&42&48&96 \\
Demonstrators and Product Promoters&\hspace{0.15cm}61&\hspace{0.15cm}0.66\%&&4&4&5&15&19&39 \\
Telemarketers&\hspace{0.15cm}52&\hspace{0.15cm}0.56\%&&5&6&68&44&25&82 \\
Real Estate Sales Agents&\hspace{0.15cm}25&\hspace{0.15cm}0.27\%&&62&64&7&5&26&91 \\
&&&&&&&&& \\
\multicolumn{10}{l}{\emph{Panel D: Office and Administration}} \\
Customer Service Representatives&978&10.61\%&&57&61&87&40&19&82 \\
Loan Interviewers and Clerks&265&\hspace{0.15cm}2.87\%&&79&47&89&20&5&98 \\
Bookkeeping, Accounting, and Auditing Clerks&255&\hspace{0.15cm}2.77\%&&38&27&97&39&28&36 \\
First-Line Supervisors of Office and Administrative Support Workers&207&\hspace{0.15cm}2.25\%&&66&87&29&36&18&75 \\
Bill and Account Collectors&142&\hspace{0.15cm}1.54\%&&27&20&80&45&25&70 \\
Tellers&114&\hspace{0.15cm}1.24\%&&25&26&99&63&40&93 \\
Secretaries and Administrative Assistants, Except Legal, Medical, and Executive&113&\hspace{0.15cm}1.23\%&&34&37&66&40&27&67 \\
Billing and Posting Clerks&\hspace{0.15cm}63&\hspace{0.15cm}0.68\%&&34&25&93&46&22&62 \\
Insurance Claims and Policy Processing Clerks&\hspace{0.15cm}55&\hspace{0.15cm}0.60\%&&21&14&90&54&19&57 \\
Medical Secretaries and Administrative Assistants&\hspace{0.15cm}50&\hspace{0.15cm}0.54\%&&23&24&84&52&21&88 \\
Executive Secretaries and Executive Administrative Assistants&\hspace{0.15cm}41&\hspace{0.15cm}0.44\%&&43&52&77&26&14&76 \\
Office Clerks, General&\hspace{0.15cm}38&\hspace{0.15cm}0.41\%&&15&33&80&39&28&53 \\
New Accounts Clerks&\hspace{0.15cm}37&\hspace{0.15cm}0.40\%&&48&71&84&46&37&97 \\
Office and Administrative Support Workers, All Other&\hspace{0.15cm}28&\hspace{0.15cm}0.30\%&&59&63&70&69&63&52 \\
Receptionists and Information Clerks&\hspace{0.15cm}25&\hspace{0.15cm}0.27\%&&22&46&86&31&20&99 \\
&&&&&&&&& \\
\multicolumn{10}{l}{\emph{Panel E: Other}} \\
Public Relations Specialists&281&\hspace{0.15cm}3.05\%&&86&94&15&7&4&67 \\
Computer User Support Specialists&\hspace{0.15cm}48&\hspace{0.15cm}0.52\%&&65&52&60&48&41&58 \\
Medical Records Specialists&\hspace{0.15cm}21&\hspace{0.15cm}0.23\%&&41&27&96&42&30&23 \\
&&&&&&&&& \\
\noalign{\vskip 0.5em}
\hline\hline
\noalign{\vskip 0.5em}
        \end{tabular}
    \begin{tablenotes}
    \item \fontsize{10pt}{12pt} \selectfont
\emph{Notes:} The table lists detailed occupations with at least 20 unique ads, the percentage of the total sample, and task-intensity centiles for each occupation. Centiles are computed using 21--26 year-old, college-educated workers from the 2015--2018 ACS aggregated to the detailed occupation level. Panels A--D correspond to major occupation groups (2-digit SOC). Panel E (Other) pools the remaining detailed occupations, which span 17 major occupation groups.
    \end{tablenotes}
    \end{threeparttable}
    \end{adjustbox}

%% file: tables/AER/table4-discrim-occ.tex
\def\sym#1{\ifmmode^{#1}\else\(^{#1}\)\fi}
\fontsize{14pt}{16pt} \selectfont    
\begin{adjustbox}{width=\textwidth}
\begin{threeparttable}
        \begin{tabular}{l*{6}{D{.}{.}{-1}}}
\hline\hline 
&&&&&& \\
&&&\multicolumn{4}{c}{\hspace{0.5cm}Major Occupation Group} \\
\noalign{\vskip 0.5em}
\cline{4-7}
\noalign{\vskip 0.5em}
&&&&\multicolumn{1}{c}{\hspace{0.5cm}Business and} &&\multicolumn{1}{c}{\hspace{0.5cm}Office and} \\
&\multicolumn{1}{c}{\hspace{0.5cm}Overall}&&\multicolumn{1}{c}{\hspace{0.5cm}Management}&\multicolumn{1}{c}{\hspace{0.5cm}Financial}&\multicolumn{1}{c}{\hspace{0.5cm}Sales}&\multicolumn{1}{c}{\hspace{0.5cm}Administration}  \\
\noalign{\vskip 0.75em}
&\multicolumn{1}{c}{\hspace{0.5cm}(1)}&&\multicolumn{1}{c}{\hspace{0.5cm}(2)}&\multicolumn{1}{c}{\hspace{0.5cm}(3)}&\multicolumn{1}{c}{\hspace{0.5cm}(4)}&\multicolumn{1}{c}{\hspace{0.5cm}(5)} \\
\noalign{\vskip 0.5em}
\hline 
\noalign{\vskip 0.5em}
White Women&-0.004&&-0.033\sym{**}&-0.006&-0.009&0.005 \\
&(0.005)&&(0.013)&(0.007)\sym{\dagger}&(0.010)&(0.008)\sym{\dagger\dagger} \\
Black Women&-0.014\sym{***}&&-0.046\sym{***}&0.003&-0.016&-0.022\sym{***} \\
&(0.005)&&(0.015)&(0.008)\sym{\dagger\dagger\dagger}&(0.010)&(0.008) \\
Hispanic Women&0.004&&-0.022&-0.002&0.009&0.002 \\
&(0.005)&&(0.015)&(0.007)&(0.010)\sym{\dagger}&(0.008) \\
Black Men&-0.021\sym{***}&&-0.051\sym{***}&-0.009&-0.023\sym{**}&-0.017\sym{**} \\
&(0.005)&&(0.013)&(0.007)\sym{\dagger\dagger\dagger}&(0.010)&(0.008)\sym{\dagger\dagger} \\
Hispanic Men&-0.006&&-0.040\sym{***}&-0.015\sym{**}&0.006&-0.008 \\
&(0.005)&&(0.014)&(0.008)&(0.010)\sym{\dagger\dagger\dagger}&(0.009)\sym{\dagger} \\
&&&&&& \\
Centile Means/Medians/Std. Devs.&&&&&& \\
\hspace{0.25cm}Analytical &\multicolumn{1}{l}{\hspace{0.25cm}62/57/22} &&\multicolumn{1}{l}{\hspace{0.4cm}78/80/10} &\multicolumn{1}{l}{\hspace{0.4cm}76/86/20} &\multicolumn{1}{l}{\hspace{0.35cm}56/52/19} &\multicolumn{1}{l}{\hspace{.8cm}50/57/17}\\
\hspace{0.25cm}Interpersonal &\multicolumn{1}{l}{\hspace{0.25cm}55/51/28} &&\multicolumn{1}{l}{\hspace{0.4cm}91/92/ 9} &\multicolumn{1}{l}{\hspace{0.4cm}66/60/19} &\multicolumn{1}{l}{\hspace{0.35cm}42/39/28} &\multicolumn{1}{l}{\hspace{.8cm}49/61/20}\\
\hspace{0.25cm}R.~Cognitive &\multicolumn{1}{l}{\hspace{0.25cm}49/49/30} &&\multicolumn{1}{l}{\hspace{0.4cm}23/13/15} &\multicolumn{1}{l}{\hspace{0.4cm}55/66/25} &\multicolumn{1}{l}{\hspace{0.35cm}31/39/18} &\multicolumn{1}{l}{\hspace{.8cm}82/87/18}\\
\hspace{0.25cm}R.~Manual &\multicolumn{1}{l}{\hspace{0.25cm}21/15/16} &&\multicolumn{1}{l}{\hspace{0.4cm}13/ 9/12} &\multicolumn{1}{l}{\hspace{0.4cm}17/15/ 9} &\multicolumn{1}{l}{\hspace{0.35cm}11/ 8/ 9} &\multicolumn{1}{l}{\hspace{.8cm}40/40/10}\\
\hspace{0.25cm}Physical &\multicolumn{1}{l}{\hspace{0.25cm}16/13/11} &&\multicolumn{1}{l}{\hspace{0.4cm}15/ 9/ 9} &\multicolumn{1}{l}{\hspace{0.4cm}13/12/ 7} &\multicolumn{1}{l}{\hspace{0.35cm}14/11/10} &\multicolumn{1}{l}{\hspace{.8cm}21/19/ 9}\\
\hspace{0.25cm}Contact &\multicolumn{1}{l}{\hspace{0.25cm}73/82/22} &&\multicolumn{1}{l}{\hspace{0.4cm}69/77/15} &\multicolumn{1}{l}{\hspace{0.4cm}50/42/25} &\multicolumn{1}{l}{\hspace{0.35cm}87/89/10} &\multicolumn{1}{l}{\hspace{.8cm}76/82/18}\\
\noalign{\vskip 0.5em}
\(N\) in Ad Group      &       \multicolumn{1}{c}{\hspace{0.25cm}36,880}    &               &       \multicolumn{1}{c}{\hspace{0.5cm}3,488}      &       \multicolumn{1}{c}{\hspace{0.5cm}7,820}      &       \multicolumn{1}{c}{\hspace{0.35cm}13,372}     &       \multicolumn{1}{c}{\hspace{0.75cm}9,888}      \\
\noalign{\vskip 0.5em}
\hline\hline 
\noalign{\vskip 0.5em}
        \end{tabular}
    \begin{tablenotes}
    \item \fontsize{12pt}{13pt} \selectfont    
  \emph{Notes:} The table presents percentage point differences in callback rates between White men and the other race/ethnicity-gender groups, which include of White women, Black women, Hispanic women, Black men, and Hispanic men, for the full sample of ads as well as subgroups of ads linked to the management, business and financial operations, sales, and office and administration major occupation groups. The full sample of 36,880 observations is used to estimate two separate regression specifications. In column 1, the baseline discrimination estimates are presented, which follows equation (\ref{eq_baseline}). In columns 2-5, we present discrimination estimates from a regression specification with interaction terms between \(D_{i}\) and indicator variables for the major occupation group to which the ad is classified, which follows equation (\ref{eq_interaction}). For columns 2-5, management is the base category. The estimates shown in columns 3, 4, and 5 are, therefore, linear combinations of parameters (the coefficient on \(D_{i}\) plus the coefficient on the interaction between \(D_{i}\) and the major occupation group identifier). The computation of the test statistics for the linear combinations uses the delta method (see STATA's \texttt{lincom} command). The regression specifications include the full set of r\'esum\'e controls as well as ad fixed effects. Standard errors with clustering on job ads are in parentheses. \sym{*}, \sym{**}, and \sym{***} indicate statistical significance at the 10, five, and one percent levels, respectively. Using \sym{\dagger}, \sym{\dagger\dagger}, and \sym{\dagger\dagger\dagger}, we indicate whether the callback gaps in jobs classified as business and financial operations, sales, and office and administration occupations differ statistically from those in the management category at the 10, five, and one percent levels, respectively. In the lower portion of the table, we present the average, median, and standard deviation of the centile estimates for the detailed occupations that comprise the full sample of ads (column 1) and then separately for management, business and financial operations, sales, and office and administration occupations (columns 2--5), respectively.
    \end{tablenotes}
    \end{threeparttable}
    \end{adjustbox}

%% file: tables/AER/table5-discrim-tasks-kmeans4.tex
\begin{table}[htbp]
    \def\sym#1{\ifmmode^{#1}\else\(^{#1}\)\fi}
    \fontsize{8pt}{9pt} \selectfont
    \centering
    \caption{Discrimination Across Clusters of Jobs with Different Task Content}
    \label{table:reg-task-discrim-kmeans4}
    \begin{adjustbox}{width=\textwidth}
    \begin{threeparttable}
        \begin{tabular}{l*{4}{D{.}{.}{-1}}}
\hline\hline
\noalign{\vskip 0.5em}
&\multicolumn{1}{c}{\hspace{0.5cm}(1)}&\multicolumn{1}{c}{\hspace{0.5cm}(2)}&\multicolumn{1}{c}{\hspace{0.5cm}(3)}&\multicolumn{1}{c}{\hspace{0.5cm}(4)} \\
\noalign{\vskip 0.5em}
\hline
\noalign{\vskip 0.5em}
White Women&-0.031\sym{**}&0.003&-0.017\sym{*}&0.006 \\
&(0.013)&(0.010)\sym{\dagger\dagger}&(0.010)&(0.007)\sym{\dagger\dagger} \\
Black Women&-0.023\sym{*}&0.008&-0.012&-0.017\sym{**} \\
&(0.013)&(0.012)\sym{\dagger}&(0.011)&(0.007) \\
Hispanic Women&-0.006&-0.008&0.005&0.007 \\
&(0.012)&(0.009)&(0.011)&(0.007) \\
Black Men&-0.036\sym{***}&-0.009&-0.025\sym{**}&-0.017\sym{**} \\
&(0.012)&(0.011)\sym{\dagger}&(0.011)&(0.007) \\
Hispanic Men&-0.026\sym{**}&-0.012&-0.009&-0.000 \\
&(0.013)&(0.010)&(0.010)&(0.007)\sym{\dagger} \\
&&&& \\
Obs. in Cluster&\multicolumn{1}{c}{\hspace{0.45cm}3,080}&\multicolumn{1}{c}{\hspace{0.45cm}3,556}&\multicolumn{1}{c}{\hspace{0.275cm}11,216}&\multicolumn{1}{c}{\hspace{0.275cm}19,028} \\
\hspace{0.25cm}\% Management&0.352&0.006&0.209&0.002 \\
\hspace{0.25cm}\% Business and Finance&0.610&0.946&0.018&0.125 \\
\hspace{0.25cm}\% Sales&0.000&0.000&0.583&0.359 \\
\hspace{0.25cm}\% Office and Admin.&0.000&0.000&0.074&0.476 \\
WM Callback Rate&0.086&0.022&0.216&0.144 \\
&&&& \\
Centile Mean/Median/Std. Dev.&&&& \\
\hspace{0.25cm}Analytical&\multicolumn{1}{c}{\hspace{-.15cm}90/92/\hspace{0.14cm}4}&\multicolumn{1}{c}{\hspace{-.15cm}88/90/\hspace{0.14cm}7}&\multicolumn{1}{c}{\hspace{-.15cm}69/74/17}&\multicolumn{1}{c}{\hspace{-.15cm}48/52/16} \\
\hspace{0.25cm}Interpersonal&\multicolumn{1}{c}{\hspace{-.15cm}86/84/\hspace{0.14cm}7}&\multicolumn{1}{c}{\hspace{-.15cm}70/81/16}&\multicolumn{1}{c}{\hspace{-.15cm}70/87/29}&\multicolumn{1}{c}{\hspace{-.15cm}39/47/19} \\
\hspace{0.25cm}R. Cognitive&\multicolumn{1}{c}{\hspace{-.15cm}26/22/13}&\multicolumn{1}{c}{\hspace{-.15cm}68/62/16}&\multicolumn{1}{c}{\hspace{-.15cm}17/13/10}&\multicolumn{1}{c}{\hspace{-.15cm}69/70/21} \\
\hspace{0.25cm}R. Manual&\multicolumn{1}{c}{\hspace{-.15cm}12/10/\hspace{0.14cm}5}&\multicolumn{1}{c}{\hspace{-.15cm}17/15/\hspace{0.14cm}8}&\multicolumn{1}{c}{\hspace{-.15cm}12/\hspace{0.14cm}8/11}&\multicolumn{1}{c}{\hspace{-.15cm}28/26/17} \\
\hspace{0.25cm}Physical&\multicolumn{1}{c}{\hspace{-.15cm}11/12/\hspace{0.14cm}6}&\multicolumn{1}{c}{\hspace{-.15cm}12/12/\hspace{0.14cm}8}&\multicolumn{1}{c}{\hspace{-.15cm}16/14/10}&\multicolumn{1}{c}{\hspace{-.15cm}18/19/13} \\
\hspace{0.25cm}Contact&\multicolumn{1}{c}{\hspace{-.15cm}35/35/15}&\multicolumn{1}{c}{\hspace{-.15cm}40/42/11}&\multicolumn{1}{c}{\hspace{-.15cm}80/81/10}&\multicolumn{1}{c}{\hspace{-.15cm}81/86/18} \\
\noalign{\vskip 0.5em}
\hline\hline
\noalign{\vskip 0.5em}
        \end{tabular}
    \begin{tablenotes}
    \item \fontsize{7pt}{8pt} \selectfont  \emph{Notes:} The table presents percentage point differences in callback rates between White men and the other race/ethnicity-gender groups, which include of White women, Black women, Hispanic women, Black men, and Hispanic men, across 4 different clusters of job ads. 
In the upper portion of the table, we present these estimates using the full sample of 36,880 observations to estimate one regression specification, in which indicators for each of the 4 clusters are captured by \(G_{o(j)}\) in equation (\ref{eq_interaction}). In column 1, discrimination estimates are presented for the base group, which provides callback gaps between the different demographic groups and White men for the first cluster of ads. For the estimates shown in columns 2--4, we compute linear combinations of parameters (the coefficient on 
\(D_{i}\) plus the coefficient on the interaction between \(D_{i}\) and \(G_{o(j)}\)). The computation of these linear combinations and their standard errors rely on STATA's \texttt{lincom} command, which sums the coefficient estimates of interest and uses the delta method to compute test statistics. The regression specifications include the full set of 
r\'{e}sum\'{e} controls as well as ad fixed effects. Standard errors with clustering on job ads are in parentheses. \sym{*}, \sym{**}, and \sym{***} indicate statistical significance at the 10, five, and one percent levels, respectively. Similarly, \sym{\dagger}, \sym{\dagger\dagger}, and \sym{\dagger\dagger\dagger} indicate whether the estimated callback gap in columns 2 through 4 differ statistically from the estimates shown in column 1 at the 10, five, and one percent levels, respectively. The middle and lower portions of the table provide information on the characteristics of the ads in each cluster. The shares of observations linked to the four major occupation groups that comprise the vast majority of the audit sample are provided in the middle portion, and the averages, medians, and standard deviations of the task-intensity centiles (1--100) computed for each occupation in the ad cluster are presented in the lower portion of the table. 
The middle portion also reports the White-male callback rate in each cluster, which is the denominator for the relative callback gaps discussed in the text.
    \end{tablenotes}
    \end{threeparttable}
    \end{adjustbox}
\end{table}

%% file: tables/AER/table-credential-attenuation.tex
\begin{table}[htbp]
    \def\sym#1{\ifmmode^{#1}\else\(^{#1}\)\fi}
    \centering
    \caption{Credential Signals and Discrimination by Evaluative Discretion}
    \label{table:credential-attenuation}
    \begin{adjustbox}{max width=\textwidth}
    \begin{threeparttable}
    \begin{tabular}{l*{4}{D{.}{.}{-1}}}
    \hline\hline
    \noalign{\vskip 0.75em}
    &\multicolumn{1}{c}{Callback}&\multicolumn{1}{c}{Low}&\multicolumn{1}{c}{High}&\multicolumn{1}{c}{} \\
    &\multicolumn{1}{c}{Return}&\multicolumn{1}{c}{Discretion}&\multicolumn{1}{c}{Discretion}&\multicolumn{1}{c}{Difference} \\
    &\multicolumn{1}{c}{(ABNS)}&\multicolumn{1}{c}{Cred \(\times\) NWM}&\multicolumn{1}{c}{Cred \(\times\) NWM}&\multicolumn{1}{c}{(3) \(-\) (2)} \\
    &\multicolumn{1}{c}{(1)}&\multicolumn{1}{c}{(2)}&\multicolumn{1}{c}{(3)}&\multicolumn{1}{c}{(4)} \\
    \noalign{\vskip 0.5em}
    \hline
    \noalign{\vskip 0.75em}
    \multicolumn{5}{l}{\emph{Positive-Return Credentials}} \\
    \noalign{\vskip 0.5em}
    Social Intern&0.011\sym{***}&0.041\sym{***}&-0.001&-0.042\sym{**} \\
    &(0.004)&(0.015)&(0.013)&(0.019) \\
    \noalign{\vskip 0.25em}
    Prog.\ + Data&0.010\sym{**}&0.037\sym{**}&-0.002&-0.039 \\
    &(0.004)&(0.017)&(0.016)&(0.024) \\
    \noalign{\vskip 0.25em}
    Study Abroad&0.008\sym{***}&0.004&-0.002&-0.006 \\
    &(0.003)&(0.014)&(0.013)&(0.019) \\
    \noalign{\vskip 0.25em}
    Joint \(F\)-test, Col.\ 4 (\(p\)-value)&&&&0.048 \\
    \noalign{\vskip 0.5em}
    \hline
    \noalign{\vskip 0.5em}
    \multicolumn{5}{l}{\emph{Zero-Return Credentials (Placebo)}} \\
    \noalign{\vskip 0.5em}
    GPA Listed&0.001&-0.011&-0.006&0.005 \\
    &(0.003)&(0.014)&(0.013)&(0.018) \\
    \noalign{\vskip 0.25em}
    Quant.\ Intern&-0.000&0.025\sym{*}&0.008&-0.017 \\
    &(0.003)&(0.014)&(0.014)&(0.020) \\
    \noalign{\vskip 0.25em}
    Math Minor&-0.001&0.011&-0.005&-0.016 \\
    &(0.004)&(0.013)&(0.012)&(0.018) \\
    \noalign{\vskip 0.25em}
    Joint \(F\)-test, Col.\ 4 (\(p\)-value)&&&&0.675 \\
    \noalign{\vskip 0.5em}
    \hline
    \noalign{\vskip 0.5em}
    \(N\) (applications)&\multicolumn{1}{c}{36,880}&&\multicolumn{1}{c}{36,880}& \\
    \noalign{\vskip 0.75em}
    \hline\hline
    \noalign{\vskip 0.75em}
    \end{tabular}
    \begin{tablenotes}
    \item \fontsize{9pt}{11pt} \selectfont 
    \emph{Notes.}
    All six randomized r\'esum\'e credentials are shown, sorted by their callback return. Positive-return credentials have statistically significant positive effects on callbacks (\(p < 0.10\)). Zero-return credentials do not, and they serve as placebo tests.
    Column 1 reports the main effect of each credential on callback, a result established in \citet{arellano-bover_unbundling_2026} and re-estimated here on the same audit sample.
    Columns 2--4 are from a single regression that interacts all six credentials with the pooled non-White-male indicator (NWM, equal to one for the five race-gender groups other than White men) and a high-discretion indicator (\(E^\ast\) above median).
    Column 2 reports Credential \(\times\) NWM in low-discretion jobs. Column 3 reports the same interaction in high-discretion jobs (sum of the two-way and triple interactions). Column 4 reports the triple interaction (Column 3 minus Column 2).
    The evaluative-discretion theory predicts that credentials attenuate discrimination in low-discretion jobs (Column 2 \(> 0\)) but not in high-discretion jobs (Column 3 \(\approx 0\)), yielding a negative triple (Column 4 \(< 0\)), but only for credentials that actually predict callbacks. Placebo credentials should show no pattern.
    The positive-return triples survive a wild cluster bootstrap at the occupation level (joint \(p = 0.082\)), and the social-internship triple survives a Romano-Wolf stepdown adjustment across all six triples (\(p = 0.016\)) (Section \ref{section:credentials}).
    All regressions include job-ad fixed effects, r\'esum\'e controls, and interactions of race-gender with manual task intensity (\(\hat{M}\)) and contact (\(\hat{C}\)). Standard errors clustered at the job-ad level are in parentheses. \sym{*} \(p<0.10\), \sym{**} \(p<0.05\), \sym{***} \(p<0.01\).
    \end{tablenotes}
    \end{threeparttable}
    \end{adjustbox}
\end{table}

%% file: tables/AER/table-mechanism-contact.tex
\begin{table}[htbp]
    \def\sym#1{\ifmmode^{#1}\else\(^{#1}\)\fi}
    \centering
    \caption{Decomposing the Discretion Channel by Contact Intensity}
    \label{table:mechanism-contact}
    \begin{adjustbox}{max width=\textwidth}
    \begin{threeparttable}
    \fontsize{10pt}{12pt}\selectfont
    \begin{tabular}{l*{3}{D{.}{.}{-1}}c*{2}{D{.}{.}{-1}}}
    \hline\hline
    \noalign{\vskip 0.75em}
    &\multicolumn{3}{c}{Full Sample}&&\multicolumn{2}{c}{Subsamples} \\
\cline{2-4}\cline{6-7}
    \noalign{\vskip 0.5em}
    &\multicolumn{1}{c}{Subjective}&\multicolumn{1}{c}{Objective}&\multicolumn{1}{c}{Subjective}&&\multicolumn{1}{c}{Low}&\multicolumn{1}{c}{High} \\
    &\multicolumn{1}{c}{Noise}&\multicolumn{1}{c}{Precision}&\multicolumn{1}{c}{vs.\ Precision}&&\multicolumn{1}{c}{Contact}&\multicolumn{1}{c}{Contact} \\
    &\multicolumn{1}{c}{\(\hat{B}\)}&\multicolumn{1}{c}{\(\hat{P}\)}&\multicolumn{1}{c}{\(\hat{B} - \hat{P}\)}&&\multicolumn{1}{c}{\(\hat{B} - \hat{P}\)}&\multicolumn{1}{c}{\(\hat{B} - \hat{P}\)} \\
    &\multicolumn{1}{c}{(1)}&\multicolumn{1}{c}{(2)}&\multicolumn{1}{c}{(3)}&&\multicolumn{1}{c}{(4)}&\multicolumn{1}{c}{(5)} \\
    \noalign{\vskip 0.5em}
    \hline
    \noalign{\vskip 0.75em}
    \multicolumn{7}{l}{\emph{Panel A. Pooled Across Non-White-Male Groups}} \\
    \noalign{\vskip 0.5em}
    Non-White-male&-0.005&0.007\sym{*}&-0.012\sym{**}&&-0.001&-0.043\sym{***} \\
    &(0.005)&(0.004)&(0.006)&&(0.008)&(0.013) \\
    \noalign{\vskip 0.5em}
    \hline
    \noalign{\vskip 0.5em}
    \multicolumn{7}{l}{\emph{Panel B. Group-Specific Estimates}} \\
    \noalign{\vskip 0.5em}
    White Women&0.001&0.019\sym{***}&-0.018\sym{**}&&0.006&-0.049\sym{***} \\
    &(0.007)&(0.005)&(0.008)&&(0.010)&(0.015) \\
    \noalign{\vskip 0.25em}
    Black Women&-0.008&0.004&-0.012&&0.008&-0.036\sym{**} \\
    &(0.007)&(0.006)&(0.008)&&(0.011)&(0.017) \\
    \noalign{\vskip 0.25em}
    Hispanic Women&-0.000&0.003&-0.003&&0.001&-0.040\sym{**} \\
    &(0.007)&(0.006)&(0.008)&&(0.009)&(0.017) \\
    \noalign{\vskip 0.25em}
    Black Men&-0.007&0.009&-0.016\sym{**}&&-0.011&-0.041\sym{***} \\
    &(0.007)&(0.006)&(0.008)&&(0.010)&(0.015) \\
    \noalign{\vskip 0.25em}
    Hispanic Men&-0.010&0.003&-0.013&&-0.005&-0.046\sym{***} \\
    &(0.007)&(0.006)&(0.008)&&(0.010)&(0.017) \\
    \noalign{\vskip 0.25em}
    \noalign{\vskip 0.25em}
    \noalign{\vskip 0.5em}
    Predicted sign&\multicolumn{1}{c}{\hspace{0.25cm}\(-\)}&\multicolumn{1}{c}{\hspace{0.25cm}\(+\)}&\multicolumn{1}{c}{\hspace{0.25cm}\(-\)}&&\multicolumn{1}{c}{\hspace{0.25cm}\(-\)}&\multicolumn{1}{c}{\hspace{0.25cm}\(-\)} \\
    \noalign{\vskip 0.25em}
    Joint \(F\)-test (\(p\)-value)&0.474&0.005&0.155&&0.408&0.023 \\
    \noalign{\vskip 0.5em}
    \hline
    \noalign{\vskip 0.5em}
    \(N\) (applications)&\multicolumn{3}{c}{36,880}&&\multicolumn{1}{c}{\hspace{0.25cm}15,436}&\multicolumn{1}{c}{\hspace{0.25cm}13,092} \\
    \noalign{\vskip 0.75em}
    \hline\hline
    \noalign{\vskip 0.75em}
    \end{tabular}
    \begin{tablenotes}
    \item \fontsize{9pt}{11pt} \selectfont 
    \emph{Notes.}
    Columns 1--3 are from a single regression on the full sample. Columns 4 and 5 are from separate regressions on subsamples defined by contact intensity, low contact (bottom 40 percent) and high contact (top 40 percent). The pooled indicator in Panel A equals one for the five race-gender groups other than White men.
    \(\hat{B}\) proxies subjective evaluation noise (analytical and interpersonal task intensity). \(\hat{P}\) proxies objective evaluation precision (routine cognitive task intensity).
    The evaluative-discretion theory predicts \(\hat{B} - \hat{P} < 0\). Proposition 2 predicts this gap is larger (more negative) in high-contact jobs.
    Because the task moderators vary at the occupation level, the effective number of occupation clusters behind the column 3 estimate is roughly 11 and a wild cluster bootstrap at the occupation level yields \(p = 0.15\) (Section \ref{section:mechanism}). The same difference conditioned on occupation median wage, college share, and job zone survives that bootstrap (\(p = 0.036\), Appendix Table \ref{table:complexity-horserace}).
    All regressions include job-ad fixed effects, r\'esum\'e controls, and interactions of race-gender with manual task intensity (\(\hat{M}\)) and contact (\(\hat{C}\)). Standard errors clustered at the job-ad level are in parentheses. \sym{*} \(p<0.10\), \sym{**} \(p<0.05\), \sym{***} \(p<0.01\).
    \end{tablenotes}
    \end{threeparttable}
    \end{adjustbox}
\end{table}

%% file: tables/AER/table-screening-gaps.tex
\def\sym#1{\ifmmode^{#1}\else\(^{#1}\)\fi}
\fontsize{12pt}{14pt} \selectfont
\begin{threeparttable}
\begin{tabular}{l D{.}{.}{-1} D{.}{.}{-1} D{.}{.}{-1} D{.}{.}{-1}}
\hline\hline\noalign{\vskip 0.5em}
&\multicolumn{2}{c}{Median split}&\multicolumn{2}{c}{Continuous prevalence} \\
&\multicolumn{1}{c}{(1)}&\multicolumn{1}{c}{(2)}&\multicolumn{1}{c}{(3)}&\multicolumn{1}{c}{(4)} \\
\noalign{\vskip 0.5em}\hline\noalign{\vskip 0.5em}
Non-White-male gap, low-screening&-0.016\sym{***}&-0.016\sym{**}&& \\
&(0.006)&(0.006)&& \\\noalign{\vskip 0.25em}
Non-White-male gap, high-screening&-0.000&-0.000&& \\
&(0.005)&(0.005)&& \\\noalign{\vskip 0.25em}
Difference (high minus low)&0.015\sym{**}&0.015\sym{*}&& \\
&(0.008)&(0.009)&& \\\noalign{\vskip 0.25em}
Non-White-male (gap at mean prevalence)&&&-0.008\sym{**}&-0.008\sym{**} \\
&&&(0.004)&(0.004) \\\noalign{\vskip 0.25em}
Non-White-male \(\times\) screening prevalence (per SD)&&&0.009\sym{**}&0.009\sym{**} \\
&&&(0.004)&(0.004) \\
\noalign{\vskip 0.4em}\hline\noalign{\vskip 0.4em}
White-male callback rate, low-screening&\multicolumn{4}{c}{0.204} \\
White-male callback rate, high-screening&\multicolumn{4}{c}{0.098} \\
\noalign{\vskip 0.25em}
Non-White-male \(\times\) task-intensity controls&\multicolumn{1}{c}{No}&\multicolumn{1}{c}{Yes}&\multicolumn{1}{c}{No}&\multicolumn{1}{c}{Yes} \\
\(N\) (applications)&\multicolumn{1}{c}{36,880}&\multicolumn{1}{c}{36,880}&\multicolumn{1}{c}{36,880}&\multicolumn{1}{c}{36,880} \\
\noalign{\vskip 0.5em}\hline\hline\noalign{\vskip 0.5em}
\end{tabular}
\begin{tablenotes}
\item \fontsize{10pt}{12pt} \selectfont
\emph{Notes.} The table reports pooled callback gaps from regressions of callback on a pooled non-White-male indicator (equal to one for the five race-gender groups other than White men) and its interaction with the occupation-level prevalence of objective screening instruments. Screening prevalence is the share of a six-digit occupation's ads mentioning any instrument (Table \ref{table:screening-instruments}). Columns 1 and 2 split occupations at the median prevalence across ads, and the high-screening gap is the linear combination of the non-White-male coefficient and the interaction, computed with the delta method. Columns 3 and 4 interact the non-White-male indicator with prevalence standardized to mean zero and unit variance across ads. Columns 2 and 4 add interactions between the non-White-male indicator and standardized O*NET analytical, interpersonal, routine cognitive, and contact intensities, so the screening estimates in those columns reflect variation beyond occupation-level task content. Standard errors clustered on job ads are in parentheses. With clustering on the 175 six-digit occupations at which screening prevalence varies, the p-values for the screening estimate are 0.082, 0.131, 0.004, and 0.006 in columns 1 through 4. The effective number of occupation clusters behind the prevalence interaction is roughly 4, and a wild cluster bootstrap at the occupation level yields \(p = 0.15\) (Section \ref{section:screening}). Because White-male callback rates differ across the median split (memo rows), percentage-point gaps are not directly comparable across regimes in relative terms. Across ads, the correlation between the discretion index \(E^\ast\) and screening prevalence is -0.22 for the continuous measure and -0.21 for the median-split indicator, so the screening measure is not a relabeling of the task-based decomposition. All specifications include job-ad fixed effects and the full set of r\'esum\'e controls. \sym{*}, \sym{**}, \sym{***} denote significance at the 10, 5, and 1 percent levels.
\end{tablenotes}\end{threeparttable}

%% file: tables/AER/z-table2-corr-resumeXs.tex
\def\sym#1{\ifmmode^{#1}\else\(^{#1}\)\fi}
\fontsize{7pt}{9pt} \selectfont
\begin{adjustbox}{width=\textwidth}
\begin{threeparttable}
    \begin{tabular}{l*{6}{D{.}{.}{-1}}}
\hline\hline
\noalign{\vskip 0.5em}
&\multicolumn{6}{c}{Demographic Group} \\
\noalign{\vskip 0.25em}
\cline{2-7}
\noalign{\vskip 0.5em}
&\multicolumn{1}{c}{White}&\multicolumn{1}{c}{Black}&\multicolumn{1}{c}{Hispanic}&\multicolumn{1}{c}{White}&\multicolumn{1}{c}{Black}&\multicolumn{1}{c}{Hispanic} \\
&\multicolumn{1}{c}{Women}&\multicolumn{1}{c}{Women}&\multicolumn{1}{c}{Women}&\multicolumn{1}{c}{Men}&\multicolumn{1}{c}{Men}&\multicolumn{1}{c}{Men} \\
\noalign{\vskip 0.5em}
&\multicolumn{1}{c}{(1)}&\multicolumn{1}{c}{(2)}&\multicolumn{1}{c}{(3)}&\multicolumn{1}{c}{(4)}&\multicolumn{1}{c}{(5)}&\multicolumn{1}{c}{(6)} \\
\noalign{\vskip 0.5em}
\hline
\noalign{\vskip 0.5em}
Major-Economics&-0.0081&0.0113&0.0036&-0.0071&0.0025&-0.0021 \\
Major-Finance&-0.0085&0.0093&-0.0007&-0.0033&0.0019&0.0014 \\
Major-Marketing&0.0017&-0.0023&-0.0008&-0.0008&-0.0024&0.0047 \\
Major-Anthropology&0.0009&0.0003&0.0050&0.0041&-0.0035&-0.0068 \\
Major-Philosophy&-0.0007&0.0009&-0.0004&0.0049&-0.0054&0.0006 \\
Major-Chemistry&0.0018&-0.0032&0.0010&-0.0001&-0.0002&0.0006 \\
Major-Biology&0.0008&-0.0045&-0.0011&0.0039&0.0037&-0.0028 \\
Major-Psychology&0.0121&-0.0119&-0.0066&-0.0014&0.0034&0.0044 \\
Minor-Mathematics&0.0099&-0.0029&-0.0120&-0.0020&0.0044&0.0026 \\
Minor-History&-0.0035&0.0027&0.0071&-0.0051&-0.0007&-0.0004 \\
University-Southeast \#1&-0.0636&0.0008&0.0597&0.0237&-0.0365&0.0156 \\
University-Southeast \#2&0.0322&-0.0437&0.0215&0.0589&-0.0722&0.0024 \\
University-Southeast \#3&0.0608&-0.0682&0.0043&0.0097&0.0288&-0.0355 \\
University-West \#1&0.0058&0.0476&-0.0453&0.0157&0.0486&-0.0715 \\
University- West \#2&0.0046&0.0568&-0.0758&-0.0418&0.0179&0.0386 \\
University-Midwest \#1&-0.0397&0.0102&0.0437&-0.0671&0.0053&0.0477 \\
University- Midwest \#2&-0.0696&0.0036&0.0610&0.0411&-0.0378&0.0013 \\
University- Midwest \#3&0.0271&-0.0385&0.0085&0.0611&-0.0701&0.0110 \\
University- Northeast \#1&0.0643&-0.0747&0.0003&0.0112&0.0411&-0.0422 \\
University- Northeast \#2&0.0052&0.0382&-0.0388&0.0022&0.0613&-0.0673 \\
University- Southwest \#1&0.0077&0.0596&-0.0712&-0.0428&0.0115&0.0356 \\
University- Southwest \#2&-0.0352&0.0106&0.0303&-0.0717&0.0033&0.0627 \\
Internship-Analytical&-0.0001&-0.0071&0.0027&-0.0029&0.0085&-0.0011 \\
Internship-Interpersonal&0.0001&-0.0003&-0.0068&0.0055&0.0001&0.0014 \\
Computer-Programming and Data Analysis&0.0049&0.0008&-0.0097&0.0017&0.0087&-0.0064 \\
Computer-Programming&-0.0044&0.0010&0.0060&0.0031&-0.0035&-0.0023 \\
Computer-Data Analysis&0.0066&0.0014&-0.0003&-0.0015&-0.0036&-0.0026 \\
Computer-Basic Computer Skills&-0.0065&0.0009&0.0121&0.0032&-0.0052&-0.0046 \\
Language-Native Fluent&0.0021&-0.0039&-0.0002&-0.0024&0.0079&-0.0034 \\
Language-Native Proficient&-0.0076&0.0036&0.0058&-0.0035&-0.0010&0.0028 \\
Language-Nonnative Fluent&-0.0068&-0.0016&-0.0070&0.0037&0.0090&0.0027 \\
Language-Nonnative Proficient&0.0032&-0.0057&0.0023&-0.0046&0.0039&0.0009 \\
Volunteer Work&0.0037&0.0017&0.0025&-0.0030&0.0019&-0.0068 \\
College Job-Sales&-0.0015&0.0019&0.0013&-0.0115&0.0049&0.0049 \\
College Job-University&-0.0052&0.0031&-0.0024&0.0097&-0.0079&0.0026 \\
College Job-Restaurant&0.0067&-0.0051&0.0010&0.0019&0.0030&-0.0075 \\
GPA-3.8 and 4.0&0.0031&-0.0025&-0.0043&-0.0020&0.0043&0.0015 \\
GPA-3.4 and 3.6&-0.0036&0.0028&-0.0043&0.0048&-0.0095&0.0098 \\
GPA-3.0 and 3.2&-0.0031&-0.0033&0.0080&0.0022&-0.0032&-0.0007 \\
Cover Letter&0.0049&-0.0042&-0.0070&-0.0046&0.0038&0.0071 \\
Study Abroad Scholarship&0.0041&-0.0074&0.0034&-0.0025&0.0053&-0.0029 \\
\noalign{\vskip 0.5em}
\hline\hline
\noalign{\vskip 0.5em}
    \end{tabular}
\begin{tablenotes}
\item \fontsize{6pt}{8pt} \selectfont 
\emph{Notes:} The table presents correlation coefficients between the race/ethnicity-gender indicator variables and the other resume characteristics. The estimates are based on the full sample of 36,880 observations.
\end{tablenotes}
\end{threeparttable}
\end{adjustbox}

%% file: tables/AER/table-ad-text-mechanism-comparison.tex
\begin{table}[htbp]
    \def\sym#1{\ifmmode^{#1}\else\(^{#1}\)\fi}
    \centering
    \caption{Ad-Text Mechanism Decomposition}
    \label{table:ad-text-mechanism-comparison}
    \begin{adjustbox}{max width=\textwidth}
    \begin{threeparttable}
    \begin{tabular}{l*{3}{D{.}{.}{-1}}}
    \hline\hline
    \noalign{\vskip 0.75em}
    &\multicolumn{1}{c}{\(\hat{B}_{text}\)}&\multicolumn{1}{c}{\(\hat{P}_{text}\)}&\multicolumn{1}{c}{\(B - P\)} \\
    &\multicolumn{1}{c}{(1)}&\multicolumn{1}{c}{(2)}&\multicolumn{1}{c}{(3)} \\
    \noalign{\vskip 0.25em}
    \emph{Predicted sign}&\multicolumn{1}{c}{\(-\)}&\multicolumn{1}{c}{\(+\)}&\multicolumn{1}{c}{\(-\)} \\
    \noalign{\vskip 0.5em}
    \hline
    \noalign{\vskip 0.75em}
    \multicolumn{4}{l}{\emph{Panel A: Pooled Across Non-White-Male Groups}} \\
    \noalign{\vskip 0.5em}
    Non-White-male&-0.006\sym{**}&0.003&-0.009\sym{**} \\
    &(0.003)&(0.003)&(0.005) \\
    \noalign{\vskip 0.5em}
    \hline
    \noalign{\vskip 0.5em}
    \multicolumn{4}{l}{\emph{Panel B: Group-Specific Estimates}} \\
    \noalign{\vskip 0.5em}
    White Women&-0.006&0.001&-0.007 \\
    &(0.004)&(0.004)&(0.006) \\
    \noalign{\vskip 0.25em}
    Black Women&-0.005&0.003&-0.008 \\
    &(0.004)&(0.004)&(0.006) \\
    \noalign{\vskip 0.25em}
    Hispanic Women&-0.009\sym{**}&0.000&-0.009\sym{*} \\
    &(0.004)&(0.004)&(0.006) \\
    \noalign{\vskip 0.25em}
    Black Men&-0.005&0.007\sym{*}&-0.012\sym{**} \\
    &(0.004)&(0.004)&(0.005) \\
    \noalign{\vskip 0.25em}
    Hispanic Men&-0.007&0.005&-0.012\sym{**} \\
    &(0.004)&(0.004)&(0.006) \\
    \noalign{\vskip 0.25em}
    \noalign{\vskip 0.25em}
    \hline
    \noalign{\vskip 0.5em}
    \multicolumn{4}{l}{\emph{Diagnostics (Panel B)}} \\
    \noalign{\vskip 0.5em}
    All group coefs \(= 0\) (\(p\)-value)&0.292&0.461& \\
    \noalign{\vskip 0.25em}
    \(B_{text} = P_{text}\) for all groups (\(p\)-value)&&&0.323 \\
    \noalign{\vskip 0.25em}
    Consistent sign (of 5 groups)&5/5&4/5&5/5 \\
    \noalign{\vskip 0.75em}
    \hline\hline
    \noalign{\vskip 0.75em}
    \end{tabular}
    \begin{tablenotes}
    \item \fontsize{10pt}{12pt} \selectfont
    \emph{Notes.}
    The table reports coefficients from a single regression of callback on race-gender indicators interacted with standardized (mean zero, unit variance) ad-level text measures and r\'esum\'e controls.
    \(\hat{B}_{text}\) is the rate of analytical and interpersonal task language per 100 words and \(\hat{P}_{text}\) is the rate of routine cognitive task language, measured from the posting text with an a priori task dictionary.
    Job-ad fixed effects absorb all ad-level characteristics. Column~3 reports \(\hat{B}_{text} - \hat{P}_{text}\) with standard errors from \texttt{lincom}.
    The evaluative-discretion theory predicts \(\hat{B}_{text} - \hat{P}_{text} < 0\).
    Panel~A constrains all non-White-male groups to share a common task gradient and Panel~B reports group-specific estimates. Diagnostics refer to Panel~B. The first row tests whether all five group-specific coefficients jointly equal zero and the second whether \(B_{text} = P_{text}\) for all five groups.
    Standard errors clustered at the job-ad level are in parentheses. \sym{*} \(p<0.10\), \sym{**} \(p<0.05\), \sym{***} \(p<0.01\).
    \end{tablenotes}
    \end{threeparttable}
    \end{adjustbox}
\end{table}

%% file: tables/AER/table-robust-group-x-resume.tex
\def\sym#1{\ifmmode^{#1}\else\(^{#1}\)\fi}
\fontsize{12pt}{14pt} \selectfont
\begin{adjustbox}{max width=\textwidth}
\begin{threeparttable}
\begin{tabular}{l*{4}{D{.}{.}{-1}}}
\hline\hline
\noalign{\vskip 0.5em}
&\multicolumn{4}{c}{Major Occupation Group} \\
\cline{2-5}
\noalign{\vskip 0.5em}
&\multicolumn{1}{c}{Management}&\multicolumn{1}{c}{Business and}&\multicolumn{1}{c}{Sales}&\multicolumn{1}{c}{Office and} \\
&\multicolumn{1}{c}{}&\multicolumn{1}{c}{Financial}&\multicolumn{1}{c}{}&\multicolumn{1}{c}{Administration} \\
\noalign{\vskip 0.5em}
&\multicolumn{1}{c}{(1)}&\multicolumn{1}{c}{(2)}&\multicolumn{1}{c}{(3)}&\multicolumn{1}{c}{(4)} \\
\noalign{\vskip 0.5em}
\hline
\noalign{\vskip 0.5em}
White Women&-0.037\sym{***}&0.000&-0.013&0.010 \\
&(0.013)&(0.007)\sym{\dagger\dagger}&(0.010)&(0.008)\sym{\dagger\dagger\dagger} \\
Black Women&-0.051\sym{***}&0.009&-0.022\sym{**}&-0.015\sym{*} \\
&(0.016)&(0.008)\sym{\dagger\dagger\dagger}&(0.010)&(0.008)\sym{\dagger\dagger} \\
Hispanic Women&-0.028\sym{*}&-0.000&0.009&0.003 \\
&(0.015)&(0.007)&(0.010)\sym{\dagger\dagger}&(0.008)\sym{\dagger} \\
Black Men&-0.056\sym{***}&-0.001&-0.030\sym{***}&-0.010 \\
&(0.014)&(0.007)\sym{\dagger\dagger\dagger}&(0.010)&(0.008)\sym{\dagger\dagger\dagger} \\
Hispanic Men&-0.042\sym{***}&-0.009&0.003&-0.001 \\
&(0.015)&(0.008)\sym{\dagger\dagger}&(0.010)\sym{\dagger\dagger}&(0.009)\sym{\dagger\dagger} \\
\noalign{\vskip 0.5em}
\hline
\noalign{\vskip 0.5em}
\(p\)-value, Joint \(D\times G\)&\multicolumn{4}{c}{0.000} \\
Occ.\(\times\)Resume \(X\) interactions&\multicolumn{1}{c}{Yes}&\multicolumn{1}{c}{Yes}&\multicolumn{1}{c}{Yes}&\multicolumn{1}{c}{Yes} \\
\noalign{\vskip 0.4em}
\(N\) in Ad Group&\multicolumn{1}{c}{3,488}&\multicolumn{1}{c}{7,820}&\multicolumn{1}{c}{13,372}&\multicolumn{1}{c}{9,888} \\
\noalign{\vskip 0.5em}
\hline\hline
\noalign{\vskip 0.5em}
\end{tabular}
\begin{tablenotes}
\item \fontsize{10pt}{12pt} \selectfont
\emph{Notes:} The table re-estimates the by-occupation callback gaps of Table \ref{table:discrim-occ} (columns 2--5) after adding the full set of major-occupation-group by resume-characteristic interactions, \(G_{o(j)}\times X_i\). Each cell is the callback gap between the listed group and White men in that occupation group, expressed in percentage points. Management is the base category, so columns 2--4 are linear combinations of the own-group coefficient and its interaction with the occupation-group indicator, computed by the delta method (\texttt{lincom}). All specifications include the full set of resume controls and job-ad fixed effects, with standard errors clustered on job ads in parentheses. \sym{*}, \sym{**}, and \sym{***} denote significance at the 10, 5, and 1 percent levels. \sym{\dagger}, \sym{\dagger\dagger}, and \sym{\dagger\dagger\dagger} denote that the gap differs from the management gap at the 10, 5, and 1 percent levels. The reported \(p\)-value is for the joint test that all race/ethnicity-gender by occupation-group interactions are zero. Comparison with Table \ref{table:discrim-occ} shows that the discrimination gradient is unchanged when occupational differences in the returns to resume characteristics are absorbed.
\end{tablenotes}
\end{threeparttable}
\end{adjustbox}

%% file: tables/AER/table-judgment-demand.tex
\def\sym#1{\ifmmode^{#1}\else\(^{#1}\)\fi}
\fontsize{12pt}{14pt} \selectfont
\begin{threeparttable}
\begin{tabular}{l D{.}{.}{-1}}
\hline\hline\noalign{\vskip 0.5em}
&\multicolumn{1}{c}{Judgment demand} \\
\noalign{\vskip 0.5em}\hline\noalign{\vskip 0.5em}
\multicolumn{2}{l}{\emph{Panel A. Ad-level OLS on standardized O*NET task intensity}} \\
\noalign{\vskip 0.4em}
Analytical&0.030\sym{**} \\
&(0.015) \\\noalign{\vskip 0.25em}
Interpersonal&0.053\sym{***} \\
&(0.015) \\\noalign{\vskip 0.25em}
Routine cognitive&-0.046\sym{***} \\
&(0.011) \\\noalign{\vskip 0.25em}
Contact&-0.065\sym{***} \\
&(0.012) \\\noalign{\vskip 0.25em}
Log ad length&0.241\sym{***} \\
&(0.010) \\
\noalign{\vskip 0.4em}\hline\noalign{\vskip 0.4em}
\multicolumn{2}{l}{\emph{Panel B. Mean judgment-demand count by major occupation group}} \\
\noalign{\vskip 0.4em}
\hspace{0.25cm}Management&\multicolumn{1}{c}{1.16} \\
\hspace{0.25cm}Business and financial&\multicolumn{1}{c}{0.94} \\
\hspace{0.25cm}Sales&\multicolumn{1}{c}{0.74} \\
\hspace{0.25cm}Office and administration&\multicolumn{1}{c}{0.68} \\
\noalign{\vskip 0.4em}\hline\noalign{\vskip 0.4em}
Job ads (\(N\))&\multicolumn{1}{c}{9,220} \\
\noalign{\vskip 0.5em}\hline\hline\noalign{\vskip 0.5em}
\end{tabular}
\begin{tablenotes}
\item \fontsize{10pt}{12pt} \selectfont
\emph{Notes.} Judgment demand counts ad-text matches to a screened subset of the Leadership/decision-making and Problem-solving job-task phrase dictionaries of \citet{brencic_measuring_2026}, the attributes the model treats as subjectively assessed. Panel A regresses the standardized count on the standardized O*NET task intensities and standardized log ad length, with standard errors clustered on job ads in parentheses, so the task gradient is net of ad length. Panel B reports the mean count by major occupation group. The management group, where callback gaps are largest (Table \ref{table:discrim-occ}), shows the highest demand for judgment-based attributes. The measure captures employer demand for these attributes, not the evaluation process itself. \sym{*}, \sym{**}, \sym{***} denote significance at the 10, 5, and 1 percent levels.
\end{tablenotes}\end{threeparttable}

%% file: tables/AER/table-screening-instruments.tex
\def\sym#1{\ifmmode^{#1}\else\(^{#1}\)\fi}
\fontsize{12pt}{14pt} \selectfont
\begin{threeparttable}
\begin{tabular}{l D{.}{.}{-1}}
\hline\hline\noalign{\vskip 0.5em}
&\multicolumn{1}{c}{Objective screening instruments} \\
\noalign{\vskip 0.5em}\hline\noalign{\vskip 0.5em}
\multicolumn{2}{l}{\emph{Panel A. Share of ads mentioning each instrument (percent)}} \\
\noalign{\vskip 0.4em}
\hspace{0.25cm}Pre-employment test or assessment&\multicolumn{1}{c}{ 2.3} \\
\hspace{0.25cm}Certification or license&\multicolumn{1}{c}{23.0} \\
\hspace{0.25cm}Background or drug check&\multicolumn{1}{c}{ 7.8} \\
\hspace{0.25cm}GPA requirement&\multicolumn{1}{c}{ 0.2} \\
\hspace{0.25cm}Any of the above&\multicolumn{1}{c}{28.6} \\
\hspace{0.25cm}\emph{Memo:} experience threshold&\multicolumn{1}{c}{29.2} \\
\noalign{\vskip 0.4em}\hline\noalign{\vskip 0.4em}
\multicolumn{2}{l}{\emph{Panel B. Ad-level LPM of any-instrument on standardized O*NET task intensity}} \\
\noalign{\vskip 0.4em}
Analytical&-0.037 \\
&(0.030) \\\noalign{\vskip 0.25em}
Interpersonal&0.016 \\
&(0.040) \\\noalign{\vskip 0.25em}
Routine cognitive&0.051\sym{***} \\
&(0.018) \\\noalign{\vskip 0.25em}
Contact&0.026 \\
&(0.028) \\\noalign{\vskip 0.25em}
Log ad length&0.060\sym{***} \\
&(0.011) \\
\noalign{\vskip 0.4em}\hline\noalign{\vskip 0.4em}
\multicolumn{2}{l}{\emph{Panel C. Share of ads mentioning any instrument, by occupation group (percent)}} \\
\noalign{\vskip 0.4em}
\hspace{0.25cm}Management&\multicolumn{1}{c}{23.9} \\
\hspace{0.25cm}Business and financial&\multicolumn{1}{c}{28.7} \\
\hspace{0.25cm}Sales&\multicolumn{1}{c}{32.0} \\
\hspace{0.25cm}Office and administration&\multicolumn{1}{c}{27.7} \\
\noalign{\vskip 0.4em}\hline\noalign{\vskip 0.4em}
Job ads (\(N\))&\multicolumn{1}{c}{9,220} \\
\noalign{\vskip 0.5em}\hline\hline\noalign{\vskip 0.5em}
\end{tabular}
\begin{tablenotes}
\item \fontsize{10pt}{12pt} \selectfont
\emph{Notes.} Each instrument category counts ads whose text matches an a priori phrase dictionary naming a verifiable, standardized screening instrument, in the spirit of the job-task phrase dictionaries of \citet{brencic_measuring_2026}. Pre-employment tests and assessments name the instrument or the act of testing applicants. Bare mentions of \emph{assessment} are excluded because they commonly describe a job task. Degree requirements are excluded because the audit holds the degree constant by design. Experience thresholds are shown as a memo row and excluded from the composite because they are a posting requirement rather than a screening instrument. Panel B regresses an indicator for any instrument mention on standardized O*NET task intensities and standardized log ad length at the ad level, with standard errors clustered on 175 six-digit occupations in parentheses. Panel C reports the share of ads mentioning any instrument by major occupation group. Management, where callback gaps are concentrated (Table \ref{table:discrim-occ}), mentions these instruments less often than the other three groups. \sym{*}, \sym{**}, \sym{***} denote significance at the 10, 5, and 1 percent levels.
\end{tablenotes}\end{threeparttable}

%% file: tables/AER/table-screening-robustness.tex
\def\sym#1{\ifmmode^{#1}\else\(^{#1}\)\fi}
\fontsize{11pt}{13pt} \selectfont
\begin{threeparttable}
\begin{tabular}{l D{.}{.}{-1} D{.}{.}{-1} D{.}{.}{-1} D{.}{.}{-1} D{.}{.}{-1}}
\hline\hline\noalign{\vskip 0.5em}
&\multicolumn{1}{c}{Base rate}&\multicolumn{1}{c}{Proportional}&\multicolumn{1}{c}{Mention content}&\multicolumn{2}{c}{Excl.\ flagged ads} \\
&\multicolumn{1}{c}{OLS}&\multicolumn{1}{c}{Poisson}&\multicolumn{1}{c}{OLS}&\multicolumn{1}{c}{OLS}&\multicolumn{1}{c}{Poisson} \\
&\multicolumn{1}{c}{(1)}&\multicolumn{1}{c}{(2)}&\multicolumn{1}{c}{(3)}&\multicolumn{1}{c}{(4)}&\multicolumn{1}{c}{(5)} \\
\noalign{\vskip 0.5em}\hline\noalign{\vskip 0.5em}
Non-White-male (gap at mean of moderators)&-0.0080\sym{**}&-0.0244&-0.0079\sym{**}&-0.0077\sym{*}&-0.0166 \\
&(0.0039)&(0.0308)&(0.0039)&(0.0040)&(0.0332) \\\noalign{\vskip 0.25em}
Non-White-male \(\times\) screening prevalence (per SD)&0.0075\sym{**}&0.0682\sym{**}&&0.0109\sym{**}&0.0852\sym{**} \\
&(0.0038)&(0.0338)&&(0.0042)&(0.0383) \\\noalign{\vskip 0.25em}
Non-White-male \(\times\) White-male callback rate (leave-out, per SD)&-0.0034&&&& \\
&(0.0044)&&&& \\\noalign{\vskip 0.25em}
Non-White-male \(\times\) substantive credential prevalence (per SD)&&&0.0093\sym{**}&& \\
&&&(0.0038)&& \\\noalign{\vskip 0.25em}
Non-White-male \(\times\) non-substantive mention prevalence (per SD)&&&0.0050&& \\
&&&(0.0037)&& \\
\noalign{\vskip 0.4em}\hline\noalign{\vskip 0.4em}
Sample&\multicolumn{1}{c}{Full}&\multicolumn{1}{c}{Full}&\multicolumn{1}{c}{Full}&\multicolumn{1}{c}{Excl.\ flagged}&\multicolumn{1}{c}{Excl.\ flagged} \\
\(N\) (applications)&\multicolumn{1}{c}{36,536}&\multicolumn{1}{c}{9,960}&\multicolumn{1}{c}{36,880}&\multicolumn{1}{c}{35,256}&\multicolumn{1}{c}{9,612} \\
\noalign{\vskip 0.5em}\hline\hline\noalign{\vskip 0.5em}
\end{tabular}
\begin{tablenotes}
\item \fontsize{10pt}{12pt} \selectfont
\emph{Notes.} All columns regress callback on a pooled non-White-male indicator and the listed interactions with job-ad fixed effects and the full set of r\'esum\'e controls, with standard errors clustered on job ads in parentheses. Screening prevalence is the share of a six-digit occupation's ads mentioning any objective screening instrument (Table \ref{table:screening-instruments}), standardized across ads. Column 1 adds an interaction with the occupation's White-male callback rate computed excluding the focal job ad (86 single-White-male-application occupations drop). Columns 2 and 5 are Poisson pseudo-maximum-likelihood estimates (job ads without callback variation drop), so the minority coefficient is a log callback ratio and the interaction tests proportional rather than level compression. Column 3 splits certification and license mentions into substantive credentials (licenses obtainable after hire, professional certifications, licensed-status language) and non-substantive matches (driver's licenses, task and product false positives, firm self-description, bare section headers), each aggregated to occupation-level prevalence and standardized. Columns 4 and 5 exclude the 406 ads whose text contains language consistent with a license requirement at application, a deliberately over-inclusive flag from the advertisement-text audit. With clustering on six-digit occupations, the p-values for the headline coefficient are 0.026, 0.026, 0.000, 0.007, and 0.035 in columns 1 through 5. \sym{*}, \sym{**}, \sym{***} denote significance at the 10, 5, and 1 percent levels.
\end{tablenotes}\end{threeparttable}

%% file: tables/AER/table-pool-robustness.tex
\def\sym#1{\ifmmode^{#1}\else\(^{#1}\)\fi}
\fontsize{12pt}{14pt} \selectfont
\begin{threeparttable}
\begin{tabular}{l D{.}{.}{-1} D{.}{.}{-1}}
\hline\hline\noalign{\vskip 0.5em}
&\multicolumn{1}{c}{All five groups}&\multicolumn{1}{c}{Black and Hispanic only} \\
&\multicolumn{1}{c}{(1)}&\multicolumn{1}{c}{(2)} \\
\noalign{\vskip 0.5em}\hline\noalign{\vskip 0.5em}
\multicolumn{3}{l}{\emph{Panel A. Mechanism decomposition, \(\hat{B}-\hat{P}\) difference (Table \ref{table:mechanism-contact})}} \\
\noalign{\vskip 0.4em}
\hspace{0.25cm}Full sample&-0.0124\sym{**}&-0.0127\sym{*} \\
&(0.0063)&(0.0066) \\\noalign{\vskip 0.25em}
\hspace{0.25cm}High-contact jobs&-0.0427\sym{***}&-0.0459\sym{***} \\
&(0.0126)&(0.0136) \\
\noalign{\vskip 0.4em}\hline\noalign{\vskip 0.4em}
\multicolumn{3}{l}{\emph{Panel B. Credential triple interactions, joint \(p\)-values (Table \ref{table:credential-attenuation})}} \\
\noalign{\vskip 0.4em}
\hspace{0.25cm}Positive-return credentials&\multicolumn{1}{c}{0.048}&\multicolumn{1}{c}{0.018} \\
\hspace{0.25cm}Placebo credentials&\multicolumn{1}{c}{0.675}&\multicolumn{1}{c}{0.669} \\
\noalign{\vskip 0.4em}\hline\noalign{\vskip 0.4em}
\multicolumn{3}{l}{\emph{Panel C. Non-White-male \(\times\) screening prevalence (Table \ref{table:screening-gaps})}} \\
\noalign{\vskip 0.4em}
\hspace{0.25cm}Levels&0.0092\sym{**}&0.0087\sym{**} \\
&(0.0038)&(0.0041) \\\noalign{\vskip 0.25em}
\hspace{0.25cm}Poisson&0.0682\sym{**}&0.0607\sym{*} \\
&(0.0338)&(0.0363) \\
\noalign{\vskip 0.4em}\hline\noalign{\vskip 0.4em}
\(N\) (applications, audit sample)&\multicolumn{1}{c}{36,880}&\multicolumn{1}{c}{30,705} \\
\noalign{\vskip 0.5em}\hline\hline\noalign{\vskip 0.5em}
\end{tabular}
\begin{tablenotes}
\item \fontsize{10pt}{12pt} \selectfont
\emph{Notes.} Column 1 pools all five non-White-male groups, the paper's definition. Column 2 drops White women, so the pool contains Black and Hispanic applicants only, with White men remaining the reference group. Panel A reports the pooled \(\hat{B}-\hat{P}\) difference from the specification of Table \ref{table:mechanism-contact}, in the full sample and in the top 40 percent of contact intensity. Panel B reports joint \(p\)-values for the credential triple interactions of Table \ref{table:credential-attenuation}, separately for the three positive-return credentials and the three placebo credentials. Panel C reports the Non-White-male \(\times\) screening prevalence interaction of Table \ref{table:screening-gaps} in levels and from the Poisson specification of Table \ref{table:screening-robustness}. Standard errors clustered on job ads are in parentheses. With clustering on six-digit occupations, the full-sample Panel A \(p\)-values are 0.051 and 0.061, and the Panel C levels \(p\)-values are 0.004 and 0.004, in columns 1 and 2. \sym{*}, \sym{**}, \sym{***} denote significance at the 10, 5, and 1 percent levels.
\end{tablenotes}\end{threeparttable}

%% file: tables/AER/table-occgroup-poisson.tex
\def\sym#1{\ifmmode^{#1}\else\(^{#1}\)\fi}
\fontsize{12pt}{14pt} \selectfont
\begin{threeparttable}
\begin{tabular}{l D{.}{.}{-1} D{.}{.}{-1} D{.}{.}{-1} D{.}{.}{-1} D{.}{.}{-1}}
\hline\hline\noalign{\vskip 0.5em}
&\multicolumn{1}{c}{Full sample}&\multicolumn{1}{c}{Management}&\multicolumn{1}{c}{Bus.\ and Fin.}&\multicolumn{1}{c}{Sales}&\multicolumn{1}{c}{Office/Admin.} \\
&\multicolumn{1}{c}{(1)}&\multicolumn{1}{c}{(2)}&\multicolumn{1}{c}{(3)}&\multicolumn{1}{c}{(4)}&\multicolumn{1}{c}{(5)} \\
\noalign{\vskip 0.5em}\hline\noalign{\vskip 0.5em}
White Women&-0.031&-0.203\sym{**}&-0.035&-0.070\sym{*}&0.071 \\
&(0.032)&(0.092)&(0.160)&(0.041)&(0.079) \\\noalign{\vskip 0.25em}
Black Women&-0.095\sym{***}&-0.271\sym{**}&0.152&-0.114\sym{***}&-0.186\sym{**} \\
&(0.035)&(0.120)&(0.152)&(0.043)&(0.084) \\\noalign{\vskip 0.25em}
Hispanic Women&0.027&-0.122&0.018&0.013&0.046 \\
&(0.033)&(0.107)&(0.138)&(0.043)&(0.083) \\\noalign{\vskip 0.25em}
Black Men&-0.141\sym{***}&-0.328\sym{***}&-0.057&-0.147\sym{***}&-0.128 \\
&(0.034)&(0.099)&(0.152)&(0.043)&(0.083) \\\noalign{\vskip 0.25em}
Hispanic Men&-0.014&-0.266\sym{**}&-0.123&0.017&-0.007 \\
&(0.035)&(0.114)&(0.168)&(0.043)&(0.087) \\\noalign{\vskip 0.25em}
\noalign{\vskip 0.15em}\hline\noalign{\vskip 0.4em}
White-male callback rate&\multicolumn{1}{c}{0.149}&\multicolumn{1}{c}{0.154}&\multicolumn{1}{c}{0.043}&\multicolumn{1}{c}{0.245}&\multicolumn{1}{c}{0.101} \\
\(N\) (applications)&\multicolumn{5}{c}{9,960} \\
\noalign{\vskip 0.5em}\hline\hline\noalign{\vskip 0.5em}
\end{tabular}
\begin{tablenotes}
\item \fontsize{10pt}{12pt} \selectfont
\emph{Notes.} Poisson pseudo-maximum-likelihood (PPML) analogue of Table \ref{table:discrim-occ}. Each cell reports the log callback ratio between the row group and White men, so the estimates are proportional gaps and do not depend on group base rates. Column 1 uses the full sample. Columns 2 through 5 come from a single regression interacting group indicators with major occupation group, with management as the omitted category and the other columns computed as linear combinations. All specifications include job-ad fixed effects and the full set of r\'esum\'e controls. Job ads without callback variation drop from the Poisson sample. Standard errors clustered on job ads are in parentheses, with \(p\)-values from the normal distribution. A joint test of all group \(\times\) occupation-group interactions rejects equal proportional gaps across occupation groups (\(p =\) 0.027). \sym{*}, \sym{**}, \sym{***} denote significance at the 10, 5, and 1 percent levels.
\end{tablenotes}\end{threeparttable}

%% file: tables/AER/table5-discrim-tasks-kmeans4-beststart.tex
\begin{table}[htbp]
    \def\sym#1{\ifmmode^{#1}\else\(^{#1}\)\fi}
    \fontsize{8pt}{9pt} \selectfont
    \centering
    \caption{Discrimination Across Clusters of Jobs with Different Task Content (Best-of-500-Starts K-Means)}
    \label{table:reg-task-discrim-kmeans4-beststart}
    \begin{adjustbox}{width=\textwidth}
    \begin{threeparttable}
        \begin{tabular}{l*{4}{D{.}{.}{-1}}}
\hline\hline
\noalign{\vskip 0.5em}
&\multicolumn{1}{c}{\hspace{0.5cm}(1)}&\multicolumn{1}{c}{\hspace{0.5cm}(2)}&\multicolumn{1}{c}{\hspace{0.5cm}(3)}&\multicolumn{1}{c}{\hspace{0.5cm}(4)} \\
\noalign{\vskip 0.5em}
\hline
\noalign{\vskip 0.5em}
White Women&-0.015\sym{*}&-0.001&0.009&-0.011 \\
&(0.009)&(0.010)&(0.009)\sym{\dagger}&(0.009) \\
Black Women&-0.020\sym{**}&-0.008&-0.012&-0.017\sym{*} \\
&(0.009)&(0.011)&(0.009)&(0.009) \\
Hispanic Women&-0.001&0.011&0.004&-0.001 \\
&(0.009)&(0.011)&(0.008)&(0.009) \\
Black Men&-0.035\sym{***}&-0.015&-0.019\sym{**}&-0.006 \\
&(0.009)&(0.011)&(0.009)&(0.010)\sym{\dagger\dagger} \\
Hispanic Men&-0.018\sym{**}&0.007&-0.011&-0.002 \\
&(0.009)&(0.011)\sym{\dagger}&(0.009)&(0.010) \\
&&&& \\
Obs. in Cluster&\multicolumn{1}{c}{\hspace{0.275cm}11,360}&\multicolumn{1}{c}{\hspace{0.275cm}11,336}&\multicolumn{1}{c}{\hspace{0.45cm}8,580}&\multicolumn{1}{c}{\hspace{0.45cm}5,604} \\
\hspace{0.25cm}\% Management&0.304&0.000&0.000&0.006 \\
\hspace{0.25cm}\% Business and Finance&0.342&0.000&0.338&0.185 \\
\hspace{0.25cm}\% Sales&0.155&0.997&0.000&0.056 \\
\hspace{0.25cm}\% Office and Admin.&0.073&0.000&0.620&0.667 \\
&&&& \\
Centile Mean/Median/Std. Dev.&&&& \\
\hspace{0.25cm}Analytical&\multicolumn{1}{c}{\hspace{-.15cm}85/89/11}&\multicolumn{1}{c}{\hspace{-.15cm}52/52/13}&\multicolumn{1}{c}{\hspace{-.15cm}63/57/14}&\multicolumn{1}{c}{\hspace{-.15cm}33/34/13} \\
\hspace{0.25cm}Interpersonal&\multicolumn{1}{c}{\hspace{-.15cm}86/91/14}&\multicolumn{1}{c}{\hspace{-.15cm}33/36/18}&\multicolumn{1}{c}{\hspace{-.15cm}60/61/11}&\multicolumn{1}{c}{\hspace{-.15cm}29/27/12} \\
\hspace{0.25cm}R. Cognitive&\multicolumn{1}{c}{\hspace{-.15cm}25/21/17}&\multicolumn{1}{c}{\hspace{-.15cm}33/42/15}&\multicolumn{1}{c}{\hspace{-.15cm}82/87/\hspace{0.14cm}9}&\multicolumn{1}{c}{\hspace{-.15cm}82/80/14} \\
\hspace{0.25cm}R. Manual&\multicolumn{1}{c}{\hspace{-.15cm}13/\hspace{0.14cm}9/10}&\multicolumn{1}{c}{\hspace{-.15cm}11/\hspace{0.14cm}8/\hspace{0.14cm}8}&\multicolumn{1}{c}{\hspace{-.15cm}29/40/12}&\multicolumn{1}{c}{\hspace{-.15cm}45/45/11} \\
\hspace{0.25cm}Physical&\multicolumn{1}{c}{\hspace{-.15cm}12/\hspace{0.14cm}9/\hspace{0.14cm}8}&\multicolumn{1}{c}{\hspace{-.15cm}14/11/11}&\multicolumn{1}{c}{\hspace{-.15cm}16/19/\hspace{0.14cm}8}&\multicolumn{1}{c}{\hspace{-.15cm}31/28/13} \\
\hspace{0.25cm}Contact&\multicolumn{1}{c}{\hspace{-.15cm}58/67/23}&\multicolumn{1}{c}{\hspace{-.15cm}88/89/11}&\multicolumn{1}{c}{\hspace{-.15cm}77/82/18}&\multicolumn{1}{c}{\hspace{-.15cm}64/67/23} \\
\noalign{\vskip 0.5em}
\hline\hline
\noalign{\vskip 0.5em}
        \end{tabular}
    \begin{tablenotes}
    \item \fontsize{7pt}{8pt} \selectfont  \emph{Notes:} Identical to Table \ref{table:reg-task-discrim-kmeans4} except that the K-means partition is selected as the minimum within-cluster sum of squares across 500 random starting values rather than a single random start, and clusters are ordered by descending analytical-plus-interpersonal minus routine-cognitive profile. The table presents percentage point differences in callback rates between White men and the other race/ethnicity-gender groups across 4 clusters of job ads, from one regression in which cluster indicators are captured by \(G_{o(j)}\) in equation (\ref{eq_interaction}). Column 1 reports base-group estimates, and columns 2--4 report linear combinations computed with the delta method. Specifications include the full set of r\'{e}sum\'{e} controls and ad fixed effects. Standard errors with clustering on job ads are in parentheses. \sym{*}, \sym{**}, and \sym{***} indicate statistical significance at the 10, five, and one percent levels, respectively. \sym{\dagger}, \sym{\dagger\dagger}, and \sym{\dagger\dagger\dagger} indicate whether the estimates in columns 2, 3, and 4 differ statistically from column 1 at the 10, five, and one percent levels, respectively. The middle and lower portions report occupation-group shares and task-intensity centile statistics for each cluster.
    \end{tablenotes}
    \end{threeparttable}
    \end{adjustbox}
\end{table}

%% file: tables/AER/table-complexity-horserace.tex
\def\sym#1{\ifmmode^{#1}\else\(^{#1}\)\fi}
\fontsize{12pt}{14pt} \selectfont
\begin{threeparttable}
\begin{tabular}{l D{.}{.}{-1} D{.}{.}{-1} D{.}{.}{-1} D{.}{.}{-1} D{.}{.}{-1}}
\hline\hline\noalign{\vskip 0.5em}
&\multicolumn{1}{c}{Baseline}&\multicolumn{1}{c}{Wage}&\multicolumn{1}{c}{Education}&\multicolumn{1}{c}{Job zone}&\multicolumn{1}{c}{All} \\
&\multicolumn{1}{c}{(1)}&\multicolumn{1}{c}{(2)}&\multicolumn{1}{c}{(3)}&\multicolumn{1}{c}{(4)}&\multicolumn{1}{c}{(5)} \\
\noalign{\vskip 0.5em}\hline\noalign{\vskip 0.5em}
Non-White-male \(\times\) (\(\hat{B} - \hat{P}\)) difference&-0.012\sym{**}&-0.013\sym{*}&-0.015\sym{**}&-0.016\sym{**}&-0.014\sym{*} \\
&(0.006)&(0.007)&(0.007)&(0.007)&(0.007) \\\noalign{\vskip 0.25em}
Non-White-male \(\times\) log median wage&&-0.004&&&-0.017\sym{*} \\
&&(0.006)&&&(0.009) \\\noalign{\vskip 0.25em}
Non-White-male \(\times\) college share&&&0.003&&0.011 \\
&&&(0.006)&&(0.008) \\\noalign{\vskip 0.25em}
Non-White-male \(\times\) job zone&&&&0.011&0.012 \\
&&&&(0.007)&(0.008) \\
\noalign{\vskip 0.4em}\hline\noalign{\vskip 0.4em}
Occupation-clustered \(p\), \(\hat{B} - \hat{P}\)&\multicolumn{1}{c}{0.051}&\multicolumn{1}{c}{0.016}&\multicolumn{1}{c}{0.007}&\multicolumn{1}{c}{0.024}&\multicolumn{1}{c}{0.005} \\
Wild cluster bootstrap \(p\), \(\hat{B} - \hat{P}\)&\multicolumn{1}{c}{0.150}&\multicolumn{1}{c}{0.102}&\multicolumn{1}{c}{0.075}&\multicolumn{1}{c}{0.102}&\multicolumn{1}{c}{0.036} \\
\(N\) (applications)&\multicolumn{1}{c}{36,880}&\multicolumn{1}{c}{34,420}&\multicolumn{1}{c}{34,420}&\multicolumn{1}{c}{36,868}&\multicolumn{1}{c}{34,420} \\
\noalign{\vskip 0.5em}\hline\hline\noalign{\vskip 0.5em}
\end{tabular}
\begin{tablenotes}
\item \fontsize{10pt}{12pt} \selectfont
\emph{Notes.} The table interacts the pooled non-White-male indicator simultaneously with the standardized discretion components and with occupation-level measures of job quality. The first row reports the difference between the Non-White-male \(\times\) \(\hat{B}\) and Non-White-male \(\times\) \(\hat{P}\) interactions, computed with the delta method. Median wage and the college share of employment are computed from the 2015--2018 American Community Survey for employed workers aged 18 to 65, and the job zone is the O*NET preparation-level scale (database 22.0). Each quality measure is standardized to mean zero and unit variance across applications. Across job advertisements, the correlations between \(\hat{B} - \hat{P}\) and the three quality measures range from 0.54 to 0.61. All specifications include job-ad fixed effects, the full set of r\'esum\'e controls, and minority interactions with manual task intensity and contact. Standard errors clustered on job ads are in parentheses. The memo rows report the \(p\)-value for the \(\hat{B} - \hat{P}\) difference with clustering on the six-digit occupations and from a wild cluster bootstrap at the occupation level (Rademacher weights, 9,999 replications imposing the null). Six occupation codes lack ACS wage and education measures and one lacks a job zone, so applications in those occupations drop from the corresponding columns. \sym{*}, \sym{**}, \sym{***} denote significance at the 10, 5, and 1 percent levels.
\end{tablenotes}\end{threeparttable}

%% file: tables/AER/table-wm-baserates.tex
\begin{table}[htbp]
    \centering
    \caption{White-Male Callback Rates by Task Group}
    \label{table:wm-baserates}
    \begin{threeparttable}
    \begin{tabular}{lcc}
    \hline\hline
    \noalign{\vskip 0.75em}
    Task Group&\multicolumn{1}{c}{WM Callback Rate}&\multicolumn{1}{c}{N (WM)} \\
    \noalign{\vskip 0.5em}
    \hline
    \noalign{\vskip 0.75em}
    Overall&0.149& \\
    \noalign{\vskip 0.5em}
    \emph{Analytical}&& \\
    \hspace{0.25cm} Low&0.150&2794 \\
    \hspace{0.25cm} High&0.148&3394 \\
    \emph{Interpersonal}&& \\
    \hspace{0.25cm} Low&0.157&2873 \\
    \hspace{0.25cm} High&0.143&3315 \\
    \emph{R. Cognitive}&& \\
    \hspace{0.25cm} Low&0.222&3110 \\
    \hspace{0.25cm} High&0.076&3078 \\
    \emph{Contact}&& \\
    \hspace{0.25cm} Low&0.123&2901 \\
    \hspace{0.25cm} High&0.172&3287 \\
    \noalign{\vskip 0.5em}
    \emph{Discretion (\(E^\ast\) Median Split)}&& \\
    \hspace{0.25cm} Low Discretion&0.149&2638 \\
    \hspace{0.25cm} High Discretion&0.149&3550 \\
    \noalign{\vskip 0.75em}
    \hline\hline
    \noalign{\vskip 0.75em}
    \end{tabular}
    \begin{tablenotes}
    \item \fontsize{10pt}{12pt} \selectfont 
    \emph{Notes:} White-male callback rates by task-intensity group. ``Low'' and ``High'' refer to below- and at-or-above-median task intensity, respectively. ``High Discretion'' indicates jobs with evaluative discretion (\(E^\ast = \hat{B}/(\hat{B}+\hat{P})\)) above the sample median. These rates serve as group-specific denominators for computing relative callback gaps.
    \end{tablenotes}
    \end{threeparttable}
\end{table}